\documentclass[preprint]{sigplanconf}

\usepackage{amsthm}
\usepackage{amssymb}
\usepackage{amsmath}
\usepackage{verbatim}
\usepackage{alltt}
\usepackage{lineno}
\usepackage{trivfloat}
\usepackage{tikz}
\usepackage{pgfplots}
\usepackage[utf8]{inputenc}
\usepackage[T1]{fontenc}

\usetikzlibrary{patterns}
\usetikzlibrary{arrows}

\newfloat{plate}{tbp}{lop}

\theoremstyle{definition}
\newtheorem{definition}{Definition}

\theoremstyle{plain}
\newtheorem{lemma}{Lemma}

\theoremstyle{remark}
\newtheorem{corollary}{Corollary}

\theoremstyle{plain}
\newtheorem{theorem}{Theorem}

\makeatletter
\newcommand{\customlabel}[2]{%
\protected@write \@auxout {}{\string \newlabel {#1}{{#2}{}}}}
\makeatother

%\usetikzlibrary{shapes}
%\pgfplotsset{width=8cm}
% \lstset{language=Ruby,basicstyle=\small\ttfamily,keywordstyle=\bf,showstringspaces=false}
%\lstdefinelanguage{scala}{% 
%       morekeywords={% 
%                try, catch, throw, private, public, protected, import, package, implicit, final, package, trait, type, class, val, def, var, if, for, this, else, extends, with, while, new, abstract, object, case, match, sealed,override},
%         sensitive=t, % 
%   morecomment=[s]{/*}{*/},morecomment=[l]{\//},% 
%   mathescape,
%   escapeinside={/*\%}{*/},%
%   rangeprefix= /*< ,rangesuffix= >*/,%
%   morestring=[d]{"}% 
% }
%\lstset{language=scala,basicstyle=\small\ttfamily,keywordstyle=\bf,showstringspaces=false}

\begin{document}

\title{Cache-Aware Lock-Free Concurrent Hash Tries}

% \authorinfo{blinded}
%            {blinded}
%            {blinded}
\authorinfo{Aleksandar Prokopec, Phil Bagwell, Martin Odersky}
          {École Polytechnique Fédérale de Lausanne, Lausanne, Switzerland}
          {\{firstname\}.\{lastname\}@epfl.ch}

\maketitle

\begin{abstract}
This report describes an implementation of a non-blocking concurrent shared-memory hash trie based on single-word compare-and-swap instructions. Insert, lookup and remove operations modifying different parts of the hash trie can be run independent of each other and do not contend. Remove operations ensure that the unneeded memory is freed and that the trie is kept compact. A pseudocode for these operations is presented and a proof of correctness is given -- we show that the implementation is linearizable and lock-free. Finally, benchmarks are presented which compare concurrent hash trie operations against the corresponding operations on other concurrent data structures, showing their performance and scalability.
\end{abstract}

\section{Introduction}

Many applications access data concurrently in the presence of multiple processors. Without proper synchronization concurrent access to data may result in errors in the user program. A traditional approach to synchronization is to use mutual exclusion locks. However, locks induce a performance degradation if a thread holding a lock gets delayed (e.g. by being preempted by the operating system). All other threads competing for the lock are prevented from making progress until the lock is released. More fundamentally, mutual exclusion locks are not fault tolerant -- a failure may prevent progress indefinitely.

A lock-free concurrent object guarantees that if several threads attempt to perform an operation on the object, then at least some thread will complete the operation after a finite number of steps. Lock-free data structures are in general more robust than their lock-based counterparts \cite{concdatastructures}, as they are immune to deadlocks, and unaffected by thread delays and failures. Universal methodologies for constructing lock-free data structures exist \cite{artofmulti08}, but they serve as a theoretical foundation and are in general too inefficient to be practical -- developing efficient lock-free data structures still seems to necessitate a manual approach.

Trie is a data structure with a wide range of applications first developed by Brandais \cite{brandaistries} and Fredkin \cite{fredkintries}.
Hash array mapped tries described by Bagwell \cite{idealhash} are a specific type of tries used to store key-value pairs. The search for the key is guided by the bits in the hashcode value of the key. Each hash trie node stores references to subtries inside an array which is indexed with a bitmap. This makes hash array mapped tries both space-efficient and cache-aware. A similar approach was taken in the dynamic array data structures \cite{judyarray}.
In this paper we present and describe in detail a non-blocking implementation of the hash array mapped trie data structure.

Our contributions are the following:

\begin{enumerate}
\item We introduce a completely lock-free concurrent hash trie data structure for a shared-memory system based on single-word compare-and-swap instructions. A complete pseudocode is included in the paper.
\item Our implementation maintains the space-efficiency of sequential hash tries. Additionally, remove operations check to see if the concurrent hash trie can be contracted after a key has been removed, thus saving space and ensuring that the depth of the trie is optimal.
\item There is no stop-the-world dynamic resizing phase during which no operation can be completed -- the data structure grows with each subsequent insertion and removal. This makes our data structure suitable for real-time applications.
\item We present a proof of correctness and show that all operations are linearizable and lock-free.
\item We present benchmarks that compare performance of concurrent hash tries against other concurrent data structures. We interpret and explain the results.
\end{enumerate}

The rest of the paper is organized as follows. Section 2 describes sequential hash tries and several attempts to make their operations concurrent. It then presents case studies with concurrent hash trie operations. Section 3 presents the algorithm for concurrent hash trie operations and describes it in detail. Section 4 presents the outline of the correctness proof -- a complete proof is given in the appendix. Section 5 contains experimental results and their interpretation. Section 6 presents related work and section 7 concludes.

\section{Discussion}

Hash array mapped tries (from now on hash tries) described previously by Bagwell \cite{idealhash} are trees which have 2 types of nodes -- internal nodes and leaves. Leaves store key-value bindings. Internal nodes have a $2^W$-way branching factor. In a straightforward implementation, each internal node is a $2^W$-element array. Finding a key proceeds in the following manner. If the internal node is at the root, the initial $W$ bits of the key hashcode are used as an index in the array. If the internal node is at the level $l$, then $W$ bits of the hashcode starting from the position $W * l$ are used. This is repeated until a leaf or an empty entry is found. Insertion and removal are similar.

Such an implementation is space-inefficient -- most entries in the internal nodes are never used. To ensure space efficiency, each internal node contains a bitmap of length $2^W$. If a bit is set, then its corresponding array entry contains an element. The corresponding array index for a bit on position $i$ in the bitmap $bmp$ is calculated as $\#((i - 1) \odot bmp)$, where $\#( \cdot )$ is the bitcount and $\odot$ is a bitwise AND operation. The $W$ bits of the hashcode relevant at some level $l$ are used to compute the index $i$ as before. At all times an invariant is preserved that the bitmap bitcount is equal to the array length. Typically, $W$ is $5$ since that ensures that $32$-bit integers can be used as bitmaps. An example hash trie is shown in Fig. \ref{f-tries}A.

We want to preserve the nice properties of hash tries -- space-efficiency, cache-awareness and the expected depth of $O(\log_{2^W}(n))$, where $n$ is the number of elements stored in the trie and $2^W$ is the bitmap length. We also want to make hash tries a concurrent data structure which can be accessed by multiple threads. In doing so, we avoid locks and rely solely on CAS instructions. Furthermore, we ensure that the new data structure has the lock-freedom property. We call this new data structure a \textit{Ctrie}. In the remainder of this chapter we give examples of Ctrie operations.

Assume that we have a hash trie from Fig. \ref{f-tries}A and that a thread $T_1$ decides to insert a new key below the node \verb=C1=. One way to do this is to do a CAS on the bitmap in \verb=C1= to set the bit which corresponds to the new entry in the array, and then CAS the entry in the array to point to the new key. This requires all the arrays to have additional empty entries, leading to inefficiencies. A possible solution is to keep a pointer to the array inside \verb=C1= and do a CAS on that pointer with the updated copy of the array. The fundamental problem that still remains is that such an insertion does not happen atomically. It is possible that some other thread $T_2$ also tries to insert below \verb=C1= after its bitmap is updated, but before the array pointer is updated. Lock-freedom is not ensured if $T_2$ were to wait for $T_1$ to complete.

Another solution is for $T_1$ to create an updated version of \verb=C1= called \verb=C1'= with the updated bitmap and the new key entry in the array, and then do a CAS in the entry within the \verb=C2= array which points to \verb=C1=. The change is then done atomically. However, this approach does not work. Assume that another thread $T_2$ decides to insert a key below the node \verb=C2= at the time when $T_1$ is creating \verb=C1'=. To do this, it has to read \verb=C2= and create its updated copy \verb=C2'=. Assume that after that, $T_1$ does the CAS in \verb=C2=. The copy \verb=C2'= will not reflect the changes by $T_1$. Once $T_2$ does a CAS in the \verb=C3= array, the key inserted by $T_1$ is lost.

To solve this problem we define a new type of a node which we call an \textit{indirection node}. This node remains present within the Ctrie even if nodes above and below it change. We now show an example of a sequence of Ctrie operations.

Every Ctrie is defined by the \verb=root= reference (Fig. \ref{f-tries}B). Initially, the \verb=root= is set to a special value called \verb=null=. In this state the Ctrie corresponds to an empty set, so all lookups fail to find a value for any given key and all remove operations fail to remove a binding.

Assume that a key $k_1$ has to be inserted. First, a new node \verb=C1= of type \verb=CNode= is created, so that it contains a single key \verb=k1= according to hash trie invariants. After that, a new node \verb=I1= of type \verb=INode= is created. The node \verb=I1= has a single field $main$ (Fig. \ref{f-types}) which is initialized to \verb=C1=. A CAS instruction is then performed at the \verb=root= reference (Fig. \ref{f-tries}B), with the expected value \verb=null= and the new value \verb=I1=. If a CAS is successful, the insertion is completed and the Ctrie is in a state shown in Fig. \ref{f-tries}C. Otherwise, the insertion must be repeated.

Assume next that a key $k_2$ is inserted such that its hashcode prefix is different from that of $k_1$. By the hash trie invariants, $k_2$ should be next to $k_1$ in \verb=C1=. The thread that does the insertion first creates an updated version of \verb=C1= and then does a CAS at the \verb=I1.main= (Fig. \ref{f-tries}C) with the expected value of \verb=C1= and the updated node as the new value. Again, if the CAS is not successful, the insertion process is repeated. The Ctrie is now in the state shown in Fig. \ref{f-tries}D.

If some thread inserts a key $k_3$ with the same initial bits as $k_2$, the hash trie has to be extended with an additional level. The thread starts by creating a new node \verb=C2= of type \verb=CNode= containing both $k_2$ and $k_3$. It then creates a new node \verb=I2= and sets \verb=I2.main= to \verb=C2=. Finally, it creates a new updated version of \verb=C1= such that it points to the node \verb=I2= instead of the key $k_2$ and does a CAS at \verb=I1.main= (Fig. \ref{f-tries}D). We obtain a Ctrie shown in Fig. \ref{f-tries}E.

Assume now that a thread $T_1$ decides to remove $k_2$ from the Ctrie. It creates a new node \verb=C2'= from \verb=C2= which omits the key $k_2$. It then does a CAS on \verb=I2.main= to set it to \verb=C2'= (Fig. \ref{f-tries}E). As before, if the CAS is not successful, the operation is restarted. Otherwise, $k_2$ will no longer be in the trie -- concurrent operations will only see $k_1$ and $k_3$ in the trie, as shown in Fig. \ref{f-tries}F. However, the key $k_3$ could be moved further to the root - instead of being below the node \verb=C2=, it could be directly below the node \verb=C1=. In general, we want to ensure that the path from the root to a key is as short as possible. If we do not do this, we may end up with a lot of wasted space and an increased depth of the Ctrie.

For this reason, after having removed a key, a thread will attempt to contract the trie as much as possible. The thread $T_1$ that removed the key has to check whether or not there are less than 2 keys remaining within \verb=C2=. There is only a single key, so it can create a copy of \verb=C1= such that the key $k_3$ appears in place of the node \verb=I2= and then do a CAS at \verb=I1.main= (Fig. \ref{f-tries}F). However, this approach does not work. Assume there was another thread $T_2$ which decides to insert a new key below the node \verb=I2= just before $T_1$ does the CAS at \verb=I1.main=. The key inserted by $T_2$ is lost as soon as the CAS at \verb=I1.main= occurs.

To solve this, we relax the invariants of the data structure. We introduce a new type of a node - a tomb node. A tomb node is simply a node which holds a single key. No thread may modify a node of type \verb=INode= if it contains a tomb node. In our example, instead of directly modifying \verb=I1=, thread $T_1$ must first create a tomb node which contains the key $k_3$. It then does a CAS at \verb=I2.main= to set it to the tomb node. After having done this (Fig. \ref{f-tries}G), $T_1$ may create a contracted version of \verb=C1= and do a CAS at \verb=I1.main=, at which point we end up with a trie of an optimal size (Fig. \ref{f-tries}H). If some other thread $T_2$ attempts to modify \verb=I2= after it has been \textit{tombed}, then it must first do the same thing $T_1$ is attempting to do - move the key $k_3$ back below \verb=C2=, and only then proceed with its original operation. We call an \verb=INode= which points to a tomb node a \textit{tomb-inode}. We say that a tomb-inode in the example above is \textit{resurrected}.

If some thread decides to remove $k_1$, it proceeds as before. However, even though $k_3$ now remains the only key in \verb=C1= (Fig. \ref{f-tries}I), it does not get tombed. The reason for this is that we treat nodes directly below the root differently. If $k_3$ were next removed, the trie would end up in a state shown in Fig. \ref{f-tries}J, with the \verb=I1.main= set to \verb=null=. We call this type of an \verb=INode= a \textit{null-inode}.

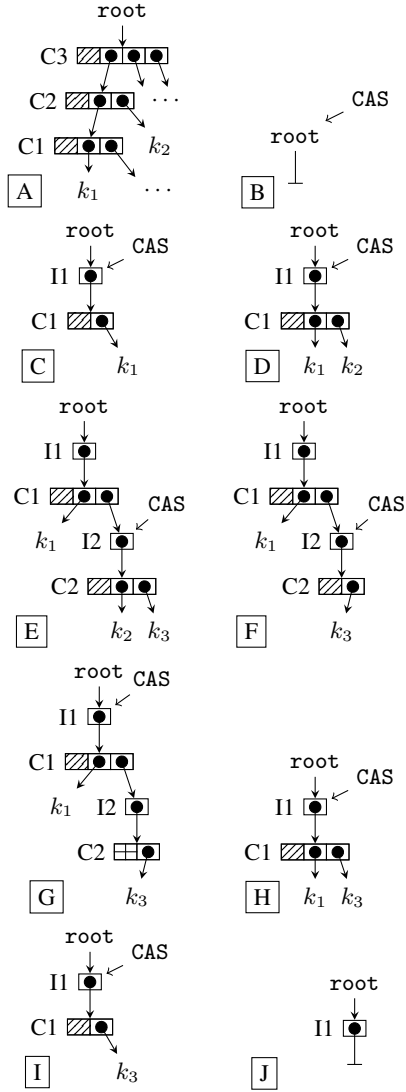
\begin{figure}[h]

\begin{center}
\begin{tabular}{ c c }

\begin{tikzpicture}
  [internal/.style={rectangle,draw},
   bmp/.style={rectangle,draw,pattern=north east lines},
   ]
  \node(root)     at ( 0.00, 0.00)                           {\verb=root=};

  \node(cn3)      at ( 0.00,-0.6)           [internal,minimum width=12mm,label=left:C3]        {};
  \node(cn3bmp)   at (-0.45,-0.6)           [bmp,minimum width=3mm]             {};
  \node(cn3arr1)  at (-0.15,-0.6)           [internal,minimum width=3mm]        {};
  \node(cn3arr2)  at ( 0.15,-0.6)           [internal,minimum width=3mm]        {};
  \node(cn3arr3)  at ( 0.45,-0.6)           [internal,minimum width=3mm]        {};
  \draw[-stealth](root)--(cn3);
  
  \node(cn2)      at (-0.30,-1.2)           [internal,minimum width=9mm,label=left:C2]        {};
  \node(cn2bmp)   at (-0.60,-1.2)           [bmp,minimum width=3mm]             {};
  \node(cn2arr1)  at (-0.30,-1.2)           [internal,minimum width=3mm]        {};
  \node(cn2arr2)  at ( 0.00,-1.2)           [internal,minimum width=3mm]        {};
  \draw[*-stealth,shorten <=-2.3pt](cn3arr1.center)--(cn2);
  
  \node(cn1)      at (-0.45,-1.8)           [internal,minimum width=9mm,label=left:C1]        {};
  \node(cn1bmp)   at (-0.75,-1.8)           [bmp,minimum width=3mm]             {};
  \node(cn1arr1)  at (-0.45,-1.8)           [internal,minimum width=3mm]        {};
  \node(cn1arr2)  at (-0.15,-1.8)           [internal,minimum width=3mm]        {};
  \draw[*-stealth,shorten <=-2.3pt](cn2arr1.center)--(cn1);
  
  \node(k1)       at (-0.45,-2.4)                                               {$k_1$};
  \draw[*-stealth,shorten <=-2.3pt](cn1arr1.center)--(k1);
  
  \node(k2)       at ( 0.50,-1.8)                                               {$k_2$};
  \draw[*-stealth,shorten <=-2.3pt](cn2arr2.center)--(k2);

  \node(more1)       at ( 0.60,-1.2)                                            {$\cdots$};
  \draw[*-stealth,shorten <=-2.3pt](cn3arr2.center)--(more1.north west);
  \draw[*-stealth,shorten <=-2.3pt](cn3arr3.center)--(more1.north);
  
  \node(more2)       at ( 0.50,-2.4)                                            {$\cdots$};
  \draw[*-stealth,shorten <=-2.3pt](cn1arr2.center)--(more2.north west);
  
  \node(lab)         at (-1.30,-2.4) [internal]     {A};
\end{tikzpicture}

&

\begin{tikzpicture}
  [internal/.style={rectangle,draw},
   leaf/.style={rectangle,draw,fill=black!20},
   ]
  \node(root)     at ( 0, 0)                           {\verb=root=};
  \node(null)     at ( 0,-0.75)                        {};
  \draw[-|](root)--(null);
  
  \node(cas)      at ( 1.0,0.5)                        {\verb=CAS=};
  \draw[->](cas)--(root);

  \node(lab)      at (-0.50,-0.75) [internal]     {B};
\end{tikzpicture}

\\

\begin{tikzpicture}
  [internal/.style={rectangle,draw},
   bmp/.style={rectangle,draw,pattern=north east lines},
   ]
  \node(root)     at ( 0.00, 0.00)                           {\verb=root=};
  \node(in1)      at ( 0.00,-0.60)          [internal,minimum width=3mm,label=left:I1]        {};
  \draw[-stealth](root)--(in1);
  
  \node(cn1)      at ( 0.00,-1.2)           [internal,minimum width=6mm,label=left:C1]        {};
  \node(cn1bmp)   at (-0.15,-1.2)           [bmp,minimum width=3mm]             {};
  \node(cn1arr1)  at ( 0.15,-1.2)           [internal,minimum width=3mm]        {};
  \draw[*-stealth,,shorten <=-2.3pt](in1.center)--(cn1);
  
  \node(k1)       at ( 0.50,-1.8)                                               {$k_1$};
  \draw[*-stealth,shorten <=-2.3pt](cn1arr1.center)--(k1);
  
  \node(cas)      at ( 0.8,-0.20)                              {\verb=CAS=};
  \draw[->, shorten >=2pt](cas)--(in1);

  \node(lab)      at (-0.70,-1.80) [internal]     {C};
\end{tikzpicture}

&

\begin{tikzpicture}
  [internal/.style={rectangle,draw},
   bmp/.style={rectangle,draw,pattern=north east lines},
   ]
  \node(root)     at ( 0.00, 0.00)                           {\verb=root=};
  \node(in1)      at ( 0.00,-0.60)          [internal,minimum width=3mm,label=left:I1]        {};
  \draw[-stealth](root)--(in1);
  
  \node(cn1)      at ( 0.00,-1.2)           [internal,minimum width=9mm,label=left:C1]        {};
  \node(cn1bmp)   at (-0.30,-1.2)           [bmp,minimum width=3mm]             {};
  \node(cn1arr1)  at ( 0.00,-1.2)           [internal,minimum width=3mm]        {};
  \node(cn1arr2)  at ( 0.30,-1.2)           [internal,minimum width=3mm]        {};
  \draw[*-stealth,shorten <=-2.3pt](in1.center)--(cn1);

  \node(k1)       at ( 0.00,-1.8)                                               {$k_1$};
  \draw[*-stealth,shorten <=-2.3pt](cn1arr1.center)--(k1);
  \node(k2)       at ( 0.50,-1.8)                                               {$k_2$};
  \draw[*-stealth,shorten <=-2.3pt](cn1arr2.center)--(k2);
  
  \node(cas)      at ( 0.8,-0.20)                              {\verb=CAS=};
  \draw[->, shorten >=2pt](cas)--(in1);

  \node(lab)      at (-0.70,-1.80) [internal]    {D};
\end{tikzpicture}

\\

\begin{tikzpicture}
  [internal/.style={rectangle,draw},
   bmp/.style={rectangle,draw,pattern=north east lines},
   ]
  \node(root)     at ( 0.00, 0.00)                           {\verb=root=};
  \node(in1)      at ( 0.00,-0.60)          [internal,minimum width=3mm,label=left:I1]        {};
  \draw[-stealth](root)--(in1);
  
  \node(cn1)      at ( 0.00,-1.2)           [internal,minimum width=9mm,label=left:C1]        {};
  \node(cn1bmp)   at (-0.30,-1.2)           [bmp,minimum width=3mm]             {};
  \node(cn1arr1)  at ( 0.00,-1.2)           [internal,minimum width=3mm]        {};
  \node(cn1arr2)  at ( 0.30,-1.2)           [internal,minimum width=3mm]        {};
  \draw[*-stealth,shorten <=-2.3pt](in1.center)--(cn1);
  
  \node(k1)       at (-0.50,-1.8)                                               {$k_1$};
  \draw[*-stealth,shorten <=-2.3pt](cn1arr1.center)--(k1);
  
  \node(in2)      at ( 0.50,-1.8)           [internal,minimum width=3mm,label=left:I2]        {};
  \draw[*-stealth,shorten <=-2.3pt](cn1arr2.center)--(in2);
  
  \node(cn2)      at ( 0.50,-2.4)           [internal,minimum width=9mm,label=left:C2]        {};
  \node(cn2bmp)   at ( 0.20,-2.4)           [bmp,minimum width=3mm]             {};
  \node(cn2arr1)  at ( 0.50,-2.4)           [internal,minimum width=3mm]        {};
  \node(cn2arr2)  at ( 0.80,-2.4)           [internal,minimum width=3mm]        {};
  \draw[*-stealth,shorten <=-2.3pt](in2.center)--(cn2);
  
  \node(k2)       at ( 0.50,-3.0)                                               {$k_2$};
  \draw[*-stealth,shorten <=-2.3pt](cn2arr1.center)--(k2);
  \node(k3)       at ( 1.00,-3.0)                                               {$k_3$};
  \draw[*-stealth,shorten <=-2.3pt](cn2arr2.center)--(k3);
  
  \node(cas)      at ( 1.1,-1.30)                              {\verb=CAS=};
  \draw[->, shorten >=2pt](cas)--(in2);
  
  \node(lab)      at (-0.70,-3.00) [internal]    {E};
\end{tikzpicture}

&

\begin{tikzpicture}
  [internal/.style={rectangle,draw},
   bmp/.style={rectangle,draw,pattern=north east lines},
   ]
  \node(root)     at ( 0.00, 0.00)                           {\verb=root=};
  \node(in1)      at ( 0.00,-0.60)          [internal,minimum width=3mm,label=left:I1]        {};
  \draw[-stealth](root)--(in1);
  
  \node(cn1)      at ( 0.00,-1.2)           [internal,minimum width=9mm,label=left:C1]        {};
  \node(cn1bmp)   at (-0.30,-1.2)           [bmp,minimum width=3mm]             {};
  \node(cn1arr1)  at ( 0.00,-1.2)           [internal,minimum width=3mm]        {};
  \node(cn1arr2)  at ( 0.30,-1.2)           [internal,minimum width=3mm]        {};
  \draw[*-stealth,shorten <=-2.3pt](in1.center)--(cn1);
  
  \node(k1)       at (-0.50,-1.8)                                               {$k_1$};
  \draw[*-stealth,shorten <=-2.3pt](cn1arr1.center)--(k1);
  
  \node(in2)      at ( 0.50,-1.8)           [internal,minimum width=3mm,label=left:I2]        {};
  \draw[*-stealth,shorten <=-2.3pt](cn1arr2.center)--(in2);
  
  \node(cn2)      at ( 0.50,-2.4)           [internal,minimum width=6mm,label=left:C2]        {};
  \node(cn2bmp)   at ( 0.35,-2.4)           [bmp,minimum width=3mm]             {};
  \node(cn2arr1)  at ( 0.65,-2.4)           [internal,minimum width=3mm]        {};
  \draw[*-stealth,shorten <=-2.3pt](in2.center)--(cn2);
  
  \node(k3)       at ( 0.50,-3.0)                                               {$k_3$};
  \draw[*-stealth,shorten <=-2.3pt](cn2arr1.center)--(k3);
  
  \node(cas)      at ( 1.1,-1.30)                              {\verb=CAS=};
  \draw[->, shorten >=2pt](cas)--(in2);

  \node(lab)      at (-0.70,-3.00) [internal]    {F};
\end{tikzpicture}

\\

\begin{tikzpicture}
  [internal/.style={rectangle,draw},
   bmp/.style={rectangle,draw,pattern=north east lines},
   tomb/.style={rectangle,draw},
   ]
  \node(root)     at ( 0.00, 0.00)                           {\verb=root=};
  \node(in1)      at ( 0.00,-0.60)          [internal,minimum width=3mm,label=left:I1]        {};
  \draw[-stealth](root)--(in1);
  
  \node(cn1)      at ( 0.00,-1.2)           [internal,minimum width=9mm,label=left:C1]        {};
  \node(cn1bmp)   at (-0.30,-1.2)           [bmp,minimum width=3mm]             {};
  \node(cn1arr1)  at ( 0.00,-1.2)           [internal,minimum width=3mm]        {};
  \node(cn1arr2)  at ( 0.30,-1.2)           [internal,minimum width=3mm]        {};
  \draw[*-stealth,shorten <=-2.3pt](in1.center)--(cn1);
  
  \node(k1)       at (-0.50,-1.8)                                               {$k_1$};
  \draw[*-stealth,shorten <=-2.3pt](cn1arr1.center)--(k1);
  
  \node(in2)      at ( 0.50,-1.8)           [internal,minimum width=3mm,label=left:I2]        {};
  \draw[*-stealth,shorten <=-2.3pt](cn1arr2.center)--(in2);
  
  \node(cn2)      at ( 0.50,-2.4)           [internal,minimum width=6mm,label=left:C2]        {};
  \node(cn2bmp)   at ( 0.35,-2.4)           [tomb,minimum width=3mm]            {};
  \draw[-](cn2bmp.south)--(cn2bmp.north);
  \draw[-](cn2bmp.east)--(cn2bmp.west);
  \node(cn2arr1)  at ( 0.65,-2.4)           [internal,minimum width=3mm]        {};
  \draw[*-stealth,shorten <=-2.3pt](in2.center)--(cn2);
  
  \node(k3)       at ( 0.50,-3.0)                                               {$k_3$};
  \draw[*-stealth,shorten <=-2.3pt](cn2arr1.center)--(k3);
  
  \node(cas)      at ( 0.7,-0.10)                              {\verb=CAS=};
  \draw[->, shorten >=2pt](cas)--(in1);

  \node(lab)      at (-0.70,-3.00) [internal]    {G};
\end{tikzpicture}

&

\begin{tikzpicture}
  [internal/.style={rectangle,draw},
   bmp/.style={rectangle,draw,pattern=north east lines},
   ]
  \node(root)     at ( 0.00, 0.00)                           {\verb=root=};
  \node(in1)      at ( 0.00,-0.60)          [internal,minimum width=3mm,label=left:I1]        {};
  \draw[-stealth](root)--(in1);
  
  \node(cn1)      at ( 0.00,-1.2)           [internal,minimum width=9mm,label=left:C1]        {};
  \node(cn1bmp)   at (-0.30,-1.2)           [bmp,minimum width=3mm]             {};
  \node(cn1arr1)  at ( 0.00,-1.2)           [internal,minimum width=3mm]        {};
  \node(cn1arr2)  at ( 0.30,-1.2)           [internal,minimum width=3mm]        {};
  \draw[*-stealth,shorten <=-2.3pt](in1.center)--(cn1);

  \node(k1)       at ( 0.00,-1.8)                                               {$k_1$};
  \draw[*-stealth,shorten <=-2.3pt](cn1arr1.center)--(k1);
  \node(k3)       at ( 0.50,-1.8)                                               {$k_3$};
  \draw[*-stealth,shorten <=-2.3pt](cn1arr2.center)--(k3);
  
  \node(cas)      at ( 0.8,-0.20)                              {\verb=CAS=};
  \draw[->, shorten >=2pt](cas)--(in1);
  
  \node(lab)      at (-0.70,-1.80) [internal]    {H};
\end{tikzpicture}

\\

\begin{tikzpicture}
  [internal/.style={rectangle,draw},
   bmp/.style={rectangle,draw,pattern=north east lines},
   ]
  \node(root)     at ( 0.00, 0.00)                           {\verb=root=};
  \node(in1)      at ( 0.00,-0.60)          [internal,minimum width=3mm,label=left:I1]        {};
  \draw[-stealth](root)--(in1);
  
  \node(cn1)      at ( 0.00,-1.2)           [internal,minimum width=6mm,label=left:C1]        {};
  \node(cn1bmp)   at (-0.15,-1.2)           [bmp,minimum width=3mm]             {};
  \node(cn1arr1)  at ( 0.15,-1.2)           [internal,minimum width=3mm]        {};
  \draw[*-stealth,shorten <=-2.3pt](in1.center)--(cn1);
  
  \node(k3)       at ( 0.50,-1.8)                                               {$k_3$};
  \draw[*-stealth,shorten <=-2.3pt](cn1arr1.center)--(k3);
  
  \node(cas)      at ( 0.8,-0.20)                              {\verb=CAS=};
  \draw[->, shorten >=2pt](cas)--(in1);

  \node(lab)      at (-0.70,-1.80) [internal]    {I};
\end{tikzpicture}

&

\begin{tikzpicture}
  [internal/.style={rectangle,draw},
   bmp/.style={rectangle,draw,pattern=north east lines},
   ]
  \node(root)     at ( 0.00, 0.00)                           {\verb=root=};
  \node(in1)      at ( 0.00,-0.60)          [internal,minimum width=3mm,label=left:I1]        {};
  \draw[-stealth](root)--(in1);
  
  \node(null)      at ( 0.00,-1.2)           [minimum width=6mm]                 {};
  \draw[*-|,shorten <=-2.3pt](in1.center)--(null);

  \node(lab)      at (-1.20,-1.20) [internal]    {J};
\end{tikzpicture}

\\

\end{tabular}
\end{center}

\caption{Hash trie and Ctrie examples}
\label{f-tries}
\end{figure}

\begin{figure}[h]
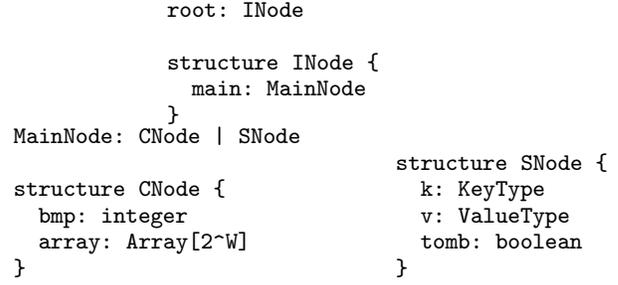

  \centering
  \begin{minipage}[b]{4 cm}
\begin{verbatim}
root: INode

structure INode {
  main: MainNode
}
\end{verbatim}
  \end{minipage}
  \begin{minipage}[b]{5 cm}
\begin{verbatim}
MainNode: CNode | SNode

structure CNode {
  bmp: integer
  array: Array[2^W]
}
\end{verbatim}
  \end{minipage}
  \begin{minipage}[b]{3 cm}
\begin{verbatim}
structure SNode {
  k: KeyType
  v: ValueType
  tomb: boolean
}
\end{verbatim}
  \end{minipage}

\caption{Types and data structures}
\label{f-types}
\end{figure}

\section{Algorithm}

We present the pseudocode of the algorithm in figures \ref{f-basic1}, \ref{f-basic2} and \ref{f-compress}. The pseudocode assumes C-like semantics of conditions in \textit{if} statements -- if the first condition in a conjunction fails, the second one is never evaluated. We use logical symbols for boolean expressions. The pseudocode also contains pattern matching constructs which are used to match a node against its type. All occurences of pattern matching can be trivially replaced with a sequence of \textit{if-then-else} statements -- we use pattern matching for conciseness. The colon (\verb=:=) in the pattern matching cases should be understood as \textit{has type}. The keyword \verb=def= denotes a procedure definition. Reads and compare-and-set instructions written in capitals are atomic -- they occur at one point in time. This is a high level pseudocode and might not be optimal in all cases -- the source code contains a more efficient implementation.

Operations start by reading the \verb=root= (lines \ref{read_topinsert}, \ref{read_topremove} and \ref{read_toplookup}). If the \verb=root= is \verb=null= then the trie is empty, so neither removal nor lookup finds a key. If the \verb=root= points to an \verb=INode= which is set to \verb=null= (as in Fig. \ref{f-tries}J), then the root is set back to just \verb=null= before repeating. In both the previous cases, an insertion will replace the \verb=root= reference with a new \verb=CNode= with the appropriate key.

If the \verb=root= is neither \verb=null= nor a null-inode then the node below the root inode is read (lines \ref{read_lookup}, \ref{read_insert} and \ref{read_remove}), and we proceed casewise. If the node pointed at by the inode is a \verb=CNode=, an appropriate entry in its array must be found. The method \verb=flagpos= computes the values \verb=flag= and \verb=pos= from the hashcode $hc$ of the key, bitmap $bmp$ of the cnode and the current level $lev$. The relevant \verb=flag= in the bitmap is defined as $(hc >> (W \cdot lev)) \odot ((1 << k) - 1)$, where $2^W$ is the length of the bitmap. The position \verb=pos= within the array is given by the expression $\#((flag - 1) \odot bmp)$, where $\#( \cdot )$ is the bitcount. The \verb=flag= is used to check if the appropriate branch is in the \verb=CNode= (lines \ref{flag_lookup}, \ref{flag_insert}, \ref{flag_remove}). If it is not, lookups and removes end, since the desired key is not in the Ctrie. An insert creates an updated copy of the current \verb=CNode= with the new key. If the branch is in the trie, \verb=pos= is used as an index into the array. If an inode is found, we repeat the operation recursively. If a key-value binding (an \verb=SNode=) is found, then a lookup compares the keys and returns the binding if they are the same. An insert operation will either replace the old binding if the keys are the same, or otherwise extend the trie below the \verb=CNode=. A remove operation compares the keys -- if they are the same it replaces the \verb=CNode= with its updated version without the key.

After a key was removed, the trie has to be contracted. A remove operation first attempts to create a tomb from the current \verb=CNode=. It first reads the node below the current inode to check if it is still a \verb=CNode=. It then calls \verb=toWeakTombed= which creates a \textit{weak tomb} from the given \verb=CNode=. A weak tomb is defined as follows. If the number of nodes below the \verb=CNode= that are not null-inodes is greater than 1, then it is the \verb=CNode= itself -- in this case we say that there is nothing to entomb. If the number of such nodes is 0, then the weak tomb is \verb=null=. Otherwise, if the single branch below the \verb=CNode= is a key-value binding or a tomb-inode (alternatively, a \textit{singleton}), the weak tomb is the tomb node with that binding. If the single branch below is another \verb=CNode=, a weak tomb is a copy of the current \verb=CNode= with the null-inodes removed.

The procedure \verb=tombCompress= continually tries to entomb the current \verb=CNode= until it finds out that there is nothing to entomb or it succeeds. The CAS in line \ref{cas_tomb} corresponds to the one in Fig. \ref{f-tries}F. If it succeeds and the weak tomb was either a \verb=null= or a tomb node, it will return \verb=true=, meaning that the parent node should be contracted. The contraction is done in \verb=contractParent=, which checks if the inode is still reachable from its parent and then contracts the \verb=CNode= below the parent - it removes the null-inode (line \ref{cas_contractnull}) or resurrects a tomb-inode into an \verb=SNode= (line \ref{cas_contractsingle}). The latter corresponds to the CAS in Fig. \ref{f-tries}G.

If any operation encounters a \verb=null= or a tomb node, it attempts to fix the Ctrie before proceeding, since the Ctrie is in a \textit{relaxed} state. A tomb node may have originated from a remove operation which will attempt to contract the tomb node at some time in the future. Rather than waiting for that remove operation to do its work, the current operation should do the work of contracting the tomb itself, so it will invoke the \verb=clean= operation on the parent inode. The \verb=clean= operation will attempt to exchange the \verb=CNode= below the parent inode with its compression. A \verb=CNode= compression is defined as follows -- if the \verb=CNode= has a single tomb node directly beneath, then it is that tomb node. Otherwise, the compression is the copy of the \verb=CNode= without the null-inodes (this is what the \verb=filtered= call in the \verb=toCompressed= procedure does) and with all the tomb-inodes resurrected to regular key nodes (this is what the \verb=map= and \verb=resurrect= calls do). Going back to our previous example, if in Fig. \ref{f-tries}G some other thread were to attempt to write to \verb=I2=, it would first do a \verb=clean= operation on the parent \verb=I1= of \verb=I2= -- it would contract the trie in the same way as the removal would have. After having fixed the Ctrie, the operation is repeated from the start.

\setlength\linenumbersep{2pt}

\begin{figure}[h]

  \centering
  \begin{minipage}[b]{6 cm}
\begin{alltt}
{\scriptsize
{\internallinenumbers{def insert(k, v)
  r = READ(root) {\customlabel{read_topinsert}{\LineNumber}}
  if r = null \(\vee\) isNullInode(r) \{ {\customlabel{check_topinsert}{\LineNumber}}
    scn = CNode(SNode(k, v, \(\bot\)))
    nr = INode(scn)
    if \(\lnot\)CAS(root, r, nr) insert(k, v) {\customlabel{cas_topinsert}{\LineNumber}}
  \} else if \(\lnot\)iinsert(r, k, v, 0, null)
    insert(k, v)

def remove(k)
  r = READ(root) {\customlabel{read_topremove}{\LineNumber}}
  if r = null return NOTFOUND
  else if isNullInode(r) \{ {\customlabel{check_topremove}{\LineNumber}}
    CAS(root, r, null) {\customlabel{cas_topremove}{\LineNumber}}
    return remove(k)
  \} else \{
    res = iremove(r, k, 0, null)
    if res \(\neq\) RESTART return res
    else remove(k)
  \}

def lookup(k)
  r = READ(root) {\customlabel{read_toplookup}{\LineNumber}}
  if r = null return NOTFOUND
  else if isNullInode(r) \{ {\customlabel{check_toplookup}{\LineNumber}}
    CAS(root, r, null) {\customlabel{cas_toplookup}{\LineNumber}}
    return lookup(k)
  \} else \{
    res = ilookup(r, k, 0, null)
    if res \(\neq\) RESTART return res
    else return lookup(k)
  \}

def ilookup(i, k, lev, parent)
  READ(i.main) match \{ {\customlabel{read_lookup}{\LineNumber}}
  case cn: CNode =>
    flag, pos = flagpos(k.hc, lev, cn.bmp)
    if cn.bmp \(\odot\) flag = 0 return NOTFOUND {\customlabel{flag_lookup}{\LineNumber}}
    cn.array(pos) match \{ {\customlabel{read_inlookup}{\LineNumber}}
    case sin: INode =>
      return ilookup(sin, k, lev + W, i)
    case sn: SNode \(\wedge\) \(\lnot\)sn.tomb =>
      if sn.k = k return sn.v {\customlabel{check_lookup}{\LineNumber}}
      else return NOTFOUND {\customlabel{notcheck_lookup}{\LineNumber}}
    \}
  case (sn: SNode \(\wedge\) sn.tomb) \(\vee\) null =>
    if parent \(\neq\) null clean(parent)
    return RESTART
  \}
}}}
\end{alltt}
  \end{minipage}

\caption{Basic operations I}
\label{f-basic1}
\end{figure}

\begin{figure}[h]

  \centering
  \begin{minipage}[t]{6 cm}
\begin{alltt}
{\scriptsize
{\internallinenumbers{def iinsert(i, k, v, lev, parent)
  READ(i.main) match \{ {\customlabel{read_insert}{\LineNumber}}
  case cn: CNode =>
    flag, pos = flagpos(k.hc, lev, cn.bmp)
    if cn.bmp \(\odot\) flag = 0 \{ {\customlabel{flag_insert}{\LineNumber}}
      nsn = SNode(k, v, \(\bot\))
      narr = cn.array.inserted(pos, nsn)
      ncn = CNode(narr, bmp | flag)
      return CAS(i.main, cn, ncn) {\customlabel{cas_insertcn}{\LineNumber}}
    \}
    cn.array(pos) match \{ {\customlabel{read_ininsert}{\LineNumber}}
    case sin: INode =>
      return iinsert(sin, k, v, lev + W, i)
    case sn: SNode \(\wedge\) \(\lnot\)sn.tomb =>
      nsn = SNode(k, v, \(\bot\))
      if sn.k = k \{
        ncn = cn.updated(pos, nsn)
        return CAS(i.main, cn, ncn) {\customlabel{cas_insertsn}{\LineNumber}}
      \} else \{
        nin = INode(CNode(sn, nsn, lev + W))
        ncn = cn.updated(pos, nin)
        return CAS(i.main, cn, ncn) {\customlabel{cas_insertin}{\LineNumber}}
      \}
    \}
  case (sn: SNode \(\wedge\) sn.tomb) \(\vee\) null =>
    if parent \(\neq\) null clean(parent)
    return \(\bot\)
  \}
  
def iremove(i, k, lev, parent)
  READ(i.main) match \{ {\customlabel{read_remove}{\LineNumber}}
  case cn: CNode =>
    flag, pos = flagpos(k.hc, lev, cn.bmp)
    if cn.bmp \(\odot\) flag = 0 return NOTFOUND {\customlabel{flag_remove}{\LineNumber}}
    res = cn.array(pos) match \{ {\customlabel{read_inremove}{\LineNumber}}
    case sin: INode =>
      return iremove(sin, k, lev + W, i)
    case sn: SNode \(\wedge\) \(\lnot\)sn.tomb =>
      if sn.k = k \{
        narr = cn.array.removed(pos)
        ncn = CNode(narr, bmp ^ flag)
        if cn.array.length = 1 ncn = null
        if CAS(i.main, cn, ncn) return sn.v {\customlabel{cas_remove}{\LineNumber}}
        else return RESTART
      \} else return NOTFOUND {\customlabel{notcheck_remove}{\LineNumber}}
    \}
    if res = NOTFOUND \(\vee\) res = RESTART return res
    if parent ne null \(\wedge\) tombCompress()
      contractParent(parent, in, k.hc, lev - W)
  case (sn: SNode \(\wedge\) sn.tomb) \(\vee\) null =>
    if parent \(\neq\) null clean(parent)
    return RESTART
  \}
}}}
\end{alltt}
  \end{minipage}

\caption{Basic operations II}
\label{f-basic2}
\end{figure}

\begin{figure}[h]

  \centering
  \begin{minipage}[b]{6 cm}
\begin{alltt}
{\scriptsize
{\internallinenumbers{def toCompressed(cn)
  num = bit#(cn.bmp)
  if num = 1 \(\wedge\) isTombInode(cn.array(0))
    return cn.array(0).main
  ncn = cn.filtered(_.main \(\neq\) null)
  rarr = ncn.array.map(resurrect(_))
  if bit#(ncn.bmp) > 0
    return CNode(rarr, ncn.bmp)
  else return null

def toWeakTombed(cn)
  farr = cn.array.filtered(_.main \(\neq\) null)
  nbmp = cn.bmp.filtered(_.main \(\neq\) null)
  if farr.length > 1 return cn
  if farr.length = 1
    if isSingleton(farr(0))
      return farr(0).tombed
    else CNode(farr, nbmp)
  return null

def clean(i)
  m = READ(i.main) {\customlabel{read_clean}{\LineNumber}}
  if m \(\in\) CNode
    CAS(i.main, m, toCompressed(m)) {\customlabel{cas_clean}{\LineNumber}}

def tombCompress(i)
  m = READ(i.main) {\customlabel{read_tomb}{\LineNumber}}
  if m \(\not\in\) CNode return \(\bot\)
  mwt = toWeakTombed(m)
  if m = mwt return \(\bot\)
  if CAS(i.main, m, mwt) mwt match \{ {\customlabel{cas_tomb}{\LineNumber}}
    case null \(\vee\) (sn: SNode \(\wedge\) sn.tomb) =>
      return \(\top\)
    case _ => return \(\bot\)
  \} else return tombCompress()

def contractParent(parent, i, hc, lev)
  m, pm = READ(i.main), READ(parent.main) {\customlabel{read_contract}{\LineNumber}}
  pm match \{
  case cn: CNode =>
    flag, pos = flagpos(k.hc, lev, cn.bmp)
    if bmp \(\odot\) flag = 0 return
    sub = cn.array(pos) {\customlabel{read_incontract}{\LineNumber}}
    if sub \(\neq\) i return
    if m = null \{
      ncn = cn.removed(pos)
      if \(\lnot\)CAS(parent.main, cn, ncn) {\customlabel{cas_contractnull}{\LineNumber}}
        contractParent(parent, i, hc, lev)
    \} else if isSingleton(m) \{
      ncn = cn.updated(pos, m.untombed)
      if \(\lnot\)CAS(parent.main, cn, ncn) {\customlabel{cas_contractsingle}{\LineNumber}}
        contractParent(parent, i, hc, lev)
    \}
  case _ => return
  \}
}}}
\end{alltt}
  \end{minipage}

\caption{Compression operations}
\label{f-compress}
\end{figure}

\section{Correctness}

As illustrated by the examples in the previous section, designing a correct lock-free algorithm is not straightforward. One of the reasons for this is that all possible interleavings of steps of different threads executing the operations have to be considered. For brevity, this section gives only the outline of the correctness proof -- the complete proof is given in the appendix. There are three main criteria for correctness. \textit{Safety} means that the Ctrie corresponds to some abstract set of keys and that all operations change the corresponding abstract set of keys consistently. An operation is \textit{linearizable} if any external observer can only observe the operation as if it took place instantaneously at some point between its invocation and completion \cite{artofmulti08} \cite{maurice-linear}. \textit{Lock-freedom} means that if some number of threads execute operations concurrently, then after a finite number of steps some operation must complete \cite{artofmulti08}.

We assume that the Ctrie has a branching factor $2^W$. Each node in the Ctrie is identified by its type, level in the Ctrie $l$ and the hashcode prefix $p$. The hashcode prefix is the sequence of branch indices that have to be followed from the root in order to reach the node. For a cnode $cn_{l,p}$ and a key $k$ with the hashcode $h = r_0 \cdot r_1 \cdots r_n$, we denote $cn.sub(k)$ as the branch with the index $r_l$ or $null$ if such a branch does not exist. We define the following invariants:

\small
\begin{description}
\item[INV1] For every inode $in_{l,p}$, $in_{l,p}.main$ is a cnode $cn_{l,p}$, a tombed snode $sn\dagger$ or $null$.
\item[INV2] For every cnode the length of the array is equal to the bitcount in the bitmap.
\item[INV3] If a flag $i$ in the bitmap of $cn_{l,p}$ is set, then corresponding array entry contains an inode $in_{l + W, p \cdot r}$ or an snode.
\item[INV4] If an entry in the array in $cn_{l,p}$ contains an snode $sn$, then $p$ is the prefix of the hashcode $sn.k$.
\item[INV5] If an inode $in_{l,p}$ contains an snode $sn$, then $p$ is the prefix of the hashcode $sn.k$.
\end{description}
\normalsize

We say that the Ctrie is \textit{valid} if and only if the invariants hold.
The relation $hasKey(node, x)$ holds if and only if the key $x$ is within an snode reachable from $node$.
A valid Ctrie is \textit{consistent} with an abstract set $\mathbb{A}$ if and only if $\forall k \in \mathbb{A}$ the relation $hasKey(root, k)$ holds and $\forall k \notin \mathbb{A}$ it does not.
A Ctrie lookup is \textit{consistent} with the abstract set semantics if and only if it finds the keys in the abstract set and does not find other keys. A Ctrie insertion or removal is \textit{consistent} with the abstract set semantics if and only if it produces a new Ctrie consistent with a new abstract set with or without the given key, respectively.

\begin{lemma}
If an inode $in$ is either a null-inode or a tomb-inode at some time $t_0$ then $\forall t > t_0$ $in.main$ is not written. We refer to such inodes as \textit{nonlive}.
\end{lemma}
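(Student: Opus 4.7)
My plan is to prove the lemma by exhaustive case analysis on the set of program points that can write to $in.main$ for any inode $in$. Once $in.main$ has value $null$ or a tombed snode at time $t_0$, I want to show that every subsequent attempt to overwrite it must fail.

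The first step is to identify every instruction in the pseudocode of figures \ref{f-basic1}, \ref{f-basic2} and \ref{f-compress} that can modify the $main$ field of any \texttt{INode}. The only site where an \texttt{INode} is created is at line \ref{cas_topinsert} (and the analogous code paths in \texttt{iinsert}), where $main$ is initialised at construction, before the inode is even visible to other threads. After that, every modification occurs through one of the following CAS instructions on $i.main$ for some inode $i$: lines \ref{cas_insertcn}, \ref{cas_insertsn}, \ref{cas_insertin} in \texttt{iinsert}; line \ref{cas_remove} in \texttt{iremove}; line \ref{cas_clean} in \texttt{clean}; line \ref{cas_tomb} in \texttt{tombCompress}; and lines \ref{cas_contractnull}, \ref{cas_contractsingle} in \texttt{contractParent} (where the written inode is the \texttt{parent}). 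I would enumerate and tabulate these once so that the proof has a complete list.

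The second step is to observe that in every one of these CAS instructions, the \emph{expected} value is bound by a read at some earlier line and is then syntactically constrained, either by a pattern match \texttt{case cn: CNode} (lines \ref{cas_insertcn}--\ref{cas_insertin}, \ref{cas_remove}, and implicitly the parent in \texttt{contractParent} through the \texttt{case cn: CNode} branch) or by an explicit guard \texttt{if m $\in$ CNode} / \texttt{if m $\notin$ CNode return $\bot$} before the CAS is ever issued (lines \ref{cas_clean}, \ref{cas_tomb}). Hence the expected argument of every CAS on $i.main$ is of type \texttt{CNode}; it is never $null$ and never an snode (tombed or otherwise). Since a CAS succeeds only when the current value of the cell equals the expected value, a successful write to $i.main$ requires $i.main$ to hold a \texttt{CNode} immediately before the write.

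Combining the two steps, if at time $t_0$ the inode $in$ satisfies that $in.main$ is $null$ or a tombed snode (i.e.\ $in$ is a null-inode or a tomb-inode), then for any $t > t_0$ a pending CAS on $in.main$ could succeed only if $in.main$ has been reset to a \texttt{CNode} between $t_0$ and $t$; but such a reset is itself a write to $in.main$, so the same argument applies and we get an infinite descent to some earliest successful write after $t_0$, which by the previous paragraph would have required $in.main$ to already be a \texttt{CNode} at that earlier moment -- contradicting the hypothesis that $in$ was nonlive at $t_0$ with no intermediate write. Formally this is a straightforward induction on the number of successful writes after $t_0$. The one subtlety, which I expect to be the only real obstacle, is to argue rigorously that the pattern-match guards in \texttt{iinsert}, \texttt{iremove}, \texttt{clean} and \texttt{tombCompress} indeed establish that the CAS's expected argument is a \texttt{CNode} at the moment the CAS is issued (as opposed to at the moment of the preceding read); but this is immediate because the CAS carries the locally-bound variable \texttt{cn} / \texttt{m} and the match is performed against that local variable's runtime type, which is fixed once bound.
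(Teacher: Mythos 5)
Your proposal is correct and follows essentially the same route as the paper's own proof: enumerate the writes to $in.main$, note that every such write is a CAS whose expected value is guarded to be a \texttt{CNode}, and conclude from the semantics of CAS that no write can succeed once $in.main$ holds $null$ or a tombed snode. The paper states this in three sentences; your added induction on the first successful write after $t_0$ is just a more explicit rendering of the same argument.
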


\begin{lemma}
Cnodes and snodes are immutable -- once created, they do not change the value of their fields.
\end{lemma}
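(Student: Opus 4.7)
The plan is to argue by direct inspection of the pseudocode in Figures~\ref{f-basic1}, \ref{f-basic2}, and \ref{f-compress}. The claim is a purely syntactic property of the algorithm: no line ever mutates a field of an already-constructed \texttt{CNode} or \texttt{SNode}. I would therefore enumerate all program points that can perform a write to shared memory and check, in each case, what is being written.

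First, I would establish the convention (made explicit earlier in the paper) that the only atomic write primitives are \texttt{CAS} and assignments marked in capitals, and that all other ``updates'' such as \texttt{cn.updated(pos, \ldots)}, \texttt{cn.array.inserted(pos, \ldots)}, \texttt{cn.array.removed(pos)}, \texttt{cn.filtered(\ldots)}, and \texttt{cn.array.map(\ldots)} are functional combinators that allocate and return a fresh \texttt{CNode} (or array) without touching the receiver. This is the conceptual heart of the lemma and the only place where a reader might quibble, so I would state it as an explicit convention of the pseudocode.

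Next I would list every CAS and assignment in the algorithm and classify its target: the CAS instructions in lines \ref{cas_topinsert}, \ref{cas_topremove}, and \ref{cas_toplookup} all target the field \texttt{root}, which holds an \texttt{INode} reference; the CAS instructions in lines \ref{cas_insertcn}, \ref{cas_insertsn}, \ref{cas_insertin}, \ref{cas_remove}, \ref{cas_clean}, \ref{cas_tomb}, \ref{cas_contractnull}, and \ref{cas_contractsingle} all target a field of the form \texttt{i.main} or \texttt{parent.main} of some \texttt{INode}. No line targets any of \texttt{bmp}, \texttt{array}, \texttt{k}, \texttt{v}, or \texttt{tomb}. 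Since \texttt{MainNode} is declared in Figure~\ref{f-types} to be either a \texttt{CNode} or an \texttt{SNode}, and since \texttt{INode}'s only field is \texttt{main}, the only writes in the whole program touch either \texttt{root} or \texttt{INode.main}.

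Finally, I would conclude that once a \texttt{CNode} or \texttt{SNode} $n$ is created by a constructor invocation, all of its fields are initialized in that single constructor call and no subsequent write in the algorithm can target them, hence their values are fixed for the remainder of $n$'s lifetime. The main (and essentially only) obstacle is convincing the reader that the functional combinators really do allocate fresh nodes; once that convention is pinned down, the lemma follows by a routine line-by-line audit.
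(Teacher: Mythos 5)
Your proposal is correct and matches the paper's proof, which likewise argues by direct inspection that $k$, $v$, $tomb$, $bmp$ and $array$ are never assigned after an snode or cnode is created. You simply carry out the audit in more detail (enumerating the CAS targets and pinning down the functional-combinator convention), whereas the paper dismisses it as trivial inspection.
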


\begin{lemma}
Invariants INV1-3 always hold.
\end{lemma}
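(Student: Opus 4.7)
The plan is to proceed by induction on the sequence of successful CAS operations, since these are the only atomic actions that mutate shared state. Two preceding lemmas substantially reduce the work: by Lemma 2, every existing cnode and snode is immutable, so previously verified nodes automatically continue to satisfy the invariants; and by Lemma 1, nonlive inodes are never written again, so the only candidate mutation sites are \verb=root= and the \verb=main= field of a currently live inode. It therefore suffices, at each inductive step, to check that the freshly constructed node installed by the CAS satisfies INV1--INV3. The base case is the initial configuration $root = null$, where there are no cnodes or inodes and the invariants hold vacuously.

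For the inductive step I would first dispatch the straightforward cases. The root-level CASes (lines \ref{cas_topinsert}, \ref{cas_topremove}, \ref{cas_toplookup}) either write a fresh \verb=INode= whose \verb=main= is the one-key cnode $CNode(SNode(k,v,\bot))$, whose single bitmap flag matches the leading $W$ bits of $hc(k)$ and whose one-element array holds that snode (INV1--INV3 immediate), or they install $null$ (admissible under INV1 for the enclosing inode). For the insertion CASes at lines \ref{cas_insertcn}, \ref{cas_insertsn}, \ref{cas_insertin}, the new cnode is produced by $inserted$ or $updated$; each adjusts the bitmap and the array at a single position in lockstep, so the bitcount/length identity INV2 is preserved, and the entry placed at \verb=pos= is either an snode, the original snode, or a fresh inode of the form $INode(CNode(sn, nsn, lev+W))$ whose recorded level is $lev+W$ and whose prefix is the parent's extended by the flag index, as INV3 demands. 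The removal CAS at line \ref{cas_remove} either shrinks the cnode via $removed$ (toggling one bit and dropping the matching slot, so INV2 still holds) or sets \verb=ncn = null=, again admissible under INV1.

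The main obstacle, and the subtle part of the proof, is the compression machinery: the CASes at lines \ref{cas_clean}, \ref{cas_tomb}, \ref{cas_contractnull}, \ref{cas_contractsingle}. Here several bitmap/array positions change simultaneously through \verb=toCompressed= or \verb=toWeakTombed=, or the parent cnode is rewritten in \verb=contractParent=. My argument would exploit that in each routine the new cnode is built by taking \emph{parallel} sub-sequences of the old array and old bitmap, retaining a position iff the corresponding entry is not a null-inode, and (in \verb=toCompressed=) substituting each tomb-inode by the snode it carries; because the filtering is pointwise and applied in parallel, INV2 is preserved by construction, and INV3 is preserved because only nonlive inodes are removed or replaced by their embedded snodes, whose hashcode prefix coincides with the prefix of the inode they were sitting under. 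The degenerate outputs, $null$ (nothing remained) and a tombed snode (single singleton remained), match precisely the other two alternatives INV1 permits for the enclosing inode. Finally, for \verb=contractParent= the guards $sub \neq i$ at line \ref{read_incontract} and $m = null$ or $isSingleton(m)$, combined with Lemma 1 applied to $i$ between the reads at line \ref{read_contract} and the CAS, ensure that the snapshot $m$ used to build $ncn$ still correctly describes the slot being overwritten in the parent, closing the induction.
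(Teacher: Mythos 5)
Your proposal is correct and follows essentially the same route as the paper: the invariants are verified at each node-construction/CAS site and then preserved forever by the immutability of cnodes and snodes (the paper's Lemma 2), the paper merely compressing your case analysis into the phrase ``trivial inspection of the pseudocode.'' Your more explicit treatment of the compression CASes (parallel filtering of bitmap and array) is a welcome elaboration of what the paper leaves implicit.
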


\begin{lemma}
If a CAS instruction makes an inode $in$ unreachable from its parent at some time $t_0$, then $in$ is nonlive at $t_0$.
\end{lemma}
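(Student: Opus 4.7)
The plan is to proceed by case analysis on which CAS in the pseudocode can cause an inode $in$ to become unreachable from its parent, and to show in each case that $in$ was already observed to be nonlive at some earlier read within the same procedure. Since Lemma~1 states that nonliveness is monotonic in time, this observation will carry forward to the CAS instant $t_0$. Because the Ctrie has a tree shape (every non-root node is referenced from exactly one parent cnode), losing the unique parent pointer is the same as becoming unreachable, so the case analysis is complete.

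Inspecting figures~\ref{f-basic1}, \ref{f-basic2} and~\ref{f-compress}, the CAS sites that can ever drop an inode from its parent are: the root-level CAS instructions (cas\_topinsert, cas\_topremove, cas\_toplookup); the in-place replacements of $i.main$ performed by clean (cas\_clean) and tombCompress (cas\_tomb), which may drop some of the child inodes of the old cnode; and the parent-level CAS instructions cas\_contractnull and cas\_contractsingle in contractParent, which remove the single entry pointing to~$i$. The CAS instructions in iinsert and iremove (cas\_insertcn, cas\_insertsn, cas\_insertin, cas\_remove) only touch snode entries: they never delete an inode from an array, so they contribute nothing to this lemma.

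For the root cases, the branches leading to cas\_topremove, cas\_toplookup and the non-trivial cas\_topinsert are guarded by isNullInode(r), so $r$ was a null-inode at the time of that check; Lemma~1 preserves this status until the CAS fires. For contractParent, the value $m = \mathit{READ}(i.main)$ is read first. The cas\_contractnull branch is entered only when $m = \mathit{null}$, so $i$ was a null-inode at the read; the cas\_contractsingle branch is entered only when isSingleton($m$) holds, and INV1 restricts $i.main$ to a cnode, a tombed snode, or $\mathit{null}$, so $m$ must be a tombed snode, making $i$ a tomb-inode. In both sub-cases, Lemma~1 shows $i$ is still nonlive at the CAS time.

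The most delicate case is (ii), the compression CASes. Here one has to look inside toCompressed and toWeakTombed: a child inode $in'$ fails to appear in the new main only because (a) the filter step read $in'.main = \mathit{null}$ and dropped it, or (b) in toCompressed, the resurrect step read $in'.main$ as a tombed snode and replaced $in'$ by the corresponding plain snode. In either case $in'$ was observed to be nonlive during the construction of the new cnode, and by Lemma~1 it is still nonlive when the enclosing CAS on the parent's main succeeds. The obstacle to anticipate is exactly this one: the new cnode is built from several non-atomic reads, so one cannot appeal to any single snapshot of the tree; the monotonicity guaranteed by Lemma~1 is what makes the per-child observation sufficient to conclude nonliveness at the later, single instant $t_0$.
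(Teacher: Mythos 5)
Your proof is correct and follows essentially the same strategy as the paper's: a case analysis over every CAS site that could drop an inode from its parent, with each dropped inode shown to have been observed nonlive at an earlier read, and Lemma~1 (monotonicity of nonliveness) carrying that observation forward to the CAS instant. In fact your case analysis is slightly more complete than the paper's own, which enumerates the root-level, insert/remove, clean and tombCompress CASes but omits the two \verb=contractParent= CASes (lines \ref{cas_contractnull} and \ref{cas_contractsingle}); your treatment of those — $i.main=null$ in the first branch, and INV1 forcing $m$ to be a tombed snode in the singleton branch — correctly fills that gap.
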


\begin{lemma}
Reading a $cn$ such that $cn.sub(k) = sn$ and $k = sn.k$ at some time $t_0$ means that $hasKey(root, k)$ holds at $t_0$.
\end{lemma}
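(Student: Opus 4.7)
The plan is to identify the inode $in$ such that the read of $cn$ was obtained via $in.main$ at time $t_0$, and then to reduce the claim to showing that $in$ is reachable from $root$ at $t_0$. By Lemma 2, $cn$ is immutable, so $cn.sub(k) = sn$ at all times from $cn$'s creation, and similarly $sn.k = k$. Consequently, once a path $root \leadsto in$ is established at $t_0$, the extension $in \to cn \to sn$ immediately provides a witness for $hasKey(root, k)$.

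To prove that $in$ is reachable from $root$ at $t_0$, I would first observe that $in.main = cn$ is a cnode, so $in$ is not nonlive at $t_0$; Lemma 1 then forbids $in$ from ever having been nonlive at an earlier time. Then I would proceed by contradiction. Because $in$ was reachable immediately after being attached to the trie, let $t_e \le t_0$ be the earliest time at which $in$ ceases to be reachable from $root$, and examine the atomic CAS that fires at $t_e$.

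Two cases arise. If the CAS is on the $root$ reference, inspection of the code shows that the expected value was either $null$ or a null-inode, so the displaced inode was nonlive with an empty subtree at $t_e$, meaning this CAS cannot have disconnected $in$, a contradiction. Otherwise the CAS is on some $p.main$; pick the entry $c$ that sits on the unique path from $p$ down to $in$ in the old cnode but is absent from the new one. Lemma 4 applied to $c$ yields that $c$ is nonlive at $t_e$. If $c = in$, this contradicts $in$ having never been nonlive; if $c$ is a strict ancestor of $in$, then $c.main$ must have transitioned to $null$ or a tombed snode via a distinct earlier CAS at some $t_{e''} < t_e$, and that earlier CAS already detached $in$ from the trie, contradicting the minimality of $t_e$.

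I expect the main obstacle to be the strict-ancestor subcase: one must connect the nonliveness of $c$ observed at $t_e$ with a distinct, strictly earlier CAS on $c.main$ that produced that nonliveness, and then argue that this earlier CAS is itself a disconnection event for $in$, violating the choice of $t_e$. Once that interleaving is handled using Lemma 4 together with the permanence of nonliveness from Lemma 1, the rest is routine bookkeeping and the witness $root \leadsto in \to cn \to sn$ delivers $hasKey(root, k)$ at $t_0$.
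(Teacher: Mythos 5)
Your proposal is correct and follows essentially the same route as the paper: the paper also reduces the claim to reachability of $in$ at $t_0$ (via its Lemma~\ref{l-laterreach} and Corollary~\ref{c-reach}, which it derives from Lemmas~\ref{l-nonlive} and~\ref{l-parentreach}) and then invokes immutability of $cn$ and $sn$. The only difference is that you re-derive that reachability argument inline -- including the minimal-disconnection-time contradiction and the ancestor case -- rather than citing it as a separate lemma.
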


For a given Ctrie, we say that the longest path for a hashcode $h = r_0 \cdot r_1 \cdots r_n$, $length(r_i) = W$, is the path from the root to a leaf such that at each cnode $cn_{i,p}$ the branch with the index $r_i$ is taken.

\begin{lemma}
Assume that the Ctrie is an valid state. Then every longest path ends with an snode, cnode or $null$.
\end{lemma}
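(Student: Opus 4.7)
The claim follows from a straightforward induction on the level $l$, driven by the case analysis that INV1--INV3 enable. First I would dispose of the trivial case in which $root = null$: the longest path is empty and trivially ends at $null$. Otherwise the walk begins at the root inode $in_{0,\epsilon}$, and the remainder of the argument is about an arbitrary inode $in_{l,p}$ reached during the walk.

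By INV1 the value $in.main$ lies in exactly one of three cases. If $in.main = null$, the walk ends at $null$ and the claim holds for this branch. If $in.main$ is a tombed snode $sn\dagger$, the walk ends at that snode and the claim again holds. The only remaining case is $in.main = cn_{l,p}$, in which the walk is required to take the branch with index $r_l$. Here INV2 ensures that the bitmap faithfully records the occupied array positions, so either the flag corresponding to $r_l$ is unset, in which case the walk cannot descend any further and terminates at the cnode itself (the \emph{cnode} terminus of the lemma), or by INV3 the array entry at position $r_l$ is an snode -- in which case the walk ends at that snode -- or an inode $in_{l+W,\, p \cdot r_l}$, to which the induction hypothesis applies.

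Since the level strictly grows by $W$ each time we descend through a cnode into a child inode, and the hashcode $h = r_0 \cdot r_1 \cdots r_n$ has finite length, the recursion can only bottom out after at most $n+1$ cnode steps, which gives finite termination. All three of the enumerated outcomes (snode, cnode, null) are exhausted by the case split, and no fourth terminus is possible because INV1 and INV3 together forbid an inode from pointing at anything other than a cnode, a tombed snode, or $null$, and forbid a cnode entry from being anything other than an inode or an snode.

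The principal obstacle is not conceptual but bookkeeping: being exhaustive with the case split and clearly separating the role of each of INV1--INV3 rather than navigating any subtle invariant interplay. The one boundary case that deserves a sanity check is the cnode terminus at the deepest level, where the hashcode may be fully consumed while the walk is still positioned at a cnode; here no further $r_l$ is defined, so by the definition of the longest path the walk stops at that cnode, which is precisely one of the three terminal types the lemma enumerates.
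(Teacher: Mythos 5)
Your argument is correct and is exactly the reasoning the paper intends: its own proof of this lemma consists of the single sentence ``Trivially from the invariants and the definition of the longest path,'' and your case analysis on INV1--INV3 with induction on the level is precisely the expansion of that remark. The only cosmetic mismatch is that the paper's definition of the longest path terminates the $in.main = null$ case at the inode $in$ itself rather than at $null$, but that is an inconsistency within the paper's own statement, not a gap in your proof.
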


\begin{lemma}
Assume that a cnode $cn$ is read from $in_{l,p}.main$ at some time $t_0$ while searching for a key $k$. If $cn.sub(k) = null$ then $hasKey(root, k)$ is not in the Ctrie at $t_0$.
\end{lemma}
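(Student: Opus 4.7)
My plan is to argue by contradiction: assume that $hasKey(root, k)$ holds at time $t_0$ and derive that $cn$ must contain a branch for the chunk $r_l$, contradicting the hypothesis $cn.sub(k) = null$. The conceptual crux is that INV4 and INV5 force any snode holding $k$ to sit at the end of the unique path from the root whose branch indices agree with the chunks $r_0, r_1, \ldots$ of $k$'s hashcode; once we see that a cnode on this path lacks the next branch, no witness can exist below it.

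First I would establish that the inode $in_{l,p}$ whose \verb=main= field we read is still reachable from the root at $t_0$. Since we read a cnode (not a tomb or null) from $in_{l,p}.main$ at $t_0$, $in_{l,p}$ is live at $t_0$. By Lemma 4, any CAS that removed $in_{l,p}$ from its parent's cnode must have happened at a moment when $in_{l,p}$ was already nonlive, and by Lemma 1 an inode that was nonlive at any earlier time is still nonlive at $t_0$. So no such disconnecting CAS has occurred, i.e.\ the parent link to $in_{l,p}$ is intact at $t_0$. Applying the same reasoning inductively along the spine from $in_{l,p}$ up to the root establishes that $in_{l,p}$ is reachable from the root at $t_0$. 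Because the search descended to $in_{l,p}$ while looking up $k$, the prefix $p$ must equal the first $l$ chunks $r_0 \cdots r_{l-1}$ of $k$'s hashcode.

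Now, if $hasKey(root, k)$ held at $t_0$, there would be some snode $sn$ with $sn.k = k$ reachable from the root at $t_0$. Walking down the witnessing path and repeatedly invoking INV4 (to restrict which snodes can appear in a cnode's array) and INV5 (to restrict which snodes an inode can point to), the sequence of branch indices taken from the root must agree with the chunks of the hashcode of $k$. At level $l$ with prefix $p = r_0 \cdots r_{l-1}$, this forces the path to pass through the same $in_{l,p}$ we just saw, read $in_{l,p}.main = cn$, and continue along the branch indexed by $r_l$. But $cn.sub(k) = null$ means that exactly this branch is absent from $cn$, so the path must terminate at $in_{l,p}$ without ever reaching $sn$, a contradiction.

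The main obstacle is making two uniqueness claims fully precise: (i) the only inode that can occupy position $(l, p)$ on the reachable subtree at $t_0$ is the one the search observed, and (ii) any witness path from the root to an snode for $k$ must follow the hashcode-dictated branches rather than, say, visit an unrelated snode stored at a different array slot. Both reduce to careful bookkeeping with INV4, INV5 and the permanence of nonliveness provided by Lemmas 1 and 4; I expect the inductive reachability argument up the spine, and a clean statement that positions $(l,p)$ on the live tree are unique, to be where most of the real work lies.
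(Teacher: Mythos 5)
Your proposal is correct and follows essentially the same route as the paper: the paper first invokes its reachability lemma (itself built from Lemmas 1 and 4, exactly as you re-derive inline) to place $in_{l,p}$ in the configuration at $t_0$, and then derives the contradiction from the ``longest path'' lemma, which packages the same INV3--INV5 bookkeeping you carry out directly to show that any snode for $k$ could only sit at the end of the hashcode-dictated path terminating at $cn$. The only cosmetic difference is that the paper's $hasKey$ relation is already defined via $cn.sub(k)$ along the hashcode chunks, so your uniqueness claim (ii) is essentially true by definition rather than needing a separate argument.
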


\begin{lemma}
Assume that the algorithm is searching for a key $k$ and that an snode $sn$ is read from $cn.array(i)$ at some time $t_0$ such that $sn.k \neq k$. Then the relation $hasKey(root, k)$ does not hold at $t_0$.
\end{lemma}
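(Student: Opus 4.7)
The plan is to argue by contradiction: suppose that at the moment $t_0$ when the algorithm reads $sn$ from $cn.array(i)$ with $sn.k \neq k$, the relation $hasKey(root,k)$ nonetheless holds, so some snode $sn^{*}$ with $sn^{*}.k = k$ is reachable from $root$ at $t_0$. I would then locate both $sn$ and $sn^{*}$ on the same ``$k$-path'' in the trie and derive a contradiction with $sn.k \neq k$.

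First I would pin down the structural position of $sn$. Writing $cn = cn_{l,p}$, the index $i$ was obtained by the algorithm via $\mathit{flagpos}(k.hc,\,l,\,cn.bmp)$ while searching for $k$, so $i$ is the slot of $cn$ reserved for keys whose $l$-th $W$-bit hashcode chunk equals $r_l^{k}$, the $l$-th chunk of $k.hc$. Applying INV4 to the snode stored at position $i$ of $cn$ then forces $sn.k$'s hashcode to share the prefix $p\cdot r_l^{k}$ with $k.hc$; that is, $sn$ sits exactly where the longest path determined by $k.hc$ leads through $cn$.

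The second step is a uniqueness argument. Using invariants INV1--INV5 together with Lemma 6, any snode with key $k$ reachable from $root$ in a valid Ctrie must appear as a leaf on the longest path from $root$ that follows the successive $W$-bit chunks of $k.hc$: at each level the branch is uniquely determined by the next chunk, and INV4/INV5 guarantee that stored hashcode prefixes agree with the descent. So $sn^{*}$, by hypothesis, lies on the $k$-path of the trie at time $t_0$.

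The crux and main obstacle is reconciling the two pictures at the single atomic instant $t_0$. The cnode $cn$ was fetched from $in_{l,p}.main$ at some earlier time $t_{cn} \le t_0$, whereas the abstract $k$-path at $t_0$ is determined by the current value of $in_{l,p}.main$. If $in_{l,p}.main = cn$ still holds at $t_0$, the $k$-path terminates at $sn$, giving $sn^{*} = sn$ and hence $sn.k = k$, contrary to assumption. The hard subcase is when $in_{l,p}.main$ has been CASed away between $t_{cn}$ and $t_0$: here I would analyze which operations could have performed that CAS (one of \texttt{iinsert}, \texttt{iremove}, \texttt{tombCompress}, or \texttt{clean}), appealing to Lemmas~1, 2, and 4 to show that none of these installs a configuration placing $k$ on the $k$-path below $in_{l,p}$ without either (a) preserving $sn$ at slot $i$, so that $k$ remains absent from this branch, or (b) executing only after $in_{l,p}$ itself became unreachable from $root$, in which case $sn^{*}$ cannot be reached at $t_0$ through this subtree. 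Either outcome contradicts $hasKey(root,k)$ at $t_0$, completing the proof.
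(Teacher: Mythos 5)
Your main line — contradiction, pinning $sn$ to the $k$-path via INV4, and uniqueness of that path in a valid state — is exactly the paper's route: the paper proves this lemma by the same contradiction it uses for the preceding absence lemma, combining reachability (Lemmas 1 and 4, via the appendix's "later reach" argument) with the longest-path lemma (Lemma 6), so that any snode with key $k$ could only sit where $sn$ sits. Where you go beyond the paper is in explicitly isolating the subcase in which $in_{l,p}.main$ is CASed away between the read of $cn$ and the read of $cn.array(i)$ at $t_0$; the paper silently ignores this.

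Unfortunately, your proposed resolution of that subcase has a genuine gap. The dichotomy you offer — every intervening CAS either (a) preserves $sn$ at slot $i$ "so that $k$ remains absent from this branch" or (b) happens only after $in_{l,p}$ is unreachable — is false. A concurrent $iinsert$ of the very key $k$ can succeed with the CAS in line \ref{cas_insertcn} (adding a new slot for $k$ while leaving $sn$ untouched at slot $i$) or in line \ref{cas_insertin} (replacing $sn$ at slot $i$ by an inode holding both $sn$ and a fresh snode with key $k$) while $in_{l,p}$ is still live and reachable; after such a CAS, $hasKey(root,k)$ holds. So no case analysis can rescue the claim at the literal array-read time: if such an insert lands before $t_0$, the conclusion is simply false at $t_0$. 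The way out — and what the paper implicitly does, consistently with its linearizability proof, which places the linearization point at the atomic read in line \ref{read_lookup} — is to evaluate $hasKey$ at the instant the cnode is read from $in_{l,p}.main$. At that instant $in$ is reachable and live, the unique $k$-path terminates at $sn$ with $sn.k \neq k$, and immutability of cnodes (Lemma 2) guarantees the later array read returns exactly what was already there, so no treatment of intervening CAS instructions is needed. You should either carry out the argument at that earlier read, or note that the lemma's conclusion properly refers to that time rather than to the array read.
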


\begin{lemma}\label{l-maintext-fastening}
1. Assume that one of the CAS in lines \ref{cas_insertcn} and \ref{cas_insertin} succeeds at time $t_1$ after $in.main$ was read in line \ref{read_insert} at time $t_0$. Then $\forall t, t_0 \le t < t_1$, relation $hasKey(root, k)$ does not hold.

2. Assume that the CAS in lines \ref{cas_insertsn} succeeds at time $t_1$ after $in.main$ was read in line \ref{read_insert} at time $t_0$. Then $\forall t, t_0 \le t < t_1$, relation $hasKey(root, k)$ holds.

3. Assume that the CAS in line \ref{cas_remove} succeeds at time $t_1$ after $in.main$ was read in line \ref{read_remove} at time $t_0$. Then $\forall t, t_0 \le t < t_1$, relation $hasKey(root, k)$ holds.
\end{lemma}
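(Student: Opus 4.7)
The three parts share a common skeleton: pick out a cnode $cn$ that is guaranteed to sit at $in.main$ throughout the entire interval $[t_0, t_1]$, note that during this interval the branch of $cn$ associated with $k$ has a known shape, and then invoke one of Lemmas \ref{l-maintext-fastening}'s predecessors (Lemmas 5, 7 or 8) pointwise at each $t \in [t_0,t_1)$. My first step would therefore be a \emph{stability lemma}: the value of $in.main$ read at $t_0$ remains equal to $cn$ throughout $[t_0,t_1]$. This uses two ingredients already established in the excerpt: cnodes are immutable (Lemma 2), and every CAS on $in.main$ installs a freshly allocated MainNode (a new cnode, $null$, or a tombed snode). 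Together these imply that once $in.main$ changes away from $cn$ it cannot return to $cn$; since the successful CAS at $t_1$ witnesses $in.main = cn$ at $t_1$, and since we also have $in.main = cn$ at $t_0$, monotonicity of object identity forces $in.main = cn$ throughout the interval.

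\paragraph*{Reachability and path placement.}
Next I would argue that $in$ is reachable from $root$ throughout $[t_0,t_1]$ and that it lies on the unique hashcode path of $k$. Reachability follows from Lemma 4 in contrapositive form: if $in$ became unreachable at some $t \in [t_0,t_1]$, it would be nonlive at that $t$, contradicting $in.main = cn$ (a cnode). Path placement is an easy induction on the recursive call structure of \verb=iinsert=/\verb=iremove=: at each recursive step, the level $lev$ is increased by $W$ and we descend into the child indexed by the corresponding $W$ bits of $k$'s hashcode, so by INV3--INV5 the prefix $p$ of $in_{lev,p}$ agrees with the first $lev$ bits of $k.hc$. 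Combining this with INV4 and INV5, if $hasKey(root,k)$ holds at any $t \in [t_0, t_1)$, the witnessing snode must be reachable via $in$, and hence via $cn.sub(k)$.

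\paragraph*{Casewise application.}
For part 1 the CAS in line \ref{cas_insertcn} is attempted precisely when the flag of $k$ is clear in $cn.bmp$, i.e.\ $cn.sub(k) = null$; the CAS in line \ref{cas_insertin} is attempted after finding an snode $sn$ with $sn.k \ne k$. In both subcases, for every $t \in [t_0,t_1)$, the cnode $cn$ read from $in.main$ at time $t$ satisfies the hypothesis of Lemma 7 or Lemma 8 respectively, and the relevant lemma yields $\lnot hasKey(root,k)$ at $t$. For part 2 the CAS in line \ref{cas_insertsn} is attempted after finding an snode $sn$ with $sn.k = k$, so $cn.sub(k) = sn$ with $k = sn.k$ at every $t \in [t_0,t_1)$; Lemma 5 immediately gives $hasKey(root,k)$. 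Part 3 is identical to part 2, but applied in \verb=iremove= using the $in.main$ read at line \ref{read_remove} and the CAS at line \ref{cas_remove}, since that branch is entered precisely when $sn.k = k$.

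\paragraph*{Main obstacle.}
The delicate step is the stability claim, and specifically ruling out an ``ABA'' on $in.main$. I would make this fully explicit by observing that every write to $in.main$ anywhere in the algorithm (lines \ref{cas_topinsert}, \ref{cas_topremove}, \ref{cas_toplookup}, \ref{cas_insertcn}, \ref{cas_insertsn}, \ref{cas_insertin}, \ref{cas_remove}, \ref{cas_clean}, \ref{cas_tomb}, \ref{cas_contractnull}, \ref{cas_contractsingle}) installs a value constructed from a fresh cnode, snode, or $null$, combined with Lemma 2 stating cnodes never mutate. Once this stability is nailed down, the rest of the argument is a mechanical combination of Lemmas 4, 5, 7 and 8 applied at each time in the interval, with the invariants INV3--INV5 ensuring that $in$'s position in the trie is the only place where $k$ could live.
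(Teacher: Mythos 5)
Your proposal is correct and follows essentially the same route as the paper: the paper's proof likewise first establishes that $in.main$ holds the same value $cn$ throughout $[t_0,t_1)$ (ruling out ABA because CAS always installs freshly allocated nodes, an argument the paper grounds explicitly in the assumption of a garbage collector that never reuses still-referenced memory), then combines reachability of $in$ with immutability of $cn$ and the known shape of $cn.sub(k)$ to settle $hasKey(root,k)$ pointwise. The only cosmetic difference is that you invoke the Presence/Absence lemmas at each instant, whereas the paper argues directly from the longest-path lemma, from which those lemmas are themselves derived.
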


\begin{lemma}\label{l-maintext-consmodif}
Assume that the Ctrie is valid and consistent with some abstract set $\mathbb{A}$ $\forall t, t_1 - \delta < t < t_1$. CAS instructions from lemma \ref{l-maintext-fastening} induce a change into a valid state which is consistent with the abstract set semantics.
\end{lemma}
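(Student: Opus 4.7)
The plan is to proceed by case analysis on the four CAS instructions from Lemma \ref{l-maintext-fastening}, and for each show separately (a) that the resulting Ctrie is still valid, and (b) that it is consistent with the expected new abstract set ($\mathbb{A} \cup \{k\}$ for the insertion CAS instructions, $\mathbb{A} \setminus \{k\}$ for the remove CAS). Since Lemma 3 already guarantees that INV1--INV3 hold at all times, the validity part reduces to checking INV4--INV5 for the freshly constructed cnodes and snodes; and since cnodes and snodes are immutable (Lemma 2), it suffices to argue locally about the nodes actually built along the path $in.main \mapsto ncn$ inside the CAS.

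For validity, I would first observe that each CAS replaces exactly one field $in.main$ with a node $ncn$ that differs from the previous $cn$ at a single array slot at position $pos = \#((flag - 1) \odot cn.bmp)$, where $flag$ is derived from the level $lev$ bits of $k.hc$. Hence INV4 at the new slot reduces to the statement that $p \cdot r_l$ is a prefix of $sn.k$ for the snode $nsn = SNode(k, v, \bot)$ that was inserted; this follows because $r_l$ is by construction the $W$-bit block of $k.hc$ at level $l$. The unchanged slots retain their invariants because the snodes and subinodes below them are unchanged objects. For the \verb=cas_insertin= case, I would additionally check that the freshly built $CNode(sn, nsn, lev + W)$ wrapped by a new $INode$ at level $l + W$ is itself valid: both $sn.k$ and $nsn.k$ share the prefix $p \cdot r_l$ by INV4 of the old $cn$ and by construction, so INV3--INV5 hold at the deeper cnode as well. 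For \verb=cas_remove=, the new array is the old one with position $pos$ removed and $flag$ cleared in the bitmap, so INV1--INV5 carry over trivially from the old $cn$.

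For consistency, the key observation is that the CAS on $in.main$ modifies only the subtree rooted at $in$, and from the hashcode-prefix discipline the only key whose $hasKey$ status can change across such a CAS is one whose hashcode has $p$ as a prefix \emph{and} agrees with the affected $pos$ at level $l$. All other keys $k' \neq k$ in the abstract set are either reachable through a different slot of $cn$, or lie entirely outside the subtree at $in$; in either case the sequence of pointers traversed by $hasKey(root, k')$ passes through objects that are either unchanged or are new copies with the same content at the relevant positions, so $hasKey(root, k')$ is unaffected. Combined with Lemma \ref{l-maintext-fastening}, which pins down the status of $hasKey(root, k)$ immediately before the successful CAS, this gives the required flip: insertion cases change $hasKey(root, k)$ from false to true (for \verb=cas_insertcn= and \verb=cas_insertin=) or leave it true while updating the value (for \verb=cas_insertsn=), and \verb=cas_remove= changes it from true to false.

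The main obstacle is the \verb=cas_insertin= case, where correctness depends on showing that the freshly created inner cnode at level $lev + W$ respects INV3--INV5 even though it is built from an old snode $sn$ and a new snode $nsn$ whose hashcodes collide in the prefix up to level $l$. I would handle this by using INV4 applied to the old $cn$ to extract that $p$ is a prefix of $sn.k$, observing that $sn$ and $nsn$ land at the slot $pos$ because their $r_l$ bits agree with $k$'s, and then invoking the constructor $CNode(sn, nsn, lev + W)$ which places them at positions determined by the $W$ bits at level $l+1$; since by hypothesis $sn.k \neq k$, at some deeper level their hashcodes must diverge, so either the constructor produces a two-slot cnode at level $l + W$ directly (if the bits already differ there) or the argument is applied recursively with the same structure at successively deeper levels, each layer of which trivially satisfies INV4--INV5 with prefix $p \cdot r_l \cdot \ldots$ by the same reasoning.
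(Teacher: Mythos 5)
Your proof is correct and follows essentially the same route as the paper's: a case analysis over the CAS instructions of the Fastening lemma, using that lemma to pin down the pre-CAS status of $hasKey(root,k)$ and the locality and immutability of the replaced cnode to conclude that the only consistency change is the addition or removal of $k$. You in fact supply more detail than the paper, which writes out only the case of line \ref{cas_insertcn}, declares the remaining cases similar, and dismisses the validity half (your INV4--INV5 check, including the recursively built cnode in the case of line \ref{cas_insertin}) with ``Validity is trivial.''
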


\begin{lemma}\label{l-maintext-consclean}
Assume that the Ctrie is valid and consistent with some abstract set $\mathbb{A}$ $\forall t, t_1 - \delta < t < t_1$. If one of the operations $clean$, $tombCompress$ or $contractParent$ succeeds with a CAS at $t_1$, the Ctrie will remain valid and consistent with the abstract set $\mathbb{A}$ at $t_1$.
\end{lemma}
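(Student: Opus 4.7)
The plan is to split into three cases, one per operation (\emph{clean}, \emph{tombCompress}, \emph{contractParent}), and for each verify both (a) that the invariants INV1--INV5 are preserved, and (b) that no key $k$ changes its membership in $\{k : hasKey(root, k)\}$ across the successful CAS at $t_1$. The common preliminary step in every case is to exploit the fact that the expected value of the CAS was read at some earlier time $t_0 < t_1$: by the lemma that nonlive inodes never change their $main$ field, together with the immutability of cnodes and snodes, every null-inode and tomb-inode reachable from the cnode being compressed still has exactly the contents seen at $t_0$. Hence the replacement value computed by $toCompressed$ or $toWeakTombed$ is still, at time $t_1$, the correct filter-and-resurrect (or filter-and-tomb) transformation of the state being replaced.

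For \emph{clean}, the replacement $toCompressed(m)$ removes null-inode branches and replaces tomb-inode branches by their untombed snodes. Null-inodes hold no keys, so dropping them removes no element from $hasKey(root,\cdot)$. A tomb-inode $in_{l+W,p \cdot r}$ carrying $sn\dagger$ made $sn.k$ reachable, and after resurrection the bare $sn$ sits in the same slot; INV4/INV5 guarantee that $p \cdot r$ is still a prefix of $sn.k.hc$, so $sn.k$ remains reachable. The new bitmap is built consistently with the new array, so INV2 holds, and the branches which survive are structurally unchanged, so INV3--INV5 carry over. For \emph{tombCompress}, $toWeakTombed(m)$ makes exactly the same two kinds of benign edits: it filters out null-inodes, possibly collapsing a singleton into $sn\dagger$ (which still makes $sn.k$ reachable through $in$) or, if the filtered array is empty, setting $in.main$ to $null$ (which is correct since every original branch was a null-inode holding no keys). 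For \emph{contractParent}, line \ref{cas_contractnull} fires only when $m = null$, i.e.\ $in$ is a null-inode holding no keys, so removing its slot from the parent cnode is key-preserving; line \ref{cas_contractsingle} fires only when $m$ is a tombed singleton $sn\dagger$, and the branch is replaced by the untombed $sn$, preserving reachability of $sn.k$. In both sublines the recomputed parent bitmap satisfies INV2 and the untouched siblings satisfy INV3--INV5 by immutability.

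The main obstacle is the interleaving between the read of $in.main$ (and, in \emph{contractParent}, of $parent.main$) and the CAS that eventually commits: concurrent operations may have rearranged the trie in between, and we must argue that the CAS's success at $t_1$ truly corresponds to the benign transformation analysed above. This is handled by appealing to the lemma that a CAS can only detach a nonlive inode from its parent: if the parent's CAS succeeds then $in$ is still installed at the claimed $(flag, pos)$ slot under the parent, and by the first lemma $in.main$ still holds the nonlive value observed at the earlier read. Nailing down this locator argument -- in particular, that the $(flag, pos)$ recomputed from $k.hc$ at the parent level identifies exactly the branch occupied by $in$, and that $in$ has not been replaced by an unrelated node at the same slot -- is the subtlest part of the proof, and is where I would spend most of the formal effort.
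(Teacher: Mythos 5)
Your proposal is correct and follows essentially the same route as the paper's proof: both rest on the nonliveness and immutability lemmas to argue that $toCompressed$/$toWeakTombed$ produce a node with exactly the same $hasKey$ relation (null-inodes hold no keys, tomb-inodes are compensated by resurrected or re-tombed snodes), and that a successful CAS certifies the expected value was unchanged since the read. Your treatment of $contractParent$ and the locator argument is in fact more explicit than the paper's, which dismisses those cases as ``similar.''
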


\begin{corollary}
Invariants INV4,5 always hold due to lemmas \ref{l-maintext-consmodif} and \ref{l-maintext-consclean}.
\end{corollary}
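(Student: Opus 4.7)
The plan is to proceed by induction on time, more precisely on the sequence of successful CAS instructions that modify any field reachable from $\mathtt{root}$. The inductive hypothesis is that just before each such CAS the Ctrie is valid (all of INV1--INV5 hold) and consistent with some abstract set $\mathbb{A}$.

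First I would establish the base case. At program start $\mathtt{root} = \mathtt{null}$, so there are no cnodes or inodes at all; INV4 and INV5 therefore hold vacuously, INV1--INV3 hold for the same reason by the earlier invariant lemma, and the Ctrie is trivially consistent with $\emptyset$. This gives the initial valid and consistent state needed to seed the induction.

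For the inductive step, I would enumerate every CAS in the pseudocode that can alter the shape of the Ctrie: the root-level CASes (lines \ref{cas_topinsert}, \ref{cas_topremove}, \ref{cas_toplookup}), the insertion CASes (lines \ref{cas_insertcn}, \ref{cas_insertsn}, \ref{cas_insertin}), the removal CAS (line \ref{cas_remove}), and the compression CASes (lines \ref{cas_tomb}, \ref{cas_clean}, \ref{cas_contractnull}, \ref{cas_contractsingle}). By the inductive hypothesis, immediately before the successful CAS at time $t_1$ the Ctrie is valid and consistent, so the premise of Lemma~\ref{l-maintext-consmodif} (for the insertion and removal CASes) or Lemma~\ref{l-maintext-consclean} (for the compression CASes) is satisfied. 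Those lemmas yield that the post-CAS state is again valid, which by definition includes INV4 and INV5. For the root-level CASes I would give a direct verification: installing a fresh $\mathtt{CNode}$ containing a single $\mathtt{SNode}$ satisfies INV4 with the empty prefix at level $0$ (the empty string is a prefix of every hashcode), installing an $\mathtt{INode}$ above it satisfies INV5 by the same observation, and collapsing a null-inode to $\mathtt{null}$ preserves INV4--INV5 vacuously.

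The main obstacle I anticipate is the apparent circularity: Lemmas~\ref{l-maintext-consmodif} and~\ref{l-maintext-consclean} both assume validity (hence INV4, INV5) in order to conclude validity afterwards. The induction over time is precisely what breaks this circularity, so I would take care to argue that nothing can falsify the hypothesis in the interval between two consecutive CASes. This follows from the immutability of cnodes and snodes together with the fact that every write to an inode's $main$ field is one of the enumerated CASes, so no unlisted modification can slip in and invalidate the premise; once this is justified, the induction closes and INV4 and INV5 are seen to hold at every time.
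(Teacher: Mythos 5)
Your proposal is correct and follows essentially the same route as the paper: the paper establishes INV4 and INV5 as part of the structural induction over time set up for the Safety theorem (base case the empty Ctrie, inductive step supplied by Lemmas~\ref{l-maintext-consmodif} and~\ref{l-maintext-consclean}, with immutability of cnodes/snodes guaranteeing nothing changes between CAS instructions). You merely make explicit the induction and the case analysis of the root-level CAS instructions that the paper leaves implicit, which is a faithful elaboration rather than a different argument.
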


\begin{theorem}[Safety]
At all times $t$, a Ctrie is in a valid state $\mathbb{S}$, consistent with some abstract set $\mathbb{A}$. All Ctrie operations are consistent with the semantics of the abstract set $\mathbb{A}$.
\end{theorem}

\begin{theorem}[Linearizability]
Ctrie operations are linearizable.
\end{theorem}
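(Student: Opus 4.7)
The plan is to assign an explicit linearization point to every completed top-level call and then appeal to the already stated lemmas to verify that, at that point, the operation is consistent with the abstract set semantics. Since by the Safety theorem the Ctrie always corresponds to some abstract set $\mathbb{A}(t)$, it suffices to show: (i) the chosen linearization point $t^\star$ lies strictly between invocation and completion, and (ii) the result returned is the one predicted by applying the sequential specification to $\mathbb{A}(t^\star)$. Note that a top-level call may recurse after a failed CAS or a $\mathit{RESTART}$; only the final, successful attempt needs to be linearized, because every intermediate action is either a local read or a CAS performed by $\mathit{clean}$, $\mathit{tombCompress}$ or $\mathit{contractParent}$, and by Lemma~\ref{l-maintext-consclean} none of those changes the abstract set.

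\textbf{Mutating operations.} For a successful \textit{insert}, the linearization point is the successful CAS in one of the lines \ref{cas_topinsert}, \ref{cas_insertcn}, \ref{cas_insertsn} or \ref{cas_insertin}. For a successful \textit{remove}, it is the CAS in line \ref{cas_remove}. In both cases Lemma~\ref{l-maintext-fastening} pins down the status of $hasKey(root,k)$ immediately before the CAS, and Lemma~\ref{l-maintext-consmodif} guarantees that the CAS atomically transitions the structure into a valid Ctrie consistent with $\mathbb{A}\cup\{k\}$ or $\mathbb{A}\setminus\{k\}$ respectively. Since the CAS is a single atomic step performed during the execution of the call, (i) and (ii) hold.

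\textbf{Read-only and negative outcomes.} For a \textit{lookup} that returns $sn.v$, the linearization point is the read of the cnode in line \ref{read_lookup} on the iteration that leads to that return (or, equivalently, the read of the snode in line \ref{read_inlookup}): Lemma~5 then gives $hasKey(root,k)$ at that instant. For a \textit{lookup} returning $\mathit{NOTFOUND}$, the linearization point is either the cnode read in line \ref{read_lookup} when the flag test in line \ref{flag_lookup} fails (use Lemma~7), or the read in line \ref{read_inlookup} that delivers an snode with a different key (use Lemma~8). The same two lemmas cover the corresponding $\mathit{NOTFOUND}$ returns of \textit{remove} at lines \ref{flag_remove} and \ref{notcheck_remove}. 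In every case the read is a single atomic event occurring during the call, so (i) is immediate and (ii) follows from the cited lemma together with the Safety theorem.

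\textbf{Main obstacle.} The delicate part is justifying that the linearization point for a lookup really lies inside the call's interval even though the algorithm may have recursed many times and even though helper operations ($\mathit{clean}$, $\mathit{tombCompress}$, $\mathit{contractParent}$) may have fired in between. This requires the auxiliary observation that after any helper CAS the abstract set is unchanged (Lemma~\ref{l-maintext-consclean}), so picking the linearization point on the final successful iteration is sound: the abstract set at that instant equals the abstract set at the semantically relevant earlier instant whenever we need to transport a conclusion backwards. A secondary subtlety is the interaction with the top-level null-root cleanup CAS in lines \ref{cas_topremove} and \ref{cas_toplookup}; these do not alter $\mathbb{A}$ (the root-inode is already nonlive by Lemma~1 and 4) so they can be ignored when choosing linearization points, exactly as the helper procedures are.
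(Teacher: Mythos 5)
Your proposal is correct and follows essentially the same route as the paper: linearize mutating operations at their unique successful CAS (via Lemmas \ref{l-maintext-fastening} and \ref{l-maintext-consmodif}), linearize lookups and negative outcomes at the relevant atomic read, and discharge restarts and helper CASes ($clean$, $tombCompress$, $contractParent$, and the null-root cleanup) by noting via Lemma \ref{l-maintext-consclean} that they never change the abstract set. The only omission is the trivial case where the operation returns $NOTFOUND$ because $root = null$, which the paper linearizes at the read of $root$ using its Absence III lemma.
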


%\begin{lemma}
%The $in$ below the $root$ is never a tomb-inode.
%\end{lemma}

\begin{lemma}
If a CAS that does not cause a consistency change in one of the lines \ref{cas_insertcn}, \ref{cas_insertsn}, \ref{cas_insertin}, \ref{cas_clean}, \ref{cas_tomb}, \ref{cas_contractnull} or \ref{cas_contractsingle} fails at some time $t_1$, then there has been a state (configuration) change since the time $t_0$ when a respective read in one of the lines \ref{read_insert}, \ref{read_insert}, \ref{read_insert}, \ref{read_clean}, \ref{read_tomb}, \ref{read_contract} or \ref{read_contract} occured.
\end{lemma}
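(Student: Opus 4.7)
The plan is to proceed uniformly across the seven CAS/read pairs by observing that, in each case, the CAS targets the \textit{same} memory location from which its expected value was just read. Concretely, the CAS in line \ref{cas_insertcn}, \ref{cas_insertsn} or \ref{cas_insertin} targets \verb=i.main= with expected value \verb=cn= read in line \ref{read_insert}; the CAS in line \ref{cas_clean} targets \verb=i.main= with expected value \verb=m= read in line \ref{read_clean}; the CAS in line \ref{cas_tomb} targets \verb=i.main= with expected value \verb=m= read in line \ref{read_tomb}; and the CASes in lines \ref{cas_contractnull} and \ref{cas_contractsingle} target \verb=parent.main= with expected value \verb=cn= which (via \verb=pm=) was read from \verb=parent.main= in line \ref{read_contract}. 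I would list these pairings explicitly as a first step, so that the rest of the argument can be a single generic claim.

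Next I would invoke the immutability lemma (cnodes and snodes are immutable): the only field whose value can change over time is the \verb=main= field of an inode, and by the semantics of CAS this can only happen via a successful CAS on that field. Therefore, if at time $t_1$ the value at the targeted location $L \in \{in.main, parent.main\}$ differs from the value read at $t_0$, there must have been at least one intervening successful CAS at $L$ whose expected value equalled the value present at some intermediate time and whose new value was different. Any such successful CAS, by definition, changes the abstract configuration of the Ctrie (a new cnode, tombed snode, or \verb=null= is installed into an inode that was reachable at the time it was performed, or became nonlive by the lemma on unreachability).

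From there the conclusion is immediate: failure of the CAS at $t_1$ means precisely that the current contents of $L$ do not match the expected reference captured at $t_0$, hence an intermediate successful CAS must have occurred in the open interval $(t_0, t_1)$, and that intermediate CAS is itself a configuration change. I would note in passing that the qualifier ``does not cause a consistency change'' is vacuous in the failure case, since a failed CAS performs no write and therefore trivially changes nothing; it is stated only to separate this lemma from the successful-CAS cases already handled in lemmas \ref{l-maintext-consmodif} and \ref{l-maintext-consclean}.

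The only slightly subtle case is \verb=contractParent=, because it reads both \verb=i.main= and \verb=parent.main=, and additionally checks \verb=cn.array(pos) = i= in line \ref{read_incontract} before committing. Here I would emphasize that the CAS's expected value is still the \verb=cn= obtained from \verb=parent.main= in line \ref{read_contract}, so the same reasoning applies verbatim to \verb=parent.main=; the intervening checks on \verb=i= and \verb=m= only cause an \textit{early return} (not a CAS failure) and are therefore outside the scope of this lemma. The main (mild) obstacle is making sure the reader agrees that a change of reference at \verb=parent.main=, even one that preserves the abstract set, still counts as a ``state (configuration) change'' in the sense used here; I would resolve this by appealing to the immutability lemma, which forces any new reference to be a genuinely newly-allocated node distinct from \verb=cn=, and hence to represent a distinct configuration of the data structure.
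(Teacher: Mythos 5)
Your proposal is correct and takes essentially the same route as the paper, whose entire proof is the two-sentence observation that a failed CAS means the corresponding $in.main$ has changed, hence the configuration of reachable nodes has changed by definition. Your additional detail (the explicit read/CAS pairings, the appeal to immutability and to freshly-allocated new values to rule out ABA, and the remark that the early returns in $contractParent$ are out of scope) merely makes explicit what the paper leaves implicit.
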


\begin{lemma}
In each operation there is a finite number of execution steps between consecutive CAS instructions.
\end{lemma}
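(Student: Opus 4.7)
The plan is to enumerate each CAS instruction that appears in the pseudocode (lines \ref{cas_topinsert}, \ref{cas_topremove}, \ref{cas_toplookup}, \ref{cas_insertcn}, \ref{cas_insertsn}, \ref{cas_insertin}, \ref{cas_remove}, \ref{cas_clean}, \ref{cas_tomb}, \ref{cas_contractnull}, \ref{cas_contractsingle}) and show that every control-flow path that connects two such instructions inside a single operation performs only boundedly many steps. The argument is structural: the pseudocode contains no explicit loops, only tail recursion, and each source of recursion admits a finite bound that does not depend on the concurrent schedule.

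First I would observe that within any one procedure body the code between an atomic READ and its immediately subsequent CAS performs only constant work: a pattern match on the node type, a call to \texttt{flagpos} (a constant-time bit manipulation), at most one array indexing, and possibly the allocation of a new cnode or snode whose size is bounded by the fixed branching factor $2^W$. The auxiliary procedures \texttt{toCompressed}, \texttt{toWeakTombed}, \texttt{clean}, and the pre-CAS portion of \texttt{contractParent} are likewise straight-line over arrays of length at most $2^W$ and hence perform $O(2^W)$, i.e.\ constant, work.

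The only non-trivial control-flow is recursion, of which there are two kinds. The downward recursion in \texttt{iinsert}, \texttt{iremove} and \texttt{ilookup} strictly increases the level parameter $lev$ by $W$ on every call. By INV3, which is already established, an inode at level $l+W$ can only appear as a child of a cnode at level $l$, so the sequence of inodes visited along a descent consumes strictly more bits of the hashcode at each step; since hashcodes have a fixed finite length $H$, the descent depth is bounded by $\lceil H/W \rceil$, and each level does only the constant work analysed above. The retry recursion in \texttt{tombCompress}, in \texttt{contractParent}, and in the top-level \texttt{insert}/\texttt{remove}/\texttt{lookup} is always triggered either by a just-executed CAS or by an explicit restart after a failed descent; each recursive retry call then begins a fresh READ and reaches its own next CAS after only constant additional work plus, for the top-level restarts, another bounded descent. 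Thus the gap between the triggering CAS and the subsequent one is again finite.

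The main obstacle I expect is the bookkeeping: being precise about what \emph{consecutive CAS instructions within one operation} means once recursion and top-level restarts are combined, and making the case analysis exhaustive over the eleven CAS sites and all their callers. Once the two ingredients above — constant work between any READ/CAS pair, and a bounded descent depth derived from INV3 and the fixed hashcode width — are set out, the conclusion follows by a routine traversal of the call graph; no additional algorithmic insight beyond these two observations is required.
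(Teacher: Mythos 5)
Your proposal is correct and follows essentially the same route as the paper's proof: straight-line constant work between a read and its following CAS, a descent recursion bounded by the trie depth, and retry recursion that is always within a bounded number of steps of its next CAS (including the corner case where \texttt{ilookup}/\texttt{iinsert} restart at a null root inode without calling \texttt{clean}, so that the CAS only happens on the next top-level iteration). The paper additionally makes explicit that the upward \texttt{tombCompress}/\texttt{contractParent} cleanup phase of \texttt{remove} is bounded by the trie depth, but this is subsumed by your bounded-descent argument.
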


\begin{corollary}
There is a finite number of execution steps between two state changes. This does not imply that there is a finite number of execution steps between two operations. A state change is not necessarily a consistency change.
\end{corollary}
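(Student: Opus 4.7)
The plan is to combine the two preceding lemmas directly. Fix an arbitrary thread $T$ executing a Ctrie operation and consider any contiguous window of $T$'s execution during which no state change occurs anywhere in the system. I want to show such a window contains only finitely many of $T$'s steps.

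The body of every operation consists of a bounded sequence of reads, pattern matches, and local computations, terminated by either a CAS or a recursive restart. By the lemma that bounds the steps between consecutive CAS instructions, $T$ reaches its next CAS after only finitely many steps. Consider the outcome of that CAS. If it succeeds, then by inspection of the pseudocode it writes a new value into \verb=root= or into some \verb=i.main= field, which is by definition a state (configuration) change and thus closes the window. If it fails, then by the lemma about a failed non-consistency-changing CAS, some state change must have occurred in the interval between the matching read at $t_0$ (one of the reads enumerated in that lemma) and the failed CAS at $t_1$; this intervening state change also closes the window. Either way, only finitely many steps of $T$ can elapse before the next state change is witnessed.

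I would close with the two clarifying remarks that appear in the corollary statement. First, a single operation may traverse many such windows, since each failed CAS triggers a restart; hence the corollary does not imply a finite bound on the number of steps per operation, and indeed in a lock-free setting an individual operation may be starved indefinitely while the system as a whole makes progress. Second, among the CAS events that can close a window are the bookkeeping CASes performed by \verb=clean=, \verb=tombCompress=, and \verb=contractParent=, which by the consistency lemma for cleanup operations leave the abstract set unchanged; thus a state change is strictly weaker than a consistency change, and the corollary correctly distinguishes the two.

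The main obstacle, which is minor, is a case audit: one must check that every CAS site in the operations is covered by the enumeration in the failing-CAS lemma, so that the failure case really does entail an intervening state change. Concretely, this means matching each of lines \ref{cas_insertcn}, \ref{cas_insertsn}, \ref{cas_insertin}, \ref{cas_clean}, \ref{cas_tomb}, \ref{cas_contractnull}, and \ref{cas_contractsingle} with its corresponding preceding read, which is direct from the pseudocode in Figures \ref{f-basic1}--\ref{f-compress}. Once that audit is in hand, no further case analysis is needed and the corollary follows immediately.
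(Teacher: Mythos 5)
Your proposal is correct and follows essentially the same route as the paper's own justification: it combines the finite-steps-between-CASes lemma (Lemma \ref{l-betweencas}) with a case split on the next CAS, where success yields a configuration change (the paper cites Lemmas \ref{l-consmodif} and \ref{l-consclean} here, while you argue it by direct inspection of the write) and failure yields an intervening state change by Lemma \ref{l-failedcas}. The clarifying remarks about restarts and about state changes not being consistency changes match the paper's intent as well.
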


We define the \textbf{total path length} $d$ as the sum of the lengths of all the paths from the root to some leaf.
Assume the Ctrie is in a valid state. Let $n$ be the number of reachable null-inodes in this state, $t$ the number of reachable tomb-inodes, $l$ the number of live inodes, $r$ the number of single tips of any length and $d$ the total path length. We denote the state of the Ctrie as $\mathbb{S}_{n,t,l,r,d}$.
We call the state $\mathbb{S}_{0,0,l,r,d}$ the \textbf{clean} state.

\begin{lemma}
Observe all CAS instructions which never cause a consistency change and assume they are successful.
Assuming there was no state change since reading $in$ prior to calling $clean$, the CAS in line \ref{cas_clean} changes the state of the Ctrie from the state $\mathbb{S}_{n,t,l,r,d}$ to either $\mathbb{S}_{n+j,t,l,r-1,d-1}$ where $r > 0$, $j \in \{ 0, 1 \}$ and $d \ge 1$, or to $\mathbb{S}_{n-k,t-j,l,r,d' \le d}$ where $k \ge 0$, $j \ge 0$, $k + j > 0$, $n \ge k$ and $t \ge j$.
Furthermore, the CAS in line \ref{cas_topremove} changes the state of the Ctrie from $\mathbb{S}_{1,0,0,0,1}$ to $\mathbb{S}_{0,0,0,0,0}$.
The CAS in line \ref{cas_toplookup} changes the state from $\mathbb{S}_{1,0,0,0,1}$ to $\mathbb{S}_{0,0,0,0,0}$.
The CAS in line \ref{cas_tomb} changes the state from $\mathbb{S}_{n,t,l,r,d}$ to either $\mathbb{S}_{n+j,t,l,r-1,d-j}$ where $r > 0$, $j \in \{ 0, 1 \}$ and $d \ge j$, or to $\mathbb{S}_{n-k,t,l,r,d' \le d}$ where $k > 0$ and $n \ge k$.
The CAS in line \ref{cas_contractnull} changes the state from $\mathbb{S}_{n,t,l,r,d}$ to $\mathbb{S}_{n-1,t,l,r+j,d-1}$ where $n > 0$ and $j \ge 0$.
The CAS in line \ref{cas_contractsingle} changes the state from $\mathbb{S}_{n,t,l,r}$ to $\mathbb{S}_{n,t-1,l,r+j,d-1}$ where $j \ge 0$.
\end{lemma}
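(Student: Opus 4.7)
The proof will be a case analysis over the seven CAS instructions listed. For each, I first identify from the code which field is overwritten and with what value, then use invariants INV1--INV3 together with the earlier nonliveness lemma to enumerate the possible shapes of the pre-CAS value. Comparing the reachable substructure before and after the CAS yields the change in each coordinate of $\mathbb{S}_{n,t,l,r,d}$; summing over the shape cases gives the disjunctive transition claimed.

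The four ``short'' CASes admit essentially one-paragraph arguments. The root CASes \ref{cas_topremove} and \ref{cas_toplookup} unlink the unique reachable null-inode, reducing $\mathbb{S}_{1,0,0,0,1}$ to $\mathbb{S}_{0,0,0,0,0}$. In \ref{cas_contractnull}, the parent cnode loses exactly one slot (the null-inode $i$), so $n$ drops by one, $d$ drops by one, and $r$ may rise by one if the parent thereby becomes a single-entry cnode forming a new single tip. The CAS \ref{cas_contractsingle} behaves symmetrically on a tomb-inode, replacing it in place with its resurrected snode, hence $t$ drops by one, $d$ drops by one, and $r$ may grow by $j\in\{0,1\}$ by the same mechanism.

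For the compound CASes \ref{cas_clean} and \ref{cas_tomb} I would dispatch on the output of \texttt{toCompressed} and \texttt{toWeakTombed} respectively, relying on the assumption that the CAS is not a no-op (so \texttt{mwt}${}\neq{}$\texttt{m}, and analogously for \texttt{toCompressed}). For \texttt{toCompressed}, the special single-tomb-inode branch turns the surrounding inode into a tomb-inode and destroys one single tip, giving the first disjunct with $r-1$ and $d-1$. The general branch filters $k$ null-inodes and resurrects $j$ tomb-inodes with $k+j\ge 1$, leaving the live inodes and the ambient single-tip structure intact, giving the second disjunct. For \texttt{toWeakTombed} the four branches split analogously: the length-$0$ and length-$1$-singleton branches collapse the cnode to null or a tomb snode (first disjunct, with $j\in\{0,1\}$ in $d-j$ tracking the two sub-cases); the length-$1$-cnode and length-$>1$ branches only filter null-inodes, matching the second disjunct.

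The main obstacle is the bookkeeping for $r$, since a single tip is a structural feature spanning multiple levels of the trie: a single CAS can simultaneously destroy a tip inside the modified subtree and create a new one at the parent level when a cnode drops from two entries to one. I would first give a precise definition of a single tip (a maximal chain of single-entry cnodes terminating at a leaf or nonlive inode), then show that tips disjoint from the modified subtree are unchanged and that those within it transform exactly as the disjuncts prescribe. Once $r$ is pinned down, the updates to $n$, $t$, $l$, and $d$ follow from straightforward counting of null-inode and tomb-inode children of the old versus new main value and the path lengths passing through them.
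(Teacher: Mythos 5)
Your proposal matches the paper's proof in approach: a per-CAS case analysis that dispatches on the output of \texttt{toCompressed} and \texttt{toWeakTombed} (via the earlier compression/weak-tombing lemma) and counts the resulting deltas in each coordinate, with the remaining CAS instructions handled by direct inspection. If anything you are more explicit than the paper, which treats only line \ref{cas_clean} in detail and dismisses the rest as trivial or similar; the one small omission is that for \texttt{toCompressed} the $j=1$ sub-case of the first disjunct ($n+j$) arises from the branch that returns \texttt{null} and thereby turns the enclosing inode into a null-inode, not from the single-tomb-inode branch you cite.
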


\begin{lemma}
If the Ctrie is in a clean state and $n$ threads are executing operations on it, then some thread will execute a successful CAS resulting in a consistency change after a finite number of execution steps.
\end{lemma}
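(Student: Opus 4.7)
The plan is to argue by contradiction: assume that, starting from a clean state at time $t_0$, no thread completes a successful consistency-changing CAS for an unbounded number of steps, and derive a contradiction by showing that in such a hypothetical execution no CAS of any kind can succeed, while some thread nevertheless reaches a consistency-changing CAS in finite time.

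The first step is a structural observation about the clean state $\mathbb{S}_{0,0,l,r,d}$: it contains no null-inodes and no tomb-inodes, so every $in.main$ reached during a trie traversal is a CNode. Consequently no thread ever enters the ``tombed snode or null'' branch of \texttt{ilookup}, \texttt{iinsert}, or \texttt{iremove}, and in particular no thread ever invokes \texttt{clean}. I then trace the call structure of each remaining non-consistency-changing CAS and show it is unreachable while the hypothesis holds. The CAS at line \ref{cas_clean} requires a prior call to \texttt{clean}, which has just been ruled out. The CAS at line \ref{cas_tomb} inside \texttt{tombCompress}, and those at lines \ref{cas_contractnull} and \ref{cas_contractsingle} inside \texttt{contractParent}, are reached from \texttt{iremove} only after a successful \texttt{cas\_remove}, which is itself a consistency change; hence none of these can fire either. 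Finally, the CAS at lines \ref{cas_topremove} and \ref{cas_toplookup} fire only when \texttt{isNullInode(root)} holds, which is false in every clean state and remains so until a CAS installs a null-inode---an event that, from a clean start, requires a consistency-changing CAS.

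Putting these facts together, under the contradictory hypothesis no CAS of any kind ever succeeds after $t_0$, and by the earlier lemma that a failing CAS implies a state change has occurred since the corresponding read, no attempted CAS can fail either. Now pick any thread $T$ whose operation is an \texttt{insert} or a \texttt{remove} of a key currently in the trie. By the earlier lemma on the finite number of execution steps between consecutive CAS instructions, and because the clean trie has bounded depth (finite $d$), $T$ reaches one of the consistency-changing CAS at lines \ref{cas_insertcn}, \ref{cas_insertsn}, \ref{cas_insertin}, or \ref{cas_remove} in finitely many steps. Since no state change has occurred between $T$'s last read of $in.main$ and this CAS, the CAS succeeds, contradicting the hypothesis.

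The main obstacle is the exhaustive case analysis in the second step: one must verify that every non-consistency-changing CAS lies on an execution path whose preconditions cannot be established from a clean configuration without a prior successful consistency-changing CAS, and that no such CAS is enabled ``spontaneously'' by the clean state itself. A minor subtlety is the boundary case in which none of the $n$ threads performs an operation that would naturally modify the trie (pure lookups and removes of absent keys); these complete in finite time with no CAS at all, so the statement is understood as applying when at least one thread's operation attempts a modifying CAS, which is the setting of interest in the overall lock-freedom argument.
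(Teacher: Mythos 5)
Your overall strategy (proof by contradiction plus an exhaustive reachability analysis of the non-consistency-changing CAS sites) is a different route from the paper, which argues directly: it first disposes of the threads already inside $clean$ or the cleanup phase of $remove$, then observes that once those threads have finished or restarted, the first CAS to be attempted must be a consistency-changing one and, being first, must succeed. The difference matters, because your version has a genuine gap precisely where the paper spends the first half of its proof. The lemma's hypothesis is that $n$ threads \emph{are executing} operations when the Ctrie reaches the clean state at some time $t_0$; nothing forces those operations to have started after $t_0$. A thread may already be inside $clean$, $tombCompress$ or $contractParent$ at $t_0$, having performed its read in line \ref{read_clean}, \ref{read_tomb} or \ref{read_contract} (or its successful CAS in line \ref{cas_remove}) \emph{before} $t_0$. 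Your claims that ``no thread ever invokes \texttt{clean}'', that the CAS in line \ref{cas_tomb} ``requires a prior successful \texttt{cas\_remove}'' under the contradiction hypothesis, and hence that ``no CAS of any kind ever succeeds'' only rule out call chains initiated after $t_0$; they say nothing about these in-flight threads, which can still attempt -- and, since their reads are stale, fail -- non-consistency-changing CAS instructions after $t_0$. This also falsifies your auxiliary claim that ``no attempted CAS can fail either'': Lemma \ref{l-failedcas} only tells you a state change occurred since the \emph{corresponding read}, and for an in-flight thread that read (and the state change) may predate $t_0$, so a post-$t_0$ failure is entirely consistent with your hypothesis.

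The gap is repairable but not cosmetically: you must argue, as the paper does, that each such in-flight thread performs at most one more (unsuccessful) CAS and then completes or restarts in finitely many steps, after which its subsequent reads are post-$t_0$ and your reachability analysis applies to it. Note also that a clean state $\mathbb{S}_{0,0,l,r,d}$ may have $r>0$, so ``clean'' does not by itself make every pending cleanup CAS harmless; the argument has to go through the fact that the detected nonlive inode is gone and hence the expected value of the CAS has changed. Your closing caveat about executions consisting only of lookups and removals of absent keys is a fair observation (the paper's own proof is silent on it too), and your final step for a thread whose reads all occur after $t_0$ is sound; the missing piece is solely the treatment of operations already in progress at $t_0$.
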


\begin{theorem}[Lock-freedom]
Ctrie operations are lock-free.
\end{theorem}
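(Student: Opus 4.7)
The plan is to argue by contradiction: suppose there is an infinite execution in which no operation ever completes. Then no successful consistency-changing CAS occurs at any of lines \ref{cas_insertcn}, \ref{cas_insertsn}, \ref{cas_insertin}, \ref{cas_remove}, or \ref{cas_topinsert}, because by Safety and Linearizability each such success would complete its enclosing insert or remove and violate the hypothesis. Consequently every successful CAS in the execution is one of the non-consistency-changing CASes catalogued by the preceding state-transition lemma.

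The key tool is a well-founded potential on Ctrie configurations. I would define $\Phi(\mathbb{S}_{n,t,l,r,d}) = (d,\, n+t,\, r)$, ordered lexicographically on $\mathbb{N}^3$, and verify by case analysis on the state-transition lemma that every non-consistency-changing successful CAS strictly decreases $\Phi$. The \texttt{clean}, \texttt{contractParent}, and top-level root-nulling CASes (lines \ref{cas_clean}, \ref{cas_contractnull}, \ref{cas_contractsingle}, \ref{cas_topremove}, \ref{cas_toplookup}) each decrease $d$. The first branch of \texttt{tombCompress} (line \ref{cas_tomb}) decreases $d$ when $j=1$ and, when $j=0$, leaves $d$ and $n+t$ unchanged while decreasing $r$. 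The second branches of \texttt{clean} and \texttt{tombCompress} either decrease $d$ or, when $d$ is unchanged, strictly decrease $n+t$ because $k+j>0$ or $k>0$ respectively. Since $(\mathbb{N}^3, <_{\mathrm{lex}})$ is well-founded, only finitely many successful CASes can occur in the entire execution.

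To close the contradiction, let $T$ be a thread taking infinitely many execution steps, which exists because the execution is assumed infinite. By the finite-steps-between-CASes lemma, $T$ attempts infinitely many CASes, so after the last successful CAS of the entire execution every subsequent attempt of $T$ must fail. But the CAS failure lemma asserts that every failing non-consistency-changing CAS implies a state change between its preceding read and its attempt, and under our hypothesis that state change can only be caused by another successful non-consistency-changing CAS, contradicting the finitude established above. Failed consistency-changing CASes are handled by the same observation: a CAS fails only if the target word was mutated by some other successful CAS, which under our hypothesis must again be non-consistency-changing. This contradiction establishes lock-freedom.

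The main obstacle is the case analysis establishing monotone decrease of $\Phi$; the sub-cases where $d$ is only non-increasing (the $j=0$ branch of \texttt{tombCompress} and the $d' \le d$ alternatives of \texttt{clean} and \texttt{tombCompress}) require particular care to confirm that the secondary component $n+t$ or the tertiary component $r$ compensates. A secondary concern is ensuring that every CAS actually executed by a non-terminating operation, including the top-level root CASes and the post-remove compression calls, is covered by the state-transition and CAS-failure lemmas, so that no Ctrie mutation can escape the potential argument.
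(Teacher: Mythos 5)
Your proof is correct and rests on exactly the same machinery as the paper's: the state abstraction $\mathbb{S}_{n,t,l,r,d}$, the transition catalogue of lemma \ref{l-statechanges}, the bound on steps between CASes (lemma \ref{l-betweencas}), and the observation that a failed CAS witnesses an intervening state change (lemma \ref{l-failedcas}). There are two minor but genuine differences in how you assemble these pieces. First, you package termination as a single lexicographic potential $(d,\,n+t,\,r)$, where the paper exhausts $d$, then $r$, then $n$ in sequence; your version is actually the tidier one, since the paper's claim that no transition increases $r$ is literally false (line \ref{cas_contractnull} can increase $r$ by $j\ge 0$) and only becomes true once the $d$-decreasing transitions are spent --- a bookkeeping issue the lexicographic order absorbs automatically, and your check of each case against lemma \ref{l-statechanges} goes through. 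Second, you close the contradiction by counting: some thread takes infinitely many steps, hence attempts infinitely many CASes, almost all of which must fail, and each failure demands a successful CAS by another thread, of which there are only finitely many. The paper instead descends to the clean state $\mathbb{S}_{0,0,l,r,d}$ and invokes a separate lemma (\ref{l-cleanstate}) asserting that from a clean state the next successful CAS is a consistency change; your route dispenses with that lemma, at the cost of losing the (mildly informative) characterization of the clean state as the configuration from which progress is guaranteed. Both arguments share the same soft spot: identifying ``a consistency-changing CAS succeeds'' with ``an operation completes'' (for $remove$ there remains the bounded cleanup phase, and in principle a thread could be starved after its successful CAS). Since the paper makes precisely the same identification (``we assume there has been no state change which is a consistency change, otherwise that would mean that some operation was completed''), you introduce no gap it does not already have.
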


\section{Experiments}

We show benchmark results in Fig. \ref{f-4xi7}. All the measurements were performed on a quad-core 2.67 GHz i7 processor with hyperthreading. We followed established performance measurement methodologies \cite{perfeval}. We compare the performance of Ctries against that of \verb=ConcurrentHashMap= and \verb=ConcurrentSkipListMap= \cite{dougleahome} \cite{conchashtable-maged} data structures from the Java standard library.

In the first experiment, we insert a total of $N$ elements into the data structures. The insertion is divided equally among $P$ threads, where $P$ ranges from $1$ to $8$. The results are shown in Fig. \ref{f-4xi7}A-D. Ctries outperform concurrent skip lists for $P=1$ (Fig. \ref{f-4xi7}A). We argue that this is due to a fewer number of indirections in the Ctrie data structure. A concurrent skip list roughly corresponds to a balanced binary tree which has a branching factor $2$. Ctries normally have a branching factor $32$, thus having a much lower depth. A lower depth means less indirections and consequently fewer cache misses when searching the Ctrie.

We can also see that the Ctrie sometimes outperforms a concurrent hash table for $P=1$. The reason is that the hash table has a fixed size and is resized once the load factor is reached -- a new table is allocated and all the elements from the previous hash table have to be copied into the new hash table. More importantly, this implementation uses a global write lock during the resize phase -- other threads adding new elements into the table have to wait until the resizal completes. This problem is much more apparent in Fig. \ref{f-4xi7}B where $P=8$. Fig. \ref{f-4xi7}C,D show how the insertion scales for the number of elements $N=200k$ and $N=1M$, respectively. Due to the use of hyperthreading on the i7, we do not get significant speedups when $P > 4$ for these data structures.

We next measure the performance for the remove operation (Fig. \ref{f-4xi7}E-H). Each data structure starts with $N$ elements and then emptied concurrently by $P$ threads. The keys being removed are divided equally among the threads. For $P=1$ Ctries are clearly outperformed by both other data structures. However, it should be noted that concurrent hash table does not shrink once the number of keys becomes much lower than the table size. This is space-inefficient -- a hash table contains many elements at some point during the runtime of the application will continue to use the memory it does not need until the application ends. The slower Ctrie performance seen in Fig. \ref{f-4xi7}E for $P=1$ is attributed to the additional work the remove operation does to keep the Ctrie compact. However, Fig. \ref{f-4xi7}F shows that the Ctrie remove operation scales well for $P=8$, as it outperforms both skip list and hash table removals. This is also apparent in Fig. \ref{f-4xi7}G,H.

In the next experiment, we populate all the data structures with $N$ elements and then do a lookup for every element once. The set of elements to be looked up is divided equally among $P$ threads. From Fig. \ref{f-4xi7}I-L it is apparent that concurrent hash tables have a much more efficient lookups than other data structures. This is not surprising since they are a flat data structure -- a lookup typically consists of a single read in the table, possibly followed by traversing the collision chain within the bucket. Although a Ctrie lookup outperforms a concurrent skip list when $P=8$, it still has to traverse more indirections than a hash table.

Finally, we do a series of benchmarks with both lookups and insertions to determine the percentage of lookups for which the concurrent hash table performance equals that of concurrent tries. Our test inserts new elements into the data structures using $P$ threads. A total of $N$ elements are inserted. After each insert, a lookup for a random element is performed $r$ times, where $r$ is the ratio of lookups per insertion. Concurrent skip lists scaled well in these tests but had low absolute performance, so they are excluded from the graphs for clarity. When using $P=2$ threads, the ratio where the running time is equal for both concurrent hash tables and concurrent tries is $r=2$. When using $P=4$ threads this ratio is $r=5$ and for $P=8$ the ratio is $r=9$. As the number of threads increases, more opportunity for parallelism is lost during the resizal phase in concurrent hash tables, hence the ratio increases. This is shown in Fig. \ref{f-4xi7-insert-lookup}A-C. In the last benchmark (Fig. \ref{f-4xi7-insert-lookup}D) we preallocate the array for the concurrent hash table to avoid resizal phases -- in this case the hash table outperforms the concurrent trie. The performance gap decreases as the number of threads approaches $P=8$. The downside is that a large amount of memory has to be used for the hash table and the size needs to be known in advance.

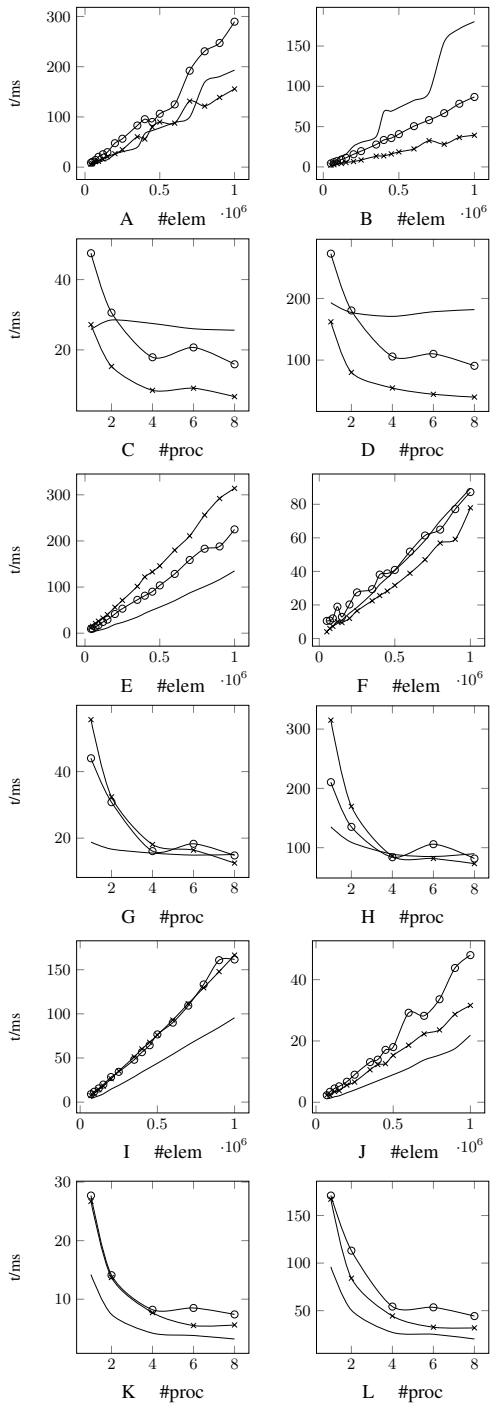
\begin{figure}[h]

\begin{center}
\begin{tabular}{ c c }
% -------------------------------------------------------------------------
\begin{tikzpicture}[scale=0.67]

\begin{axis}[xlabel=\large A \mbox{ } \mbox{ } \#elem,ylabel=t/ms,height=5cm,width=5cm]

\addplot[smooth]
plot coordinates {
  (40000, 4)
  (50000, 5.5)
  (70000, 6.9)
  (90000, 8.5)
  (120000, 12.7)
  (150000, 14.9)
  (200000, 25.7)
  (250000, 30.7)
  (350000, 39.6)
  (400000, 66.7)
  (450000, 72.8)
  (500000, 78)
  (600000, 89)
  (700000, 99.6)
  (800000, 168.9)
  (900000, 180.3)
  (1000000, 193.2)
};
%\addlegendentry{CHM}

\addplot[smooth,mark=o]
plot coordinates {
  (40000, 8.9)
  (50000, 10.7)
  (70000, 14.4)
  (90000, 20.6)
  (120000, 25.7)
  (150000, 29.9)
  (200000, 47.7)
  (250000, 56.6)
  (350000, 82.9)
  (400000, 95.5)
  (450000, 90)
  (500000, 106.3)
  (600000, 124.9)
  (700000, 192)
  (800000, 230.7)
  (900000, 247.2)
  (1000000, 289.6)
};
%\addlegendentry{SkipList}

\addplot[smooth,mark=x]
plot coordinates {
  (40000, 5.9)
  (50000, 7.2)
  (70000, 10.9)
  (90000, 11.8)
  (120000, 16)
  (150000, 20.8)
  (200000, 26.8)
  (250000, 34.8)
  (350000, 60.4)
  (400000, 56)
  (450000, 79.6)
  (500000, 89.9)
  (600000, 87.6)
  (700000, 131.4)
  (800000, 121.5)
  (900000, 138.8)
  (1000000, 155.7)
};
%\addlegendentry{Ctrie}

\end{axis}

\end{tikzpicture}

&

\begin{tikzpicture}[scale=0.67]

\begin{axis}[xlabel=\large B \mbox{ } \mbox{ } \#elem,height=5cm,width=5cm]

\addplot[smooth]
plot coordinates {
  (50000, 6.3)
  (70000, 7.5)
  (90000, 8.7)
  (120000, 13.3)
  (150000, 15.7)
  (200000, 25.7)
  (250000, 30.5)
  (350000, 37.7)
  (400000, 68.3)
  (450000, 69.1)
  (500000, 73.6)
  (600000, 82.6)
  (700000, 92.2)
  (800000, 155.1)
  (900000, 171.4)
  (1000000, 180.2)
};
%\addlegendentry{CHM}

\addplot[smooth,mark=o]
plot coordinates {
  (50000, 4.3)
  (70000, 5.9)
  (90000, 7.2)
  (120000, 9.2)
  (150000, 11.3)
  (200000, 15.7)
  (250000, 19.6)
  (350000, 27.8)
  (400000, 33.4)
  (450000, 35.8)
  (500000, 40.7)
  (600000, 50.6)
  (700000, 58.1)
  (800000, 66.7)
  (900000, 78.4)
  (1000000, 86.8)
};
%\addlegendentry{SkipList}

\addplot[smooth,mark=x]
plot coordinates {
  (50000, 2.2)
  (70000, 2.8)
  (90000, 4.3)
  (120000, 4.3)
  (150000, 5.5)
  (200000, 6.7)
  (250000, 8.6)
  (350000, 13.3)
  (400000, 13.7)
  (450000, 16.1)
  (500000, 18.6)
  (600000, 22.4)
  (700000, 32.6)
  (800000, 28.2)
  (900000, 36.6)
  (1000000, 39.3)
};
%\addlegendentry{Ctrie}

\end{axis}

\end{tikzpicture}

\\

\begin{tikzpicture}[scale=0.67]

\begin{axis}[xlabel=\large C \mbox{ } \mbox{ } \#proc,ylabel=t/ms,height=5cm,width=5cm]

\addplot[smooth]
plot coordinates {
  (1, 25.8)
  (2, 28.5)
  (4, 27.5)
  (6, 26)
  (8, 25.6)
};
%\addlegendentry{CHM}

\addplot[smooth,mark=o]
plot coordinates {
  (1, 47.5)
  (2, 30.6)
  (4, 17.9)
  (6, 20.7)
  (8, 15.9)
};
%\addlegendentry{SkipList}

\addplot[smooth,mark=x]
plot coordinates {
  (1, 27.2)
  (2, 15.3)
  (4, 8.5)
  (6, 9.1)
  (8, 6.7)
};
%\addlegendentry{Ctrie}

\end{axis}

\end{tikzpicture}

&

\begin{tikzpicture}[scale=0.67]

\begin{axis}[xlabel=\large D \mbox{ } \mbox{ } \#proc,height=5cm,width=5cm]

\addplot[smooth]
plot coordinates {
  (1, 193.1)
  (2, 177.2)
  (4, 170.9)
  (6, 178.5)
  (8, 181.9)
};
%\addlegendentry{CHM}

\addplot[smooth,mark=o]
plot coordinates {
  (1, 272.7)
  (2, 180.5)
  (4, 105.8)
  (6, 110.1)
  (8, 90.7)
};
%\addlegendentry{SkipList}

\addplot[smooth,mark=x]
plot coordinates {
  (1, 162.1)
  (2, 80.1)
  (4, 54.8)
  (6, 44.5)
  (8, 40)
};
%\addlegendentry{Ctrie}

\end{axis}

\end{tikzpicture}

\\

% \begin{tikzpicture}[scale=0.53]
%   \draw[-](-6.5, 0.0)--(-5.6, 0.0);
%   \node()         at (-2.7, 0.0) []        {\small{ConcurrentHashMap}};
%   \draw[-](-6.5,-0.5)--(-5.6,-0.5);
%   \node()         at (-6.0,-0.5) []        {\small{$\circ$}};
%   \node()         at (-2.6,-0.5) []        {\small{ConcurrentSkipListMap}};
%   \draw[-](-6.5,-1.0)--(-5.6,-1.0);
%   \node()         at (-6.0,-1.0) []        {$\times$};
%   \node()         at (-2.4,-1.0) []        {\small{Ctrie}};
% \end{tikzpicture}

\begin{tikzpicture}[scale=0.67]

\begin{axis}[xlabel=\large E \mbox{ } \mbox{ } \#elem,ylabel=t/ms,height=5cm,width=5cm]

\addplot[smooth]
plot coordinates {
  (40000, 3.2)
  (50000, 3)
  (70000, 4.5)
  (90000, 6.1)
  (120000, 8.5)
  (150000, 11.5)
  (200000, 19.1)
  (250000, 23.4)
  (350000, 35)
  (400000, 43.1)
  (450000, 50.1)
  (500000, 56.6)
  (600000, 70.5)
  (700000, 87.3)
  (800000, 101.1)
  (900000, 116.2)
  (1000000, 135.1)
};
%\addlegendentry{CHM}

\addplot[smooth,mark=o]
plot coordinates {
  (40000, 9.4)
  (50000, 9.7)
  (70000, 13.9)
  (90000, 17.4)
  (120000, 24.3)
  (150000, 29.4)
  (200000, 41.6)
  (250000, 53.2)
  (350000, 72.3)
  (400000, 81.3)
  (450000, 90)
  (500000, 103.6)
  (600000, 128.5)
  (700000, 158.7)
  (800000, 183.1)
  (900000, 188.1)
  (1000000, 225.1)
};
%\addlegendentry{SkipList}

\addplot[smooth,mark=x]
plot coordinates {
  (40000, 11.9)
  (50000, 15)
  (70000, 21)
  (90000, 26)
  (120000, 32.7)
  (150000, 40.3)
  (200000, 56.2)
  (250000, 71.2)
  (350000, 101.5)
  (400000, 122)
  (450000, 133)
  (500000, 146)
  (600000, 180)
  (700000, 211)
  (800000, 256)
  (900000, 292)
  (1000000, 314)
};
%\addlegendentry{Ctrie}

\end{axis}

\end{tikzpicture}

&

\begin{tikzpicture}[scale=0.67]

\begin{axis}[xlabel=\large F \mbox{ } \mbox{ } \#elem,height=5cm,width=5cm]

\addplot[smooth]
plot coordinates {
  (50000, 4.5)
  (70000, 5.7)
  (90000, 7.4)
  (120000, 8.9)
  (150000, 11)
  (200000, 15.1)
  (250000, 18.6)
  (350000, 26.8)
  (400000, 32.2)
  (450000, 36.2)
  (500000, 40.5)
  (600000, 49.2)
  (700000, 58.5)
  (800000, 70.1)
  (900000, 79.7)
  (1000000, 89.5)
};
%\addlegendentry{CHM}

\addplot[smooth,mark=o]
plot coordinates {
  (50000, 10.5)
  (70000, 10.7)
  (90000, 12)
  (120000, 19)
  (150000, 12.9)
  (200000, 20.3)
  (250000, 27.5)
  (350000, 29.4)
  (400000, 38.1)
  (450000, 38.8)
  (500000, 40.9)
  (600000, 51.8)
  (700000, 61.4)
  (800000, 64.9)
  (900000, 77.1)
  (1000000, 87.3)
};
%\addlegendentry{SkipList}

\addplot[smooth,mark=x]
plot coordinates {
  (50000, 4)
  (70000, 6)
  (90000, 7.3)
  (120000, 10.1)
  (150000, 9.5)
  (200000, 12)
  (250000, 16.5)
  (350000, 22.5)
  (400000, 25.6)
  (450000, 28.3)
  (500000, 31.7)
  (600000, 38.9)
  (700000, 47)
  (800000, 56.9)
  (900000, 59.2)
  (1000000, 77.9)
};
%\addlegendentry{Ctrie}

\end{axis}

\end{tikzpicture}

\\

\begin{tikzpicture}[scale=0.67]

\begin{axis}[xlabel=\large G \mbox{ } \mbox{ } \#proc,ylabel=t/ms,height=5cm,width=5cm]

\addplot[smooth]
plot coordinates {
  (1, 18.8)
  (2, 16.7)
  (4, 15.5)
  (6, 14.9)
  (8, 15.1)
};
%\addlegendentry{CHM}

\addplot[smooth,mark=o]
plot coordinates {
  (1, 44)
  (2, 30.8)
  (4, 16.1)
  (6, 18.3)
  (8, 14.8)
};
%\addlegendentry{SkipList}

\addplot[smooth,mark=x]
plot coordinates {
  (1, 55.6)
  (2, 32.4)
  (4, 18.1)
  (6, 16.4)
  (8, 12.5)
};
%\addlegendentry{Ctrie}

\end{axis}

\end{tikzpicture}

&

\begin{tikzpicture}[scale=0.67]

\begin{axis}[xlabel=\large H \mbox{ } \mbox{ } \#proc,height=5cm,width=5cm]

\addplot[smooth]
plot coordinates {
  (1, 135.1)
  (2, 109)
  (4, 89.2)
  (6, 85.1)
  (8, 89.9)
};
%\addlegendentry{CHM}

\addplot[smooth,mark=o]
plot coordinates {
  (1, 210.5)
  (2, 135.1)
  (4, 83.9)
  (6, 105.9)
  (8, 81.7)
};
%\addlegendentry{SkipList}

\addplot[smooth,mark=x]
plot coordinates {
  (1, 315)
  (2, 169.7)
  (4, 85.7)
  (6, 81.8)
  (8, 73.1)
};
%\addlegendentry{Ctrie}

\end{axis}

\end{tikzpicture}

\\

% \begin{tikzpicture}[scale=0.53]
%   \draw[-](-6.5, 0.0)--(-5.6, 0.0);
%   \node()         at (-2.7, 0.0) []        {\small{ConcurrentHashMap}};
%   \draw[-](-6.5,-0.5)--(-5.6,-0.5);
%   \node()         at (-6.0,-0.5) []        {\small{$\circ$}};
%   \node()         at (-2.6,-0.5) []        {\small{ConcurrentSkipListMap}};
%   \draw[-](-6.5,-1.0)--(-5.6,-1.0);
%   \node()         at (-6.0,-1.0) []        {$\times$};
%   \node()         at (-2.4,-1.0) []        {\small{Ctrie}};
% \end{tikzpicture}

\begin{tikzpicture}[scale=0.67]

\begin{axis}[xlabel=\large I \mbox{ } \mbox{ } \#elem,ylabel=t/ms,height=5cm,width=5cm]

\addplot[smooth]
plot coordinates {
  (70000, 4)
  (90000, 5.3)
  (120000, 7.1)
  (150000, 9.3)
  (200000, 14.7)
  (250000, 19)
  (350000, 28.7)
  (400000, 34.1)
  (450000, 39)
  (500000, 44)
  (600000, 54.2)
  (700000, 64.9)
  (800000, 74.9)
  (900000, 84.7)
  (1000000, 95.6)
};
%\addlegendentry{CHM}

\addplot[smooth,mark=o]
plot coordinates {
  (70000, 9.1)
  (90000, 11.7)
  (120000, 15.5)
  (150000, 19.9)
  (200000, 28.4)
  (250000, 34.3)
  (350000, 48)
  (400000, 56.6)
  (450000, 64.4)
  (500000, 76.7)
  (600000, 90)
  (700000, 109.3)
  (800000, 133.3)
  (900000, 160.8)
  (1000000, 161.6)
};
%\addlegendentry{SkipList}

\addplot[smooth,mark=x]
plot coordinates {
  (70000, 7.5)
  (90000, 9.8)
  (120000, 14.1)
  (150000, 18.2)
  (200000, 26.5)
  (250000, 35.2)
  (350000, 51.2)
  (400000, 60.2)
  (450000, 67.8)
  (500000, 76)
  (600000, 93)
  (700000, 111.5)
  (800000, 129.6)
  (900000, 147.8)
  (1000000, 166.5)
};
%\addlegendentry{Ctrie}

\end{axis}

\end{tikzpicture}

&

\begin{tikzpicture}[scale=0.67]

\begin{axis}[xlabel=\large J \mbox{ } \mbox{ } \#elem,height=5cm,width=5cm]

\addplot[smooth]
plot coordinates {
  (70000, 1.2)
  (90000, 1.3)
  (120000, 1.8)
  (150000, 2.1)
  (200000, 3.1)
  (250000, 4)
  (350000, 6)
  (400000, 7)
  (450000, 8)
  (500000, 9)
  (600000, 11.1)
  (700000, 13.8)
  (800000, 15.4)
  (900000, 17.5)
  (1000000, 21.9)
};
%\addlegendentry{CHM}

\addplot[smooth,mark=o]
plot coordinates {
  (70000, 2.3)
  (90000, 3.4)
  (120000, 4.5)
  (150000, 5.2)
  (200000, 6.7)
  (250000, 9)
  (350000, 13.1)
  (400000, 13.9)
  (450000, 17.1)
  (500000, 18)
  (600000, 29.2)
  (700000, 28.2)
  (800000, 33.6)
  (900000, 43.8)
  (1000000, 48)
};
%\addlegendentry{SkipList}

\addplot[smooth,mark=x]
plot coordinates {
  (70000, 2.3)
  (90000, 2.4)
  (120000, 3.5)
  (150000, 3.8)
  (200000, 5.6)
  (250000, 6.6)
  (350000, 10.6)
  (400000, 12.3)
  (450000, 12.6)
  (500000, 15.3)
  (600000, 18.6)
  (700000, 22.3)
  (800000, 23.6)
  (900000, 28.7)
  (1000000, 31.6)
};
%\addlegendentry{Ctrie}

\end{axis}

\end{tikzpicture}

\\

\begin{tikzpicture}[scale=0.67]

\begin{axis}[xlabel=\large K \mbox{ } \mbox{ } \#proc,ylabel=t/ms,height=5cm,width=5cm]

\addplot[smooth]
plot coordinates {
  (1, 14.2)
  (2, 7.4)
  (4, 4.2)
  (6, 3.8)
  (8, 3.2)
};
%\addlegendentry{CHM}

\addplot[smooth,mark=o]
plot coordinates {
  (1, 27.7)
  (2, 14.1)
  (4, 8.2)
  (6, 8.5)
  (8, 7.4)
};
%\addlegendentry{SkipList}

\addplot[smooth,mark=x]
plot coordinates {
  (1, 26.7)
  (2, 13.7)
  (4, 7.7)
  (6, 5.5)
  (8, 5.6)
};
%\addlegendentry{Ctrie}

\end{axis}

\end{tikzpicture}

&

\begin{tikzpicture}[scale=0.67]

\begin{axis}[xlabel=\large L \mbox{ } \mbox{ } \#proc,height=5cm,width=5cm]

\addplot[smooth]
plot coordinates {
  (1, 96)
  (2, 50.5)
  (4, 27.1)
  (6, 25.2)
  (8, 20.2)
};
%\addlegendentry{CHM}

\addplot[smooth,mark=o]
plot coordinates {
  (1, 171)
  (2, 113.1)
  (4, 54.3)
  (6, 53.5)
  (8, 44.3)
};
%\addlegendentry{SkipList}

\addplot[smooth,mark=x]
plot coordinates {
  (1, 167.2)
  (2, 84.1)
  (4, 44.4)
  (6, 32.6)
  (8, 31.9)
};
%\addlegendentry{Ctrie}

\end{axis}

\end{tikzpicture}

\end{tabular}

\end{center}

  \caption{Quad-core i7 microbenchmarks -- $ConcurrentHashMap(-)$, $ConcurrentSkipList(\circ)$, $Ctrie(\times)$: A) $insert$, P=1; B) $insert$, P=8; C) $insert$, N=200k; D) $insert$, N=1M; E) $remove$, P=1; F) $remove$, P=8; G) $remove$, N=200k, H) $remove$, N=1M; I) $lookup$, P=1; J) $lookup$, P=8; K) $lookup$, N=200k; L) $lookup$, N=1M }
  \label{f-4xi7}
\end{figure}

\begin{figure}[h]

\begin{center}

\begin{tabular}{ c c }

\begin{tikzpicture}[scale=0.67]

\begin{axis}[
xlabel=\large A \mbox{ } \mbox{ } \#proc,ylabel=t/ms,height=5cm,width=5cm
]

\addplot[smooth]
plot coordinates {
  (1, 443)
  (2, 300)
  (4, 225)
  (6, 205)
  (8, 191)
};
%\addlegendentry{CHM}

\addplot[smooth,mark=x]
plot coordinates {
  (1, 558)
  (2, 289)
  (4, 156)
  (6, 123)
  (8, 93)
};
%\addlegendentry{Ctrie}

\end{axis}

\end{tikzpicture}

&

\begin{tikzpicture}[scale=0.67]

\begin{axis}[
xlabel=\large B \mbox{ } \mbox{ } \#proc,height=5cm,width=5cm
]

\addplot[smooth]
plot coordinates {
  (1, 843)
  (2, 541)
  (4, 347)
  (6, 313)
  (8, 266)
};
%\addlegendentry{CHM}

\addplot[smooth,mark=x]
plot coordinates {
  (1, 1273)
  (2, 635)
  (4, 337)
  (6, 285)
  (8, 227)
};
%\addlegendentry{Ctrie}

\end{axis}

\end{tikzpicture}

\\

\begin{tikzpicture}[scale=0.67]

\begin{axis}[
xlabel=\large C \mbox{ } \mbox{ } \#proc,ylabel=t/ms,height=5cm,width=5cm
]

\addplot[smooth]
plot coordinates {
  (1, 1357)
  (2, 860)
  (4, 515)
  (6, 440)
  (8, 378)
};
%\addlegendentry{CHM}

\addplot[smooth,mark=x]
plot coordinates {
  (1, 2164)
  (2, 1145)
  (4, 567)
  (6, 480)
  (8, 387)
};
%\addlegendentry{Ctrie}

\end{axis}

\end{tikzpicture}

&

\begin{tikzpicture}[scale=0.67]

\begin{axis}[
xlabel=\large D \mbox{ } \mbox{ } \#proc,height=5cm,width=5cm
]

\addplot[smooth]
plot coordinates {
  (1, 379)
  (2, 221)
  (4, 118)
  (6, 97)
  (8, 80)
};
%\addlegendentry{CHM}

\addplot[smooth,mark=x]
plot coordinates {
  (1, 564)
  (2, 290)
  (4, 156)
  (6, 122)
  (8, 92)
};
%\addlegendentry{Ctrie}

\end{axis}

\end{tikzpicture}

\\

\end{tabular}

% \begin{tikzpicture}[scale=0.53]
%   \draw[-](-6.5, 0.0)--(-5.6, 0.0);
%   \node()         at (-2.7, 0.0) []        {\small{ConcurrentHashMap}};
%   \draw[-](-6.5,-0.5)--(-5.6,-0.5);
%   \node()         at (-6.0,-0.5) []        {\small{$\circ$}};
%   \node()         at (-2.6,-0.5) []        {\small{ConcurrentSkipListMap}};
%   \draw[-](-6.5,-1.0)--(-5.6,-1.0);
%   \node()         at (-6.0,-1.0) []        {$\times$};
%   \node()         at (-2.4,-1.0) []        {\small{Ctrie}};
% \end{tikzpicture}

\end{center}

  \caption{Quad-core i7 microbenchmarks -- A) $insert/lookup$, ratio=1/2, N=1M; B) $insert/lookup$, ratio=1/5, N=1M; C) $insert/lookup$, ratio=1/9, N=1M; D) $insert/lookup$ with preallocated tables, ratio=1/2, N=1M}
  \label{f-4xi7-insert-lookup}
\end{figure}
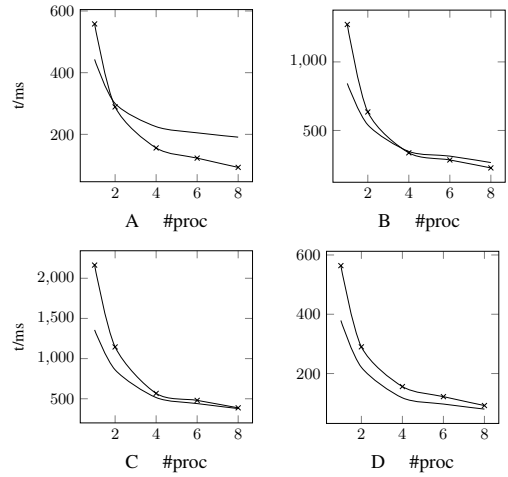

\section{Related work}

Concurrent programming techniques and important results in the area are covered by Shavit and Herlihy \cite{artofmulti08}. An overview of concurrent data structures is given by Moir and Shavit \cite{concdatastructures}. There is a body of research available focusing on concurrent lists, queues and concurrent priority queues \cite{conchashtable-harris} \cite{concpriorqueues} \cite{concqueues}. While linked lists are inefficient as sets or maps because they do not scale well, the latter two do not support the basic operations on sets and maps, so we exclude these from the further discussion and focus on more suitable data structures.

Hash tables are typically resizeable arrays of buckets. Each bucket holds some number of elements which is expected to be constant. The constant number of elements per bucket necessitates resizing the data structure. Sequential hash tables amortize the cost of resizing the table over other operations \cite{introalgs}. While the individual concurrent hash table operations such as insertion or removal can be performed in a lock-free manner as shown by Maged \cite{conchashtable-maged}, resizing is typically implemented with a global lock. Although the cost of resizal is amortized against operations by one thread, this approach does not guarantee horizontal scalability. Lea developed an extensible hash algorithm which allows concurrent searches during the resizing phase, but not concurrent insertions and removals \cite{dougleahome}. Shalev and Shavit propose split-ordered lists which keep a table of hints into a linked list in a way that does not require rearranging the elements of the linked list when resizing \cite{splitordlist}. This approach is quite innovative, but it is unclear how to shrink the hint table if most of the keys are removed, while preserving lock-freedom.

Skip lists are a data structure which stores elements in a linked list. There are multiple levels of linked lists which allow efficient insertions, removals and lookups. Skip lists were originally invented by Pugh \cite{pugh90}. Pugh proposed concurrent skip lists which achieve synchronization through the use of locks \cite{pugh90conc}. Concurrent non-blocking skip lists were later implemented by Lev, Herlihy, Luchangco and Shavit \cite{concskiplist} and Lea \cite{dougleahome}.

Concurrent binary search trees were proposed by Kung and Lehman \cite{kunglehman} -- their implementation uses a constant number of locks at a time which exclude other insertion and removal operations, while lookups can proceed concurrently. Bronson et al. presented a scalable concurrent implementation of an AVL tree based on transactional memory mechanisms which require a fixed number of locks to perform deletions \cite{bronsonavl}. Recently, the first non-blocking implementation of a binary search tree was proposed \cite{nonblockingtrees}.

Tries were originally proposed by Brandais \cite{brandaistries} and Fredkin \cite{fredkintries}. Trie hashing was applied to accessing files stored on the disk by Litwin \cite{litwinhashtrie}. Litwin, Sagiv and Vidyasankar implemented trie hashing in a concurrent setting \cite{litwinconctrie}, however, they did so by using mutual exclusion locks. Hash array mapped trees, or hash tries, are tries for shared-memory proposed by Bagwell \cite{idealhash}.
To our knowledge, there is no nonblocking concurrent implementation of hash tries prior to our work.

\section{Conclusion}

We described a lock-free concurrent implementation of the hash trie data structure. Our implementation supports insertion, remove and lookup operations. It is space-efficient in the sense that it keeps a minimal amount of information in the internal nodes. It is compact in the sense that after all removal operations complete, all paths from the root to a leaf containing a key are as short as possible. Operations are worst-case logarithmic with a low constant factor ($O(\log_{32} n)$). Its performance is comparable to that of the similar concurrent data structures. The data structure grows dynamically -- it uses no locks and there is no resizing phase. We proved that it is linearizable and lock-free.

In the future we plan to extend the algorithm with operations like \textit{move key}, which reassigns a value from one key to another atomically. One research direction is supporting efficient aggregation operations on the keys and/or stored in the Ctrie. One such specific aggregation is the size of the Ctrie -- an operation which might be useful indeed. The notion of having a size kept in one place in the Ctrie might, however, prove detrimental to the idea of distributing Ctrie operations throughout different parts of it in order to avoid contention.

Finally, we plan to develop an efficient lock-free snapshot operation for the concurrent trie which allows traversal of all the keys present in the data structure at the time at which the snapshot was created. One possible approach to doing so is to, roughly speaking, keep a partial history in the indirection nodes. A snapshot would allow traversing (in parallel) the elements of the Ctrie present at one point in time and modifying it during traversal in a way that the changes are visible only once the traversal ends. This might prove an efficient abstraction to express many iterative-style algorithms.

\bibliographystyle{splncs}

% The bibliography should be embedded for final submission.

\clearpage

\pagebreak

\appendix

\setcounter{lemma}{0}
\setcounter{theorem}{0}
\setcounter{corollary}{0}
\setcounter{definition}{0}

\section{Proof of correctness}

\begin{definition}[Basics]
Value $W$ is called the \textbf{branching width}.
An \textbf{inode} $in$ is a node holding a reference $main$ to other nodes.
A \textbf{cnode} $cn$ is a node holding a bitmap $bmp$ and an set of references to other nodes called $array$.
A cnode is \textbf{$k$-way} if $length(cn.array) = k$.
An \textbf{snode} $sn$ is a node holding a key $k$ and a value $v$. An snode can be \textbf{tombed}, denoted by $sn\dagger$, meaning its tomb flag is set.
A reference $cn.arr(r)$ in the $array$ defined as $array(\#(((1 << r) - 1) \odot cn.bmp))$, where $\#( \cdot )$ is the bitcount and $\odot$ is the bitwise AND operation.
Any node $n_{l,p}$ is at \textbf{level} $l$ if there are $l / W$ cnodes on the simple path between itself and the root inode.
\textbf{Hashcode chunk} of a key $k$ at level $l$ is defined as $m(l, k) = (hashcode(k) >> l) \mod 2^W$.
A node at level $0$ has a \textbf{hashcode prefix} $p = \epsilon$, where $\epsilon$ is an empty string.
A node $n$ at level $l+W$ has a hashcode prefix $p = q \cdot r$ if and only if it can be reached from the closest parent cnode $cn_{l,q}$ by following the reference $cn_{l,q}.arr(r)$.
A reference $cn_{l,p}.sub(k)$ is defined as:

\small
\begin{align*}
cn_{l,p}.sub(k) &= 
\left\{
  \begin{array}{ll}
    cn_{l,p}.arr(m(l, k)) & \mbox{if } cn_{l,p}.flg(m(l, k)) \\
    null & \mbox{otherwise} \\
  \end{array}
\right. \\
%cn_{l,p}.flag(l, k) &\Leftrightarrow cn_{l,p}.flag(m(l, k)) \\
cn_{l,p}.flg(r) &\Leftrightarrow cn_{l,p}.bmp \odot (1 \ll r) \neq 0
\end{align*}
\normalsize

\end{definition}

\begin{definition}[Ctrie]
A \textbf{Ctrie} is defined as the reference $root$ to the root of the trie.
A Ctrie \textbf{state} $\mathbb{S}$ is defined as the configuration of nodes reachable from the $root$ by following references in the nodes.
A key is within the configuration if it is in a node reachable from the root.
More formally, the relation $hasKey(in_{l,p}, k)$ on an inode $in$ at the level $l$ with a prefix $p$ and a key $k$ holds if and only if (several relations for readability):

\small
\begin{align*}
&holds(in_{l,p}, k) \Leftrightarrow in_{l,p}.main = sn: SNode \wedge sn.k = k \\
&holds(cn_{l,p}, k) \Leftrightarrow cn_{l,p}.sub(k) = sn: SNode \wedge sn.k = k \\
&hasKey(cn_{l,p}, k) \Leftrightarrow holds(cn_{l,p}, k) \vee \\
&(cn_{l,p}.sub(k)=in_{l+w,p \cdot m(l,k)} \wedge hasKey(in_{l+w,p \cdot m(l,k)}, k)) \\
&hasKey(in_{l,p}, k) \Leftrightarrow holds(in_{l,p}, k) \vee \\
&(in_{l,p}.main = cn_{l,p}: CNode \wedge hasKey(cn_{l,p}, k)) \\
\end{align*}
\normalsize
\end{definition}

\begin{definition}
We define the following invariants for the Ctrie.
\small
\begin{description}
\item[INV1] $inode_{l,p}.main = null | cnode_{l,p} | snode\dagger$
\item[INV2] $\#(cn.bmp) = length(cn.array)$
\item[INV3] $cn_{l,p}.flg(r) \neq 0 \Leftrightarrow cn_{l,p}.arr(r) \in \{sn, in_{l+W,p \cdot r}\}$
\item[INV4] $cn_{l,p}.arr(r) = sn \Leftrightarrow hashcode(sn.k) = p \cdot r \cdot s$
\item[INV5] $in_{l,p}.main = sn\dagger \Leftrightarrow hashcode(sn.k) = p \cdot r$
\end{description}
\normalsize
\end{definition}

\begin{definition}[Validity]
A state $\mathbb{S}$ is \textbf{valid} if and only if the invariants INV1-5 are true in the state $\mathbb{S}$.
\end{definition}

\begin{definition}[Abstract set]
An \textbf{abstract set} $\mathbb{A}$ is a mapping $K \Rightarrow \{\bot, \top\}$ which is true only for those keys which are a part of the abstract set, where $K$ is the set of all keys. An \textbf{empty abstract set} $\varnothing$ is a mapping such that $\forall k, \varnothing(k) = \bot$. Abstract set operations are $insert(k, \mathbb{A}) = \mathbb{A}_1: \forall k'\in\mathbb{A}_1, k'=k \lor k'\in\mathbb{A}$, $lookup(k, \mathbb{A}) = \top \Leftrightarrow k\in\mathbb{A}$ and $remove(k, \mathbb{A}) = \mathbb{A}_1: k \not\in \mathbb{A}_1 \wedge \forall k'\in\mathbb{A}, k \neq k' \Rightarrow k'\in\mathbb{A}$. Operations $insert$ and $remove$ are \textbf{destructive}.
\end{definition}

\begin{definition}[Consistency]
A Ctrie state $\mathbb{S}$ is \textbf{consistent} with an abstract set $\mathbb{A}$ if and only if $k\in\mathbb{A} \Leftrightarrow hasKey(Ctrie, k)$.
A destructive Ctrie operation $op$ is \textbf{consistent} with the corresponding abstract set operation $op'$ if and only if applying $op$ to a state $\mathbb{S}$ consistent with $\mathbb{A}$ changes the state into $\mathbb{S}'$ consistent with an abstract set $\mathbb{A}' = op(k, \mathbb{A})$.
A Ctrie $lookup$ is \textbf{consistent} with the abstract set lookup if and only if it returns the same value as the abstract set $lookup$, given that the state $\mathbb{S}$ is consistent with $\mathbb{A}$.
A \textbf{consistency change} is a change from state $\mathbb{S}$ to state $\mathbb{S}'$ of the Ctrie such that $\mathbb{S}$ is consistent with an abstract set $\mathbb{A}$ and $\mathbb{S}'$ is consistent with an abstract set $\mathbb{A}'$ and $\mathbb{A} \neq \mathbb{A}'$.
\end{definition}

We point out that there are multiple valid states corresponding to the same abstract set.

\begin{theorem}[Safety]
At all times $t$, a Ctrie is in a valid state $\mathbb{S}$, consistent with some abstract set $\mathbb{A}$. All Ctrie operations are consistent with the semantics of the abstract set $\mathbb{A}$.
\end{theorem}

First, it is trivial to see that if the state $\mathbb{S}$ is valid, then the Ctrie is also consistent with some abstract set $\mathbb{A}$. Second, we prove the theorem using structural induction. As induction base, we take the empty Ctrie which is valid and consistent by definition. The induction hypothesis is that the Ctrie is valid and consistent at some time $t$. We use the hypothesis implicitly from this point on. Before proving the induction step, we introduce additional definitions and lemmas.

\begin{definition}
A node is \textbf{live} if and only if it is a cnode, a non-tombed snode or an inode whose $main$ reference points to a cnode.
A \textbf{nonlive} node is a node which is not live.
A \textbf{null-inode} is an inode with a $main$ set to $null$.
A \textbf{tomb-inode} is an inode with a $main$ set to a tombed snode $sn\dagger$.
A node is a \textbf{singleton} if it is an snode or an inode $in$ such that $in.main = sn\dagger$, where $sn\dagger$ is tombed.
\end{definition}

\begin{lemma}[End of life]\label{l-nonlive}
If an inode $in$ is either a null-inode or a tomb-inode at some time $t_0$, then $\forall t > t_0$ $in.main$ is not written.
\end{lemma}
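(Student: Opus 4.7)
The plan is to argue by case analysis over every pseudocode site that can possibly mutate an inode's \verb=main= field, and to show that once \verb=main= holds $null$ or a tombed snode, every such site either cannot reach the CAS or, if it does, performs a CAS whose expected value is a \verb=CNode=, hence necessarily fails. Since CAS is the only atomic primitive that can write to \verb=main= (ordinary assignments only occur in the local construction of fresh nodes before they are published), showing that no CAS with a non-\verb=CNode= expected value targets \verb=in.main= suffices.

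First I would enumerate all CAS instructions whose first argument is of the form \verb=in.main= for some previously reachable inode \verb=in=. Inspecting Figures \ref{f-basic1}, \ref{f-basic2} and \ref{f-compress}, these are exactly the CASes in lines \ref{cas_insertcn}, \ref{cas_insertsn}, \ref{cas_insertin}, \ref{cas_remove}, \ref{cas_clean} and \ref{cas_tomb}. For each, I would trace back to where the expected value is bound: in \verb=iinsert= and \verb=iremove= the expected value is the local variable \verb=cn= obtained from the \verb=READ= in lines \ref{read_insert}/\ref{read_remove} and subsequently matched against the pattern \verb=cn: CNode=, so the expected value is necessarily a \verb=CNode=; in \verb=clean= and \verb=tombCompress= the expected value \verb=m= is likewise obtained from the \verb=READ= in lines \ref{read_clean}/\ref{read_tomb} and then explicitly guarded by the check \verb=m \(\in\) CNode= (resp. \verb=m \(\not\in\) CNode return=), again forcing \verb=m= to be a \verb=CNode=. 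I would also note that CASes on \verb=parent.main= in \verb=contractParent= (lines \ref{cas_contractnull}, \ref{cas_contractsingle}) and the top-level CAS on \verb=root= are not writes to an inode's \verb=main= viewed from the role of \verb=in=; they are covered by the same argument when applied to the corresponding parent inode, because their expected value is also obtained from a \verb=READ= followed by a \verb=cn: CNode= pattern match.

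Next, I would combine these observations with the induction hypothesis. Suppose $in.main$ is set to $null$ or $sn\dagger$ at some time $t_0$. By INV1 (which we are allowed to assume as a stated invariant of the algorithm at this stage, and which is independently established in a subsequent lemma), only a \verb=CNode=, a tombed snode, or $null$ can ever appear in \verb=main=. At any later time $t > t_0$, a CAS instruction that attempts to write \verb=in.main= must, by the case analysis above, present a \verb=CNode= as its expected value. Since $null$ and $sn\dagger$ are disjoint from the \verb=CNode= type, the current value of \verb=in.main= cannot equal this expected value, so the CAS fails and performs no write. Hence $in.main$ is not written at any $t > t_0$.

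The main obstacle I expect is simply being exhaustive: I must be sure that the enumeration of CAS sites is complete and that in each case the expected value is provably a \verb=CNode=, not merely typically so. In particular, care is needed in \verb=tombCompress=, where the recursive tail call after a failed CAS re-reads \verb=i.main= and would exit early via \verb=if m \(\not\in\) CNode return \(\bot\)= before ever reaching the CAS, which is the subtle guarantee that rules out a CAS whose expected value is a tombed snode or $null$. Once that case is handled, the rest is routine.
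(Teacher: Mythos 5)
Your proposal is correct and follows essentially the same argument as the paper: every CAS targeting $in.main$ has an expected value that was pattern-matched or explicitly checked to be a \verb=CNode=, so once $in.main$ holds $null$ or $sn\dagger$ no such CAS can succeed, and hence no write occurs. Your version is just a more exhaustive spelling-out of the site enumeration (the appeal to INV1 is not actually needed for the conclusion).
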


\begin{proof}
For any inode $in$ which becomes reachable in the Ctrie at some time $t$, all assignments to $in.main$ at any time $t_0 > t$ occur in a CAS instruction -- we only have to inspect these writes.

Every CAS instruction on $in.main$ is preceeded by a check that the expected value of $in.main$ is a cnode. From the properties of CAS, it follows that if the current value is either $null$ or a tombed snode, the CAS will not succeed. Therefore, neither null-inodes nor tomb-inodes can be written to $in.main$.
\end{proof}

\begin{lemma}\label{l-immut}
Cnodes and snodes are immutable -- once created, they no longer change the value of their fields.
\end{lemma}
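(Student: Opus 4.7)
The plan is a straightforward code-inspection argument: I will enumerate every site in the pseudocode where a field of a cnode or snode could conceivably be written, and show that each such site either (i) occurs in a constructor at the moment the node is created, before any other thread can observe it, or (ii) does not in fact touch a cnode/snode field at all.

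First I would list the fields explicitly. By Fig.~\ref{f-types}, a cnode has exactly two fields, $bmp$ and $array$, and an snode has exactly three fields, $k$, $v$, and $tomb$. I would then scan the procedures in Figs.~\ref{f-basic1}, \ref{f-basic2} and \ref{f-compress} and tabulate every assignment. Assignments of the form \texttt{scn = CNode(\ldots)}, \texttt{ncn = CNode(\ldots)}, \texttt{nsn = SNode(\ldots)}, and the constructions inside \verb=toCompressed= and \verb=toWeakTombed= all happen inside fresh constructor invocations on brand new heap objects; these are the only places the fields are set. Operations written in a functional style on arrays and cnodes, namely \texttt{cn.array.inserted}, \texttt{cn.array.removed}, \texttt{cn.array.map}, \texttt{cn.updated}, \texttt{cn.removed}, \texttt{cn.filtered}, have to be interpreted as producing a new array (and the enclosing \verb=CNode(\ldots)= call then wraps it into a new cnode); I would state this as a standing convention of the pseudocode. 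The \texttt{tombed} and \texttt{untombed} accessors used in the remove/contract paths are likewise understood to return a freshly allocated tombed or untombed snode rather than flipping the \verb=tomb= flag in place.

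Second, I would inspect every atomic write. The only \verb=CAS= instructions in the algorithm target either the global \verb=root= reference (lines \ref{cas_topinsert}, \ref{cas_topremove}, \ref{cas_toplookup}) or the \verb=main= field of some \verb=INode= (lines \ref{cas_insertcn}, \ref{cas_insertsn}, \ref{cas_insertin}, \ref{cas_remove}, \ref{cas_clean}, \ref{cas_tomb}, \ref{cas_contractnull}, \ref{cas_contractsingle}). Neither \verb=root= nor \verb=INode.main= is a field of a cnode or snode; by Lemma~1 the latter may itself subsequently become fixed, but in any case a write to it does not alter the interior of any cnode or snode that it previously or currently points to. Thus no atomic write mutates a cnode or snode field.

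Putting the two observations together, for any cnode or snode $n$ and any field $f$ of $n$, the only assignment to $n.f$ occurs inside the constructor call that created $n$, which is complete before the reference to $n$ is published to any shared location (via a \verb=CAS= on \verb=root= or on some \verb=INode.main=). Hence from the moment $n$ first becomes reachable by any other thread, no subsequent instruction writes $n.f$, which is the claim. The only real subtlety, and what I would expect to be the main obstacle, is being sure the inspection is exhaustive: one has to commit to the convention that all ``functional update'' primitives (\texttt{inserted}, \texttt{updated}, \texttt{removed}, \texttt{filtered}, \texttt{map}, \texttt{tombed}, \texttt{untombed}) allocate fresh structures and never mutate their receivers, since a mutating implementation would break the lemma at exactly this point.
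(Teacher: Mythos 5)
Your proposal is correct and takes essentially the same approach as the paper, which disposes of this lemma by a one-line ``trivial inspection of the pseudocode'' noting that $k$, $v$, $tomb$, $bmp$ and $array$ are never assigned after construction. Your version is just a more careful and explicit execution of that same inspection, including the (correct and worth stating) convention that the functional-update primitives such as \texttt{updated}, \texttt{removed}, \texttt{filtered} and \texttt{tombed} allocate fresh nodes rather than mutating their receivers.
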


\begin{proof}
Trivial inspection of the pseudocode reveals that $k$, $v$, $tomb$, $bmp$ and $array$ are never assigned a value after an snode or a cnode was created.
\end{proof}

\begin{definition}
A \textbf{compression} $ccn$ of a cnode $cn$ seen at some time $t_0$ is a node such that:
\begin{itemize}
\item $ccn = sn\dagger$ if $length(cn.array) = 1$ and $cn.array(0).main = sn\dagger$ at $t_0$
\item $ccn = null$ if $\forall i, cn.array(i).main = null$ at $t_0$ (including the case where $length(cn.array) = 0$)
\item otherwise, $ccn$ is a cnode obtained from $cn$ so that at least those null-inodes existing at $t_0$ are removed and at least those tomb-inodes $in$ existing at $t_0$ are resurrected - that is, replaced by untombed copies $sn$ of $sn\dagger = in.main$
\end{itemize}

A \textbf{weak tombing} $wtc$ of a cnode $cn$ seen at some time $t_0$ is a node such that:
\begin{itemize}
\item $ccn = sn\dagger$ if $length(cn.array) = 1$ and $cn.array(0)$ is a tomb-inode or an snode at $t_0$
\item $ccn = null$ if $\forall i, cn.array(i).main = null$ at $t_0$
\item $ccn = cn$ if there is more than a single non-null-inode below at $t_0$
\item otherwise, $ccn$ is a one-way cnode obtained from $cn$ such that all null-inodes existing at $t_0$ are removed
\end{itemize}
\end{definition}

\begin{lemma}\label{l-comptomb}
Methods $toCompressed$ and $toWeakTombed$ return the compression and weak tombing of a cnode $cn$, respectively.
\end{lemma}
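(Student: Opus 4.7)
The plan is to proceed by case analysis on each method, matching each branch of the pseudocode against the corresponding clause of the \emph{compression} and \emph{weak tombing} definitions. The central observation, used throughout, is that by the immutability lemma the cnode $cn$ itself (its $bmp$ and the reference-valued entries of $array$) is fixed during the call, so the only parts of the heap that can vary during execution are the $main$ fields of the child inodes.

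First I would pin down the meaning of ``at time $t_0$'' under the non-atomic traversal that these methods perform. The reads in $toCompressed$ and $toWeakTombed$ (the $filtered$, $map$, $resurrect$, and the $length$/$isSingleton$ tests) do not all happen at one instant. I would take $t_0$ to be the time at which the method is entered and then invoke the End of Life lemma: once an inode becomes a null-inode or a tomb-inode, it stays that way. This gives two monotonicity facts that make the definitions robust: (i) any child inode observed as a null-inode or tomb-inode by a later read was already such from the moment it first became so, and (ii) the set of null-inodes/tomb-inodes among $cn$'s children only grows over time. Combined with the immutability of cnodes and snodes, this means the collection of ``null-inodes existing at $t_0$'' and ``tomb-inodes existing at $t_0$'' is well defined modulo the ``at least'' slack already present in the definition of compression.

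Next, the case analysis for $toCompressed$. If $num=1$ and $cn.array(0)$ is a tomb-inode, the method returns $cn.array(0).main = sn\dagger$, which matches the first bullet of the compression definition. Otherwise, $filtered(\_.main \neq null)$ drops exactly the inodes observed to be null-inodes, and $map(resurrect(\_))$ replaces every observed tomb-inode by an untombed snode copy while leaving other children unchanged; by the monotonicity above, ``at least those null-inodes/tomb-inodes existing at $t_0$ are removed/resurrected'' holds, so the returned cnode satisfies the third bullet. Finally, if no branches survive ($bit\#(ncn.bmp) = 0$), the method returns $null$, matching the second bullet. An entirely analogous analysis handles $toWeakTombed$: its four return paths (branch length $>1$ returns $cn$; length $=1$ singleton returns the tombed snode; length $=1$ non-singleton returns a one-way cnode; length $=0$ returns $null$) are in bijection with the four bullets of the weak tombing definition.

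The main obstacle will be the first step: justifying that the time-indexed definitions are met by a method whose reads span an interval. The ``at least'' phrasing in the compression definition and the monotonicity supplied by the End of Life lemma together resolve this, but some care is needed for the weak tombing case because its definition is stated as an equality rather than an ``at least'': I would argue that the weak tombing is uniquely determined once one fixes the set of surviving (non-null-inode) child references observed by the $filtered$ call, and then identify $t_0$ with precisely the instant of that call so that the definition and the constructed node refer to the same surviving set. The remaining work is then a routine inspection of the pseudocode.
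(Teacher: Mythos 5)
Your proposal is correct and follows essentially the same route as the paper's proof: both rest on the immutability of cnodes/snodes and on the End of Life lemma to argue that the sequential scan's checks, once passed, remain true, so the methods remove or resurrect at least the inodes that were nonlive at $t_0$. Your treatment is more detailed — in particular the explicit handling of the interval-spanning reads and the choice of $t_0$ for the weak-tombing case — but introduces no new ideas beyond the paper's argument.
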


\begin{proof}
From lemma \ref{l-immut} we know that a cnode does not change values of $bmp$ or $array$ once created. From lemma \ref{l-nonlive} we know that all the nodes that are nonlive at $t_0$ must be nonlive $\forall t > t_0$. Methods $toCompressed$ or $toWeakTombed$ scan the array of $cn$ sequentially and make checks which are guaranteed to stay true if they pass -- when these methods complete at some time $t > t_0$ they will have removed or resurrected at least those inodes that were nonlive at some point $t_0$ after the operation began.
\end{proof}

\begin{lemma}\label{l-inv123}
Invariants INV1, INV2 and INV3 are always preserved.
\end{lemma}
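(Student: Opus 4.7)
The plan is to prove the lemma by induction on the sequence of successful CAS instructions that modify the Ctrie, using the fact (Lemma 2) that cnodes and snodes are immutable. The base case is the initial Ctrie, which is either empty ($root = null$) or consists of a single inode pointing to a singleton cnode created at initialization; all three invariants hold trivially. For the inductive step, I assume INV1-3 hold at some time $t$ and show they still hold after the next successful CAS at $t' > t$.

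Since cnodes and snodes never change their fields, INV2 and INV3 are automatically preserved for any cnode already present in the structure. Hence it suffices to inspect, for each line in the pseudocode that constructs a fresh cnode and installs it via CAS into some $in.main$, that the freshly-built cnode satisfies INV2 (bitcount equals array length) and INV3 (each set bit corresponds to an inode at level $l + W$ with the correct hashcode prefix, or to an snode). I would enumerate the construction sites one by one: the cnode built in line \ref{cas_insertcn} adds a new snode and sets the corresponding bit in $bmp$ via the $|$ operation; the cnode in line \ref{cas_insertsn} only replaces an existing entry and neither touches $bmp$ nor changes the array length; the cnode in line \ref{cas_insertin} replaces an snode by an inode $in_{l+W,\cdot}$ created at the correct level; the cnode inside $toCompressed$ is obtained by $filtered$ and $map$, which shorten the array and adjust $bmp$ consistently; the cnode built in $toWeakTombed$ is a one-way cnode whose bitmap is derived by the same $filtered$ predicate; finally $contractParent$ either calls $cn.removed(pos)$ (which clears the bit) or $cn.updated(pos, m.untombed)$ (which leaves $bmp$ untouched). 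In every case the bitmap update is paired with the corresponding array modification, so INV2 is maintained, and the array slot is either an snode built from an existing key or an inode spawned at the appropriate level with prefix $p \cdot r$, so INV3 holds.

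For INV1, I only need to verify that whenever a successful CAS writes a value into some $in.main$, the written value is of the form $null$, a cnode, or a tombed snode. A direct inspection of the pseudocode shows that the possible new values in lines \ref{cas_insertcn}, \ref{cas_insertsn}, \ref{cas_insertin} are cnodes; in line \ref{cas_remove} the new value is either a cnode (possibly empty) or $null$; in line \ref{cas_clean} the new value is the output of $toCompressed$, which by its definition is a cnode, a tombed snode, or $null$; in line \ref{cas_tomb} the new value is the output of $toWeakTombed$, with the same possibilities; and in lines \ref{cas_contractnull} and \ref{cas_contractsingle} the new value is a cnode. In addition, by Lemma 1 (end of life), once an inode is nonlive its $main$ field is never written again, so no dangling write can violate INV1 after the fact.

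The main obstacle will be the case analysis for INV3 when new cnodes are produced by $toCompressed$ and $toWeakTombed$: one has to argue that the $filtered$/$map$ construction, applied at level $l$ with prefix $p$, leaves only entries whose type and level/prefix match what INV3 demands, and in particular that resurrecting a tomb-inode $in_{l+W, p\cdot r}$ to an snode $sn$ preserves $hashcode(sn.k) = p \cdot r \cdot s$ (which follows from INV5 applied to the tomb-inode before resurrection, combined with immutability of the snode carried inside it). Once this bookkeeping is discharged uniformly across the construction sites, the induction goes through and INV1-3 hold at all times.
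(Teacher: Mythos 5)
Your proof follows essentially the same approach as the paper's: INV1 is checked by inspecting every value written to $in.main$ at a CAS site, while INV2 and INV3 are verified at each cnode construction site and then persist by the immutability of cnodes (Lemma \ref{l-immut}); the paper simply compresses your enumeration into the phrase ``trivial inspection of the pseudocode.'' One minor remark: the hashcode condition you worry about for resurrected snodes belongs to INV4 (handled separately via the consistency lemmas), not INV3, which only requires the array entry to be an snode or an inode at level $l+W$ -- so that ``obstacle'' dissolves and your argument goes through.
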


\begin{proof}
INV1: Inode creation and every CAS instruction abide this invariant. There are no other writes to $main$.

INV2, INV3: Trivial inspection of the pseudocode shows that the creation of cnodes abides these invariants. From lemma \ref{l-immut} we know that cnodes are immutable. Therefore, these invariants are ensured during construction and do not change subsequently.
\end{proof}

\begin{lemma}\label{l-parentreach}
If any CAS instruction makes an inode $in$ unreachable from its parent at some time $t$, then $in$ is nonlive at time $t$.
\end{lemma}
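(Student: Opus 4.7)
The plan is to proceed by case analysis on which CAS instruction performs the unreachability, and in each case exhibit a time $t_0 \le t$ at which $in$ was already nonlive; the claim then follows immediately from Lemma \ref{l-nonlive} (End of life), since a nonlive inode stays nonlive forever.

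First I would enumerate, by inspection of the pseudocode, the CAS instructions that can possibly remove an inode from its parent's contents. The CASes in lines \ref{cas_insertcn}, \ref{cas_insertsn}, \ref{cas_insertin}, and \ref{cas_remove} only add an snode, replace an snode by another snode or by a fresh inode, or delete an snode entry (possibly setting $main$ to $null$); none of these decrease the set of inodes reachable from the parent. So the only candidates are the CASes in lines \ref{cas_clean}, \ref{cas_tomb}, \ref{cas_contractnull}, \ref{cas_contractsingle}, and the root-level CASes \ref{cas_topremove}, \ref{cas_toplookup}.

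The four ``small'' cases are easy. For line \ref{cas_contractnull} the code is entered under the guard $m = null$, where $m = \mathrm{READ}(i.main)$, so $i$ is a null-inode at the time of the read, hence nonlive. The CASes in lines \ref{cas_topremove} and \ref{cas_toplookup} are analogous: the guard $\mathrm{isNullInode}(r)$ was true when $r$ was read. For line \ref{cas_contractsingle} the code is entered under $\mathrm{isSingleton}(m)$; by INV1 (Lemma \ref{l-inv123}) the only singleton value that may inhabit $i.main$ is a tombed snode, so $i$ is a tomb-inode at the read, hence nonlive. In each of these cases the preceding read occurs at some $t_0 < t$, and Lemma \ref{l-nonlive} lifts nonliveness from $t_0$ to $t$.

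The two remaining cases, lines \ref{cas_clean} and \ref{cas_tomb}, are the substantive ones and rely on Lemma \ref{l-comptomb}. By that lemma, the value written by the CAS in line \ref{cas_clean} is a compression of $cn$ and the value written by the CAS in line \ref{cas_tomb} is a weak tombing of $cn$. The inodes that become unreachable as a result are exactly those $in'$ in $cn.array$ whose slot is either discarded (the null-inodes filtered out by \verb=filtered(_.main != null)=) or resurrected (the tomb-inodes replaced by an untombed snode), together with, in the case of weak tombing that collapses to a single tombed snode, the other entries that are not the surviving singleton — and by the guard of \verb=toWeakTombed= those other entries are precisely null-inodes. In every sub-case, the scan inside \verb=toCompressed= or \verb=toWeakTombed= observed $in'.main$ to be $null$ or a tombed snode at some time $t_0$ strictly preceding $t$, so $in'$ was nonlive at $t_0$ and remains nonlive at $t$ by Lemma \ref{l-nonlive}.

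The main obstacle is the bookkeeping for line \ref{cas_tomb}: one must be careful to include among the ``unreachable inodes'' not just those discarded by the filter but also the inodes buried inside the old $cn$ when $cn$ is collapsed to a tombed snode or to a single-entry cnode, and to verify from the branch structure of \verb=toWeakTombed= that in each collapse branch the buried inodes are necessarily null-inodes (or the single surviving singleton, which is itself a tomb-inode). Once that case split is laid out cleanly, the argument reduces to the End-of-life lemma in each leaf.
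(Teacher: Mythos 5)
Your proof takes essentially the same route as the paper's: enumerate the successful CAS instructions, observe that the insert/remove CASes never drop an inode reference, and for the remaining ones exhibit an earlier time at which the evicted inode was already a null-inode or a tomb-inode (via the preceding guards, INV1, and Lemma \ref{l-comptomb}), then lift nonliveness forward to $t$ with Lemma \ref{l-nonlive}. The one slip is that your list of candidate CASes omits line \ref{cas_topinsert}, which can also evict a null-inode from the root; it is handled by exactly the argument you give for lines \ref{cas_topremove} and \ref{cas_toplookup}. Your explicit treatment of lines \ref{cas_contractnull} and \ref{cas_contractsingle} is in fact more complete than the paper's own proof, which silently skips those two cases.
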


\begin{proof}
We will show that all the inodes a CAS instruction could have made unreachable from their parents at some point $t_1$ were nonlive at some time $t_0 < t_1$. The proof then follows directly from lemma \ref{l-nonlive}. We now analyze successful CAS instructions.

In lines \ref{cas_topinsert}, \ref{cas_topremove} and \ref{cas_toplookup}, if $r$ is an inode and it is removed from the trie, then it must have been previously checked to be a null-inode in lines \ref{check_topinsert}, \ref{check_topremove} and \ref{check_toplookup}, respectively.

In lines \ref{cas_insertcn}, \ref{cas_insertsn} and \ref{cas_insertin}, a cnode $cn$ is replaced with a new cnode $ncn$ which contains all the references to inodes as $cn$ does, and possibly some more. These instructions do not make any inodes unreachable.

In line \ref{cas_remove}, a cnode $cn$ is replaced with a new $ncn$ which contains all the node references as $cn$ but without one reference to an snode -- all the inodes remain reachable.

In line \ref{cas_clean}, a cnode $m$ is replaced with its compression $mc$ -- from lemma \ref{l-comptomb}, $mc$ may only be deprived of references to nonlive inodes.

In line \ref{cas_tomb}, a cnode $m$ is replaced with its weak tombing $mwt$ -- from lemma \ref{l-comptomb}, $mwt$ may only be deprived of references to nonlive inodes.
\end{proof}

\begin{corollary}\label{c-reach}
Lemma \ref{l-parentreach} has a consequence that any inode $in$ can only be made unreachable in the Ctrie through modifications in their parent inode (or the root reference if $in$ is referred by it). If there is a parent that refers to $in$, then that parent is live by definition. If the parent had been previously removed, lemma \ref{l-parentreach} tells us that the parent would have been nonlive at the time. From lemma \ref{l-nonlive} we know that the parent would remain nonlive afterwards. This is a contradiction.
\end{corollary}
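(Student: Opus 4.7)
The plan is to trace the specific CAS instruction that makes $in$ unreachable and show it must act on either the $root$ reference or the $main$ field of $in$'s immediate parent. I would fix the time $t_1$ at which $in$ transitions from reachable to unreachable, and inspect the configuration just before $t_1$. At that instant there is a simple path from the root to $in$; the final edge of that path is either the direct $root$ reference, or an entry $cn.arr(r)$ of some cnode $cn$, where $cn = parent.main$ for some inode $parent$. By Lemma \ref{l-immut} the entries of $cn$ are frozen once $cn$ is created, so the only way to eliminate that final edge is to CAS $root$ (in the first case) or to CAS $parent.main$ to a different value (in the second case).

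The next step is to verify that the CAS on $parent.main$ is actually admissible, which reduces to showing that $parent$ is live at the moment of that CAS. This follows directly from the premise: since $parent.main$ names a cnode containing $in$, $parent$ is live by the definition of liveness, so Lemma \ref{l-nonlive} does not forbid writing to $parent.main$ at that time.

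The subtle part is ruling out the alternative scenario in which $in$ becomes unreachable ``passively'' because $parent$ itself was removed at some earlier step, with no CAS on $parent.main$ taking place around $t_1$. Here I would appeal to Lemma \ref{l-parentreach}: if $parent$ had been made unreachable previously, then it was nonlive at that moment, and by Lemma \ref{l-nonlive} its $main$ can no longer be rewritten. Combined with Lemma \ref{l-immut}, this means $parent.main$ continues to hold the same cnode pointing at $in$ for all time afterwards, so the purported removal of $in$ at $t_1$ cannot have occurred via $parent$ without an explicit CAS on $parent.main$. That contradicts the assumption and pins the modification to either $root$ or $parent.main$, as required.

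The main obstacle I anticipate is keeping the last argument free of circularity: one must be careful that the ``immediate parent at $t_1$'' is well defined and that the reasoning does not slide into an infinite regression up the ancestor chain. Pinning the argument to the single inode $parent$ (selected from the specific path witnessing reachability at $t_1{-}$), and invoking Lemma \ref{l-parentreach} on $parent$ only, rather than recursively, keeps the reasoning grounded and terminates in one step.
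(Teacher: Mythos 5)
Your proposal is correct and follows essentially the same route as the paper's inline argument: the final edge to $in$ is frozen by cnode immutability, so only a CAS on $root$ or on $parent.main$ can cut it, and the ``passive removal'' scenario is excluded because Lemma~\ref{l-parentreach} together with Lemma~\ref{l-nonlive} would force that parent to be nonlive from its removal onward, contradicting the liveness it has by virtue of holding a cnode that refers to $in$. The only blemish is the phrase claiming $parent.main$ ``continues to hold the same cnode pointing at $in$'' after the parent becomes nonlive --- nonliveness means $parent.main$ is $null$ or a tombed snode, so the frozen value cannot be that cnode; the contradiction you want is precisely between that frozen nonlive value and the cnode observed on the path witnessing $in$'s reachability just before $t_1$.
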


\begin{lemma}\label{l-laterreach}
If at some time $t_1$ an inode $in$ is read by some thread (lines \ref{read_topinsert}, \ref{read_topremove}, \ref{read_toplookup}, \ref{read_inlookup}, \ref{read_ininsert}, \ref{read_inremove}, \ref{read_incontract}), followed by a read of cnode $cn = in.main$ in the same thread at time $t_2 > t_1$ (lines \ref{read_lookup}, \ref{read_insert}, \ref{read_remove}, \ref{read_clean}, \ref{read_tomb}, \ref{read_contract}), then $in$ is reachable from the root at time $t_2$. Trivially, so is $in.main$.
\end{lemma}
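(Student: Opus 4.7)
The plan is to combine lemma \ref{l-nonlive} (nonlivness is permanent), lemma \ref{l-parentreach} together with its corollary \ref{c-reach} (only nonlive inodes are unlinked), and the immutability of cnodes (lemma \ref{l-immut}). The whole argument rests on the fact that observing $in.main$ to be a cnode at $t_2$ means $in$ is live at $t_2$, and liveness is preserved \emph{backward} in time.

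The first step is to note that by the definition of liveness, if $in.main = cn$ is a cnode at $t_2$, then $in$ is live at $t_2$. Applying the contrapositive of lemma \ref{l-nonlive}, $in$ must have been live at every time $t \le t_2$; otherwise it would have remained nonlive through $t_2$, contradicting $in.main$ being a cnode. In particular $in$ is live throughout $[t_1, t_2]$.

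The second step is to show $in$ was reachable at $t_1$, by strong induction on the recursion depth of the current operation (equivalently, on the level of $in$). For the base case, $in$ is read from the $root$ reference in one of lines \ref{read_topinsert}, \ref{read_topremove}, \ref{read_toplookup}, so $in$ is the current root at $t_1$ and trivially reachable. For the inductive case, $in$ is read from $cn'.array(pos)$ in one of lines \ref{read_inlookup}, \ref{read_ininsert}, \ref{read_inremove}, \ref{read_incontract}; the cnode $cn'$ was itself obtained at some earlier time $t_c < t_1$ as $ip.main$ for the parent inode $ip$ that was read strictly before $t_c$. By the induction hypothesis, $ip$ is reachable at $t_c$, so $cn' = ip.main$ is reachable at $t_c$, and because cnodes are immutable (lemma \ref{l-immut}) the entry $cn'.array(pos) = in$ makes $in$ reachable at $t_c$. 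The argument of the third step below, applied to the interval $[t_c, t_1]$, then yields that $in$ is still reachable at $t_1$.

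The third step closes the argument on the interval $[t_1, t_2]$ by contradiction. Suppose $in$ became unreachable at some $t_u \in [t_1, t_2]$. By corollary \ref{c-reach}, this could happen only through a CAS modifying $in$'s parent (or the root reference, when $in$ is the root). By lemma \ref{l-parentreach}, any such CAS unlinks only a nonlive inode, so $in$ would have been nonlive at $t_u$. This contradicts the first step, which established that $in$ is live throughout $[t_1, t_2]$. Hence reachability is preserved on $[t_1, t_2]$, so $in$ is reachable at $t_2$; and since $in.main$ is a field of the reachable $in$, it is reachable too.

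The main obstacle I anticipate is being careful about the induction in step two: the lemma is applied recursively to the parent inode $ip$ at the earlier time $t_c$, so one must check that the recursion is well-founded (depth of the call stack strictly decreases) and that the two ``reachability preservation'' invocations in steps two and three genuinely rely only on liveness of $in$ during the respective intervals, which does follow from lemma \ref{l-nonlive} applied to the liveness witnessed at $t_2$. Everything else is a mechanical chain of implications from the already-established lemmas.
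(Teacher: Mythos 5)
Your proof is correct and follows essentially the same route as the paper's: liveness of $in$ at $t_2$ (witnessed by $in.main$ being a cnode), combined with the permanence of nonliveness (Lemma~\ref{l-nonlive}) and the fact that unlinking only happens to nonlive inodes (Lemma~\ref{l-parentreach} and Corollary~\ref{c-reach}), yields the contradiction. Your second step --- the induction establishing that $in$ was reachable at $t_1$ in the first place --- is a careful, explicit justification of a point the paper's proof leaves implicit when it asserts that $in$ ``was made unreachable at an earlier time.''
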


\begin{proof}
Assume, that inode $in$ is not reachable from the root at $t_2$. That would mean that $in$ was made unreachable at an earlier time $t_0 < t_2$. Corollary \ref{c-reach} says that $in$ was then nonlive at $t_0$. However, from lemma \ref{l-nonlive} it follows that $in$ must be nonlive for all times greater than $t_0$, including $t_2$. This is a contradiction -- $in$ is live at $t_2$, since it contains a cnode $cn = in.main$.
\end{proof}

\begin{lemma}[Presence]\label{l-presence}
Reading a cnode $cn$ at some time $t_0$ and then $cn.sub(k)$ such that $k = sn.k$ at some time $t_1 > t_0$ means that the relation $hasKey(root, k)$ holds at time $t_0$. Trivially, $k$ is then in the corresponding abstract set $\mathbb{A}$.
\end{lemma}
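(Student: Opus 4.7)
The plan is to combine two facts we already have in hand: the immutability of cnodes and snodes (Lemma~\ref{l-immut}), and the reachability guarantee of Lemma~\ref{l-laterreach}. The picture is straightforward: $cn$ is obtained as $in.main$ for some inode $in$ that the searching thread descended to from the root. Lemma~\ref{l-laterreach} places $in$ (and consequently $cn = in.main$) inside the configuration at the moment the cnode was read, i.e.\ at $t_0$. Immutability then tells us that the snode $sn$ observed at $t_1$ was already sitting in the same slot of $cn.array$ at $t_0$, so the relation $hasKey$ can be witnessed at $t_0$ by the very same branch the searcher later followed.

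Concretely, I would argue as follows. Let $in_{l,p}$ denote the inode whose $main$ pointer is dereferenced at $t_0$ to obtain $cn$. Applying Lemma~\ref{l-laterreach} to this read (with the prior read of $in$ playing the role of $t_1$ in that lemma, and our $t_0$ playing the role of $t_2$), we conclude that $in_{l,p}$ is reachable from $root$ at $t_0$, and hence so is $cn = in_{l,p}.main$. Next, by Lemma~\ref{l-immut} the fields $bmp$ and $array$ of $cn$ are frozen for all time after $cn$'s creation, so the computed value of $cn.sub(k)$ at $t_1$ coincides with its value at $t_0$. Therefore $cn.sub(k) = sn$ already at $t_0$, with $sn.k = k$.

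To finish, I would unfold the definition of $hasKey$. By construction $holds(cn_{l,p}, k)$ holds at $t_0$, which by the clause for cnodes yields $hasKey(cn_{l,p}, k)$ at $t_0$. Climbing the at-$t_0$ witness path backwards through $in_{l,p}$ (which satisfies $in_{l,p}.main = cn_{l,p}$, a cnode, so the inode clause fires) and, inductively, through every ancestor inode/cnode on the root-to-$in_{l,p}$ path that Lemma~\ref{l-laterreach} certified as reachable, we obtain $hasKey(root, k)$ at $t_0$. Consistency of the state at $t_0$ with some abstract set $\mathbb{A}$ then gives $k \in \mathbb{A}$.

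The only delicate point worth flagging is that the reachability conclusion from Lemma~\ref{l-laterreach} is stated at the time of the cnode read, which is exactly our $t_0$; if one were tempted to invoke it at $t_1$ instead, the argument would be weaker and would not establish the claim at $t_0$. Once this alignment is recognized, the rest is just a conjunction of immutability and the recursive unfolding of $hasKey$, so I do not expect any real technical obstacle beyond bookkeeping the root-to-$in$ path at the correct moment in time.
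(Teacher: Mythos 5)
Your proof is correct and follows essentially the same route as the paper's: it invokes Lemma~\ref{l-laterreach} to establish that $cn$ is reachable at $t_0$, and Lemma~\ref{l-immut} to conclude that the snode $sn$ with $sn.k = k$ was already present in $cn$ at $t_0$, whence $hasKey(root,k)$ holds at $t_0$. Your explicit unfolding of the recursive $hasKey$ definition along the root-to-$in$ path is just a more detailed spelling-out of the step the paper leaves implicit.
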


\begin{proof}
We know from lemma \ref{l-laterreach} that the corresponding cnode $cn$ was reachable at some time $t_0$. Lemma \ref{l-immut} tells us that $cn$ and $sn$ were the same $\forall t > t_0$. Therefore, $sn$ was present in the array of $cn$ at $t_0$, so it was reachable. Furthermore, $sn.k$ is the same $\forall t > t_0$. It follows that $hasKey(root, x)$ holds at time $t_0$.
\end{proof}

\begin{definition}
A \textbf{longest path} of nodes $\pi(h)$ for some hashcode $h$ is the sequence of nodes from the root to a leaf of a valid Ctrie such that:

\begin{itemize}
\item if $root = null$ then $\pi(h) = \epsilon$
\item if $root \neq null$ then the first node in $\pi(h)$ is $root$, which is an inode
\item $\forall in \in \pi(h)$ if $in.main = cn$, then the next element in the path is $cn$
\item $\forall in \in \pi(h)$ if $in.main = sn$, then the last element in the path is $sn$
\item $\forall in \in \pi(h)$ if $in.main = null$, then the last element in the path is $in$
\item $\forall cn_{l,p} \in \pi(h), h = p \cdot r \cdot s$ if $cn.flg(r) = \bot$, then the last element in the path is $cn$, otherwise the next element in the path is $cn.arr(r)$
\end{itemize}
\end{definition}

\begin{lemma}[Longest path]\label{l-lpath}
Assume that a non-empty Ctrie is in a valid state at some time $t$. The longest path of nodes $\pi(h)$ for some hashcode $h = r_0 \cdot r_1 \cdots r_n$ is a sequence $in_{0,\epsilon} \rightarrow cn_{0,\epsilon} \rightarrow in_{W,r_0} \rightarrow \ldots \rightarrow in_{W \cdot m, r_0 \cdots r_m} \rightarrow x$, where $x \in \{cn_{W \cdot m, r_0 \cdots r_m}, sn, cn_{W \cdot m, r_0 \cdots r_m} \rightarrow sn, null\}$.
\end{lemma}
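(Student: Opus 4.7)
The plan is to prove the claim by an induction along the path, repeatedly invoking the invariants INV1 and INV3 (established in Lemma~3) to constrain the shape of each successive edge. I would first dispose of the initial segment: since the Ctrie is non-empty and \emph{root} is of type \verb=INode= by definition, the first element of $\pi(h)$ is an inode, which by the level/prefix convention is precisely $in_{0,\epsilon}$. This matches the prefix of $\pi(h)$ in the statement, and is the base of the induction.

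For the inductive step I would distinguish two kinds of positions. Suppose the path currently ends at an inode $in_{W\cdot i, r_0 \cdots r_{i-1}}$. By INV1, $in.main$ is either \emph{null}, a tombed snode $sn\dagger$, or a cnode $cn_{W\cdot i, r_0 \cdots r_{i-1}}$. In the first two subcases the defining clauses of $\pi(h)$ force termination of the path and yield exactly the last-element options $\mathit{null}$ and $sn$. In the third subcase $\pi(h)$ continues with that cnode, whose level and prefix match the preceding inode. Symmetrically, suppose the path currently ends at a cnode $cn_{W\cdot i, r_0 \cdots r_{i-1}}$ and write $h = r_0 \cdots r_{i-1} \cdot r_i \cdot s$. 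If $cn.\mathit{flg}(r_i) = \bot$, the path terminates at $cn$, giving the $cn_{W\cdot m, \cdot}$ option. Otherwise, by INV3 the entry $cn.\mathit{arr}(r_i)$ is either an snode $sn$ (giving the $cn \to sn$ option) or an inode $in_{W(i+1), r_0 \cdots r_i}$ at exactly the level and prefix demanded by the alternating pattern, so the induction continues.

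Termination of the sequence follows from a standard monotonicity argument: each full inode-cnode-inode cycle strictly increases the level by $W$, and the invariants ensure the prefix is always an actual prefix of $h$, which has bounded length. Hence only finitely many non-terminating steps are possible, and one of the four terminating cases must eventually be hit. Putting these observations together yields a sequence of the exact alternating form $in_{0,\epsilon}\to cn_{0,\epsilon}\to in_{W,r_0}\to\cdots$, ending with $x$ in the stated set.

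The main obstacle is not conceptual but book-keeping: I need to be careful that after each inode-cnode step the level indices and hashcode prefixes line up correctly (so the next inode is indeed $in_{W(i+1),\, r_0\cdots r_i}$ as claimed), and that the four terminal cases really exhaust the possibilities allowed by INV1 combined with INV3. A secondary subtlety is that the lemma implicitly assumes the Ctrie is non-empty; the empty case (where $\mathit{root} = \mathit{null}$ and $\pi(h) = \epsilon$) has to be handled separately, but the definition of $\pi(h)$ already accommodates this and so it falls outside the stated conclusion.
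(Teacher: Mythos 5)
Your proof is correct and follows essentially the same route as the paper, which simply asserts the result ``trivially from the invariants and the definition of the longest path''; you have merely made explicit the induction on alternating inode/cnode positions, the use of INV1 and INV3 to enumerate the possible successors, and the level/prefix book-keeping that the paper leaves implicit. No gap.
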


\begin{proof}
Trivially from the invariants and the definition of the longest path.
\end{proof}

% \begin{lemma}[Positioning]\label{l-fixedpos}
% A live inode $in$ does not change its level $l$ or hashcode prefix $p$ as long as its live.
% \end{lemma}

% \begin{proof}
% Assume $l = 0$. 
% \end{proof}

\begin{lemma}[Absence I]\label{l-absence1}
Assume that at some time $t_0$ $\exists cn = in.main$ for some node $in_{l,p}$ and the algorithm is searching for a key $k$. Reading a cnode $cn$ at some time $t_0$ such that $cn.sub(k) = null$ and $hashcode(k) = p \cdot r \cdot s$ implies that the relation $hasKey(root, k)$ does not hold at time $t_0$. Trivially, $k$ is not in the corresponding abstract set $\mathbb{A}$.
\end{lemma}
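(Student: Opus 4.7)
The plan is to argue that since $cn = in.main$ was read at $t_0$, the inode $in_{l,p}$ (and hence $cn$) is reachable from the root at $t_0$ by Lemma~\ref{l-laterreach}. Consequently $cn$ lies on the longest path $\pi(h)$ for the hashcode $h = \mathit{hashcode}(k) = p \cdot r \cdot s$, because by the longest-path construction from Lemma~\ref{l-lpath} the branch indices followed out of every cnode on $\pi(h)$ are precisely the chunks of $h$, which forces the path to descend into each node whose prefix is a prefix of $h$ -- in particular into $in_{l,p}$ and then into $cn$.

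Next I would use the hypothesis $cn.\mathit{sub}(k) = null$. Unfolding the definition, this means $cn.\mathit{flg}(r) = \bot$. By the definition of the longest path, $cn$ is therefore its last element: $\pi(h)$ terminates at $cn$, with no snode appearing on it. The main step is then to show that any snode $sn$ with $sn.k = k$ that is reachable from $\mathit{root}$ at $t_0$ must actually lie on $\pi(h)$. This is where invariants INV3--INV5 do the work: by INV4 any reachable snode under a cnode $cn'_{l',p'}$ on a generic root-to-leaf path has its key's hashcode beginning with $p' \cdot r'$, where $r'$ is the branch index used to reach $sn$ from $cn'$; similarly INV5 handles the case when $in.main$ is a tombed snode. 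Iterating from the root, the only way to reach an snode carrying a key whose hashcode is $h$ is to follow the branches $r_0, r_1, \ldots$ dictated by the chunks of $h$, which is exactly the sequence defining $\pi(h)$.

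Combining these two facts yields the contradiction: if $\mathit{hasKey}(\mathit{root},k)$ held at $t_0$, the witnessing snode would have to sit on $\pi(h)$; but $\pi(h)$ ends at $cn$ with no snode at its tail, so no such witness exists. Hence $\neg\mathit{hasKey}(\mathit{root},k)$ at $t_0$, and by the induction hypothesis on consistency, $k \notin \mathbb{A}$.

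The main obstacle I anticipate is being careful about the ``must lie on $\pi(h)$'' argument. It is intuitively obvious given the hashing discipline, but formally it requires a small induction along the root-to-snode path showing that at each cnode on the way the only branch compatible with INV4 (or INV5 at an intermediate inode holding a tombed snode) is the one indexed by the corresponding chunk of $h$; everything else is a matter of unwinding definitions and applying immutability (Lemma~\ref{l-immut}) to transport the conclusion from the moment of the read at $t_0$.
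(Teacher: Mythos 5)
Your proposal is correct and follows essentially the same route as the paper: establish reachability of $cn$ at $t_0$ via Lemma~\ref{l-laterreach}, invoke validity, and derive a contradiction from the fact that any snode with key $k$ could only appear at the end of the longest path $\pi(h)$, which instead terminates at $cn$ because $cn.sub(k) = null$. The extra care you take in justifying ``the witnessing snode must lie on $\pi(h)$'' via INV3--INV5 is exactly the content the paper delegates to Lemma~\ref{l-lpath}.
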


\begin{proof}
Lemma \ref{l-laterreach} implies that $in$ is in the configuration at time $t_0$, because $cn = cn_{l,p}$ such that $hashcode(k) = p \cdot r \cdot s$ is live. The induction hypothesis states that the Ctrie was valid at $t_0$. We prove that $hasKey(root, k)$ does not hold by contradiction. Assume there exists an snode $sn$ such that $sn.k = k$. By lemma \ref{l-lpath}, $sn$ can only be the last node of the longest path $\pi(h)$, and we know that $cn$ is the last node in $\pi(h)$.

% 1. Assume that $q$ is shorter or equal to $p$. Invariants and lemma \ref{l-lpath} state that $sn$ has to be on the path from the root to $cn$. This is not possible -- lemma \ref{l-lpath} states that only inodes and cnodes are on the path from the root to $cn$.

% 2. Assume that $q = p \cdot r \cdot t$. If $t = \epsilon$ then this would mean $sn$ is at $cn.arr(r)$. Otherwise, it follows from lemma \ref{l-lpath} that there has to be some other cnode at $cn.arr(r)$. In both cases $cn.sub(k)$ is $null$, so $cn.flg(r) = \bot$ and there is no such node in the longest path.
\end{proof}

\begin{lemma}[Absence II]\label{l-absence2}
Assume that the algorithm is searching for a key $k$. Reading a live snode $sn$ at some time $t_0$ and then $x = sn.k \neq k$ at some time $t_1 > t_0$ means that the relation $hasKey(root, x)$ does not hold at time $t_0$. Trivially, $k$ is not in the corresponding abstract set $\mathbb{A}$.
\end{lemma}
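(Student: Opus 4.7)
The plan is to argue by contradiction, using Lemma \ref{l-lpath} to restrict where any snode storing $k$ could be and then clashing that with the position we already observed $sn$ to occupy. I proceed on the assumption that the intended conclusion is $hasKey(root, k)$ does not hold at $t_0$ (matching the counterpart in the main text and paralleling Lemma \ref{l-absence1}); note that $hasKey(root, x)$ clearly \emph{does} hold by Lemma \ref{l-presence}, since $sn$ is live with $sn.k = x$.

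First I would reconstruct the descent: $sn$ is read as $cn.arr(m(l, k))$, where $cn = cn_{l,p}$ was obtained from $in.main$ for an inode $in_{l,p}$ visited while following the chunks of $hashcode(k)$. Lemma \ref{l-laterreach} ensures that $in$, $cn$, and hence $sn$ are reachable from $root$ at $t_0$. Lemma \ref{l-immut} guarantees $sn.k = x \neq k$ already at $t_0$. Invariant INV4 (Lemma \ref{l-inv123} together with the running induction hypothesis) forces $hashcode(x)$ to have prefix $p \cdot m(l,k)$, so $sn$ appears as the terminal node of the longest path $\pi(hashcode(k))$ in the valid configuration at $t_0$.

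Second, suppose for contradiction that $hasKey(root, k)$ holds at $t_0$. Unwinding the recursive definition of $hasKey$, the configuration at $t_0$ must contain a reachable snode $sn'$ (sitting either directly in a cnode slot or tombed inside an inode) with $sn'.k = k$. Because the Ctrie is valid at $t_0$, Lemma \ref{l-lpath} forces such an $sn'$ to appear only as the last element of $\pi(hashcode(k))$. But we already identified $sn$ as this terminal element, so $sn' = sn$ and hence $sn.k = k$, contradicting $sn.k = x \neq k$. Therefore $hasKey(root, k)$ fails at $t_0$, and by the consistency part of the induction hypothesis $k \notin \mathbb{A}$.

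The main obstacle I anticipate is tightly arguing that $sn$ really lies on $\pi(hashcode(k))$ \emph{at time $t_0$}, rather than only at the earlier moment when $cn$ was first read --- uniqueness of the terminal node only bites once this is established. The argument is assembled from Lemma \ref{l-laterreach} (the parent inode and its cnode remain reachable at $t_0$), Corollary \ref{c-reach} (no detaching CAS could have been executed on $in$ in the meantime without first making $in$ nonlive, which contradicts our just having read a live cnode out of it), and Lemma \ref{l-immut} (the $array$ of $cn$ and the $k$ field of $sn$ never change after publication), which together pin $sn$ to the required position in the longest path at $t_0$.
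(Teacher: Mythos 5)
Your proof is correct and follows essentially the same route as the paper: the paper's own proof is a one-line reference to the Absence I argument, i.e., the same contradiction via Lemma \ref{l-lpath} that any snode holding $k$ would have to be the terminal node of the longest path $\pi(hashcode(k))$, a position you have already shown to be occupied by $sn$ with $sn.k \neq k$. Your reading of the intended conclusion as ``$hasKey(root, k)$ does not hold'' (rather than the statement's $hasKey(root, x)$, which is a typo) agrees with the main-text version of the lemma, and your extra care about pinning $sn$ to the path at time $t_0$ only makes explicit what the paper leaves implicit.
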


\begin{proof}
Contradiction similar to the one in the previous lemma.
\end{proof}

\begin{lemma}[Absence III]\label{l-absence3}
Assume that the $root$ reference is read in $r$ at $t_0$ and $r$ is positively compared to $null$ at $t_1 > t_0$. Then $\forall k, hasKey(root, k)$ does not hold at $t_0$. Trivially, the Ctrie is consistent with the empty abstract set $\varnothing$.
\end{lemma}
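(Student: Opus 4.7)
The plan is to reduce this lemma to the atomicity of the read of \verb=root= together with the definition of \emph{hasKey} and the Longest Path lemma. Since all reads are atomic, the value $r$ obtained by the read at $t_0$ is precisely the value of \verb=root= at the instant $t_0$. The positive comparison $r = null$ performed at $t_1 > t_0$ is a check of a local variable whose value was fixed at $t_0$; hence it tells us nothing about \verb=root= at $t_1$, but it does tell us that $\verb=root= = null$ at $t_0$. This is the only nontrivial observation, and it is immediate from the atomicity assumption stated at the start of the Algorithm section.

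Given $\verb=root= = null$ at $t_0$, I would next appeal to lemma \ref{l-lpath} (Longest Path): by the first bullet in the definition of the longest path, $\pi(h) = \epsilon$ for every hashcode $h$ whenever $root = null$. In particular, no snode appears on any longest path, so for every key $k$ there is no reachable snode $sn$ with $sn.k = k$. Unfolding the mutually recursive definition of $hasKey$, which starts at an inode and follows either an snode in $main$ or a cnode branch indexed by $m(l,k)$, there is no base case to satisfy when there is no inode at the root. Thus $hasKey(root, k)$ fails for every $k$ at $t_0$.

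Since the induction hypothesis asserts that at $t_0$ the Ctrie is valid and consistent with some abstract set $\mathbb{A}$, and since by the above $k \notin \mathbb{A} \Leftrightarrow hasKey(root,k)$ fails for every $k$, we conclude $\mathbb{A} = \varnothing$.

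There is no real obstacle here; the only thing to be careful about is not to confuse the time of the read with the time of the comparison. Once the atomicity of the read is invoked correctly, the rest is a direct unfolding of the definitions together with one appeal to the Longest Path lemma.
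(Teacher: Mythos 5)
Your proposal is correct and matches the paper's proof: the paper's entire argument is that the local variable $r$ retains the value read at $t_0$, hence $root = null$ at $t_0$, after which it declares the rest trivial; your unfolding of $hasKey$ (and the observation that there is no inode for the relation to be anchored at) is exactly that trivial remainder made explicit. One small note: the Longest Path lemma is stated for non-empty Ctries, so what you are really invoking there is the first clause of the \emph{definition} of the longest path, not the lemma itself -- but since the definition of $hasKey$ alone already settles the matter, this does not affect correctness.
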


\begin{proof}
Local variable $r$ has the same value $\forall t \ge t_0$. Therefore, at $t_0$ $root = null$. The rest is trivial.
\end{proof}

\begin{lemma}[Fastening]\label{l-fastening}
1. Assume that one of the CAS instructions in lines \ref{cas_topinsert}, \ref{cas_topremove} and \ref{cas_toplookup} succeeds at time $t_1$ after $r$ was determined to be a nonlive inode in one of the lines \ref{check_topinsert}, \ref{check_topremove} or \ref{check_toplookup}, respectively, at time $t_0$. Then $\forall t, t_0 \le t < t_1$, relation $hasKey(root, k)$ does not hold for any key. If $r$ is $null$, then $\exists \delta > 0 \forall t, t_1 - \delta < t < t_1$ $hasKey$ does not hold for any key.

2. Assume that one of the CAS instructions in lines \ref{cas_insertcn} and \ref{cas_insertin} succeeds at time $t_1$ after $in.main$ was read in line \ref{read_insert} at time $t_0$. The $\forall t, t_0 \le t < t_1$, relation $hasKey(root, k)$ does not hold.

3. Assume that the CAS instruction in line \ref{cas_insertsn} succeeds at time $t_1$ after $in.main$ was read in line \ref{read_insert} at time $t_0$. The $\forall t, t_0 \le t < t_1$, relation $hasKey(root, k)$ holds.

4. Assume that the CAS instruction in line \ref{cas_remove} succeeds at time $t_1$ after $in.main$ was read in line \ref{read_remove} at time $t_0$. The $\forall t, t_0 \le t < t_1$, relation $hasKey(root, k)$ holds.
\end{lemma}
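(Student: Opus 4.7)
My plan is to unify all four parts by a common two-step scheme. First, I would show that the atomic cell touched by the CAS -- either \verb=root= or $in.main$ -- holds a constant value throughout $[t_0, t_1)$, not merely at the two endpoints pinned by the CAS. Then I apply the relevant single-time lemma (Presence or one of the Absence lemmas) at each point of the interval. The successful CAS at $t_1$ together with the read at $t_0$ already give the endpoints; the actual work is the ``no detour'' step in between.

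For part 1, I split on whether $r$ is a null-inode or $null$. When $r$ is a null-inode, Lemma~\ref{l-nonlive} makes $r.main = null$ frozen from $t_0$ onwards, so at any moment when $\verb=root= = r$ the relation $hasKey(\verb=root=, k)$ fails for every key. Monotonicity of \verb=root= follows by inspecting its only writers (lines \ref{cas_topinsert}, \ref{cas_topremove}, \ref{cas_toplookup}): each writes either $null$ or a freshly allocated \verb=INode= object, never the specific pre-existing object $r$, so once \verb=root= differs from $r$ it cannot return to $r$. Together with $\verb=root= = r$ at $t_0$ and at $t_1^-$, this forces $\verb=root= = r$ on the whole interval. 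The $r = null$ case only arises for the CAS at line \ref{cas_topinsert}; CAS atomicity gives $\verb=root= = null$ immediately before $t_1$, and since \verb=root= only changes at discrete CAS points there is a $\delta > 0$ with $\verb=root= = null$ on $(t_1 - \delta, t_1)$, which delivers the $\delta$-claim.

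For parts 2--4 the first step is to show $in.main = cn$ throughout $[t_0, t_1)$. I would enumerate every CAS site that writes $in.main$ (lines \ref{cas_insertcn}, \ref{cas_insertsn}, \ref{cas_insertin}, \ref{cas_remove}, \ref{cas_clean}, \ref{cas_tomb}) and inspect the helpers \verb=toCompressed= and \verb=toWeakTombed=, confirming that every new value is either $null$, a tombed snode, or a freshly allocated cnode. In the $null$ and tombed-snode cases $in$ becomes nonlive and its $main$ is frozen by Lemma~\ref{l-nonlive}, so $in.main$ can never return to $cn$; in the fresh-cnode case the written reference is a different object than $cn$. Combined with CAS atomicity at the endpoints, this gives $in.main = cn$ on the whole interval. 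Hence $in$ stays live, and by Corollary~\ref{c-reach} stays reachable from \verb=root=; Lemma~\ref{l-immut} also keeps $cn.sub(k)$ constant. At each $t \in [t_0, t_1)$ I then apply Lemma~\ref{l-absence1} (CAS at line \ref{cas_insertcn}, $cn.sub(k) = null$), Lemma~\ref{l-absence2} (CAS at line \ref{cas_insertin}, $cn.sub(k) = sn$ with $sn.k \neq k$), or Lemma~\ref{l-presence} (CAS at lines \ref{cas_insertsn} and \ref{cas_remove}, $cn.sub(k) = sn$ with $sn.k = k$) to obtain the pointwise $hasKey$ conclusion.

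The main obstacle will be the ``no revisit'' bookkeeping, particularly for $in.main$. One has to verify that the paths through \verb=toCompressed= (which can return the pre-existing $cn.array(0).main$ when a one-way cnode has a tomb-inode beneath) and \verb=toWeakTombed= (which fabricates a new tombed snode copy) still respect monotonicity; in both such cases $in$ becomes nonlive, so Lemma~\ref{l-nonlive} locks its $main$ forever and rules out any later return to the old $cn$. Once this inventory is settled, extending the single-time Presence/Absence lemmas to the whole interval is essentially mechanical, since the hypotheses they require -- $in$ reachable, $in.main = cn$, and $cn$ immutable -- hold uniformly on $[t_0, t_1)$.
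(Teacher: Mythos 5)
Your proposal is correct and matches the paper's proof in structure: both first pin the CAS'd location to a constant value over the whole interval $[t_0, t_1)$ and then apply the Presence/Absence lemmas pointwise, treating the $r = null$ case separately via the $\delta$-interval. The only cosmetic difference is that the paper justifies the constancy step with a single garbage-collection/fresh-allocation (no-ABA) argument covering all cases at once, whereas you derive the same fact by a per-writer ``no return to the old value'' analysis; both rest on the same observation that every CAS installs either a freshly allocated node or a value that renders the inode nonlive.
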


\begin{proof}
The algorithm never creates a reference to a newly allocated memory areas unless that memory area has been previously reclaimed. Although it is possible to extend the pseudocode with memory management directives, we omit memory-reclamation from the pseudocode and assume the presence of a garbage collector which does not reclaim memory areas as long as there are references to them reachable from the program. In the pseudocode, CAS instructions always work on memory locations holding references -- $CAS(x, r, r')$ takes a reference $r$ to a memory area allocated for nodes as its expected value, meaning that a reference $r$ that is reachable in the program exists from the time $t_0$ when it was read until $CAS(x, r, r')$ was invoked at $t_1$. On the other hand, the new value for the CAS is in all cases a newly allocated object. In the presence of a garbage collector with the specified properties, a new object cannot be allocated in any of the areas still being referred to. It follows that if a CAS succeeds at time $t_1$, then $\forall t, t_0 \le t < t_1$, where $t_0$ is the time of reading a reference and $t_1$ is the time when CAS occurs, the corresponding memory location $x$ had the same value $r$.

We now analyze specific cases from the lemma statement:

1. We know that $\forall t, t_0 \le t < t_1$ the root reference has a reference $r$ to the same inode $in$. We assumed that $r$ is nonlive at $t_0$. From lemma \ref{l-nonlive} it follows that $r$ remains nonlive until time $t_1$. By the definition of $hasKey$, the relation does not hold for any key from $\forall t, t_0 \le t < t_1$. Case where $r$ is $null$ is proved similarly.

2. From lemma \ref{l-lpath} we know that for some hashcode $h = hashcode(k)$ there exists a longest path of nodes $\pi(h) = in_{0,\epsilon} \rightarrow \ldots \rightarrow cn_{l,p}$ such that $h = p \cdot r \cdot s$ and that $sn$ such that $sn.k = k$ cannot be a part of this path -- it could only be referenced by $cn_{l,p}.sub(k)$ of the last cnode in the path.
We know that $\forall t, t_0 \le t < t_1$ reference $cn$ points to the same cnode. We know from \ref{l-immut} that cnodes are immutable. The check to $cn.bmp$ preceeding the CAS ensures that $\forall t, t_0 \le t < t_1$ $cn.sub(k) = null$. In the other case, we check that the key $k$ is not contained in $sn$. We know from \ref{l-parentreach} that $cn$ is reachable during this time, because $in$ is reachable.
Therefore, $hasKey(root, k)$ does not hold $\forall t, t_0 \le t < t_1$.

3., 4. We know that $\forall t, t_0 \le t < t_1$ reference $cn$ points to the same cnode. Cnode $cn$ is reachable as long as its parent inode $in$ is reachable. We know that $in$ is reachable by lemma \ref{l-parentreach}, since $in$ is live $\forall t, t_0 \le t < t_1$. We know that $cn$ is immutable by lemma \ref{l-immut} and that it contains a reference to $sn$ such that $sn.k = k$. Therefore, $sn$ is reachable and $hasKey(root, k)$ holds $\forall t, t_0 \le t < t_1$.
\end{proof}

\begin{lemma}\label{l-consmodif}
Assume that the Ctrie is valid and consistent with some abstract set $\mathbb{A}$ $\forall t, t_1 - \delta < t < t_1$. CAS instructions from lemma \ref{l-fastening} induce a change into a valid state which is consistent with the abstract set semantics.
\end{lemma}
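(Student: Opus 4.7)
The plan is to do a case analysis on each of the seven CAS instructions listed in lemma~\ref{l-fastening}, and for each show (i) that the post-CAS state satisfies INV1--INV5, and (ii) that it is consistent with the appropriate updated abstract set: $\mathbb{A} \cup \{k\}$ for the inserting CASes, $\mathbb{A} \setminus \{k\}$ for the removing CAS, $\mathbb{A}$ itself for the value-replacement CAS, and $\varnothing$ for the root clean-up CASes.

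For part (i), lemma~\ref{l-inv123} already yields INV1--INV3 for free, so only INV4 and INV5 require work. I would inspect each newly installed node. In lines~\ref{cas_topinsert}, \ref{cas_insertcn}, \ref{cas_insertsn}, \ref{cas_insertin} and~\ref{cas_remove}, the new cnode is obtained from the old one by (a) inserting a fresh \texttt{SNode} at the slot dictated by $flagpos(k.hc,lev,bmp)$ whose hashcode's $l$th chunk is $m(lev,k)$, so the required prefix $p\cdot r$ holds by construction, (b) updating an entry with a new \texttt{SNode} bearing an equal key, (c) installing an inode pointing to a fresh two-way cnode at level $lev+W$ whose two snodes $sn$ and $nsn$ have hashcodes sharing the prefix up to level $lev+W$, or (d) deleting an entry, which preserves INV4 vacuously. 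INV5 is maintained because none of these CASes writes a tombed snode into any $in.main$ and, by lemma~\ref{l-immut}, previously valid cnodes and snodes stay unchanged. The root CASes in lines~\ref{cas_topremove} and~\ref{cas_toplookup} simply install $null$ and trivially preserve all invariants.

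For part (ii), I would combine lemma~\ref{l-fastening} with lemmas~\ref{l-presence} and \ref{l-absence1}--\ref{l-absence3}. Fastening pins down whether $k$ was present immediately before $t_1$. For the CASes in lines~\ref{cas_topinsert}, \ref{cas_insertcn} and~\ref{cas_insertin}, $k$ was absent in $[t_0,t_1)$; after the CAS, $k$ becomes reachable through the newly installed \texttt{SNode} (either directly in the new root inode, at the new bitmap position in the extended cnode, or at a new entry of a fresh two-way cnode at level $lev+W$), so $hasKey(root,k)$ now holds and the new state is consistent with $\mathbb{A}\cup\{k\}$. For the in-place replacement CAS in line~\ref{cas_insertsn}, $k$ was already present and the CAS merely swaps its associated value, leaving the abstract set of keys unchanged. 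For the removal CAS in line~\ref{cas_remove}, $k$ was present just before $t_1$ and after the CAS the entry bearing $k$ is gone, so $hasKey(root,k)$ no longer holds, matching $\mathbb{A}\setminus\{k\}$. The root CASes of lines~\ref{cas_topremove} and~\ref{cas_toplookup} fall under case~1 of lemma~\ref{l-fastening}, where $\mathbb{A}=\varnothing$ both before and after.

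The main obstacle will be showing that the single-point modification does not disturb the reachability of any other key $k' \neq k$. For this I would fix an arbitrary such $k'$ and examine its longest path $\pi(hashcode(k'))$ given by lemma~\ref{l-lpath}. If that path does not pass through the mutated slot of $cn$ (or the replaced root reference), nothing visible to $k'$ changes: by lemma~\ref{l-immut} cnodes and snodes along the path are immutable, and by lemma~\ref{l-parentreach} combined with corollary~\ref{c-reach} the path from the root down to $in$ is unchanged throughout $[t_0,t_1]$. If the path does traverse the mutated slot then $m(lev,k')=m(lev,k)$, and I must verify that the replacement still conducts $k'$ to the same snode. This is immediate for the value-replacement and removal cases, since the other array positions of the new cnode are copied verbatim from the old one. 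The delicate sub-case is the extension CAS in line~\ref{cas_insertin}: here the former \texttt{SNode} with key $k''\neq k$ is re-installed inside a freshly built two-way cnode at level $lev+W$, and I would check that the bitmap positions $m(lev+W,k)$ and $m(lev+W,k'')$ differ (ensured by the construction of the two-way cnode) so that $k''$, and hence $k'$ if $k'=k''$ or $k'$ follows a further subtrie unchanged below the old position, remains reachable with the same $hasKey$ status as before.
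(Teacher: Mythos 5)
Your proposal is correct and follows essentially the same route as the paper: a case analysis over the CAS instructions of the Fastening lemma, using that lemma to pin down the $hasKey$ status just before $t_1$ and then arguing directly about the newly installed node to obtain consistency with $\mathbb{A}\cup\{k\}$, $\mathbb{A}\setminus\{k\}$, $\mathbb{A}$, or $\varnothing$ as appropriate. The paper dismisses validity (INV4--5) as trivial and handles only the $cas_{insertcn}$ case in detail, so your explicit verification of INV4/INV5 and your argument that keys $k'\neq k$ retain their $hasKey$ status (via immutability and the longest-path lemma) simply fill in steps the paper leaves to the reader.
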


\begin{proof}
From lemma \ref{l-fastening}, we know that a successful CAS in line \ref{cas_topinsert} means that the Ctrie was consistent with an empty abstract set $\varnothing$ up to some time $t_1$. After that time, the Ctrie is consistent with the abstract set $\mathbb{A} = {k}$. Successful CAS instructions in lines \ref{cas_topremove} and \ref{cas_toplookup} mean that the Ctrie was consistent with an empty abstract set $\varnothing$ up to some time $t_1$ and are also consistent with $\varnothing$ at $t_1$.

Observe a successful CAS in line \ref{cas_insertcn} at some time $t_1$ after $cn$ was read in line \ref{read_insert} at time $t_0 < t_1$. From lemma \ref{l-fastening} we know that $\forall t, t_0 \le t < t_1$, relation $hasKey(root, k)$ does not hold. If the last CAS instruction in the Ctrie occuring before the CAS in line \ref{cas_clean} was at $t_\delta = t_1 - \delta$, then we know that $\forall t, \max(t_0, t_\delta) \le t < t_1$ the $hasKey$ relation does not change. We know that at $t_1$ $cn$ is replaced with a new cnode with a reference to a new snode $sn$ such that $sn.k = k$, so at $t_1$ relation $hasKey(root, k)$ holds. Consequently, up to $\forall t, \max(t_0, t_\delta) \le t < t_1$ the Ctrie is consistent with an abstract set $\mathbb{A}$ and at $t_1$ it is consistent with an abstract set $\mathbb{A} \cup \{ k \}$.
Validity is trivial.

Proofs for the CAS instructions in lines \ref{cas_insertsn}, \ref{cas_insertin} and \ref{cas_remove} are similar.
\end{proof}

\begin{lemma}\label{l-consclean}
Assume that the Ctrie is valid and consistent with some abstract set $\mathbb{A}$ $\forall t, t_1 - \delta < t < t_1$. If one of the operations $clean$, $tombCompress$ or $contractParent$ succeeds with a CAS at $t_1$, the Ctrie will remain valid and consistent with the abstract set $\mathbb{A}$ at $t_1$.
\end{lemma}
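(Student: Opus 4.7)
The overall plan is to treat the three operations separately and, for each, verify (i) that the CAS preserves invariants INV1--INV5 (validity) and (ii) that for every key $k$ the relation $hasKey(root,k)$ holds after the CAS iff it held just before. Since invariants INV1--INV3 are already handled by lemma \ref{l-inv123}, the real work is localized: I must check INV4/INV5 and confirm that no key enters or leaves the configuration. The governing tool is lemma \ref{l-comptomb}, which characterizes $toCompressed$ and $toWeakTombed$, together with lemma \ref{l-nonlive}, which tells us that any child observed to be null or tombed stays that way.

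For $clean(i)$, the CAS at line \ref{cas_clean} replaces a cnode $m$ read from $in_{l,p}.main$ by $toCompressed(m)$. By lemma \ref{l-comptomb} this new node differs from $m$ only by (a) dropping entries pointing to null-inodes and (b) replacing entries pointing to tomb-inodes $in'$ with untombed copies of $in'.main$. Step (a) does not change $hasKey$ because a null-inode is nonlive and cannot have any snode below it; step (b) does not change $hasKey$ either, because a tomb-inode's key is simply moved from depth $l{+}W$ to depth $l$ at the same bit-slot $r=m(l,k)$, so by INV5 the hashcode prefix is still $p\cdot r$, preserving INV4. Consistency then follows.

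For $tombCompress(i)$, the CAS at line \ref{cas_tomb} replaces $m$ by $toWeakTombed(m)$. By lemma \ref{l-comptomb} the result is either $m$ itself (in which case the code returns before the CAS), $null$ (only when every child was a null-inode, hence no keys present), a tombed singleton $sn\dagger$ (only when a single live snode or tomb-inode lay below, and its key is preserved under $in.main = sn\dagger$ at the same prefix $p$, matching INV5), or a filtered cnode (null-inodes dropped, preserving $hasKey$ by the same argument as in $clean$). INV1 allows $in.main = sn\dagger$, so validity carries over.

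For $contractParent(parent,i,hc,lev)$, the snapshot $m = READ(i.main)$ is taken before the CAS on $parent.main$. The case split is on whether $m$ is $null$ or a singleton; by lemma \ref{l-nonlive}, once either is observed at $t_0$, we have $i.main = m$ for all $t \ge t_0$, so $i$'s effective content is exactly $m$. The check $sub \neq i$ ensures the parent's relevant branch still refers to $i$, and the CAS swaps that reference for either nothing (if $m=null$) or $m.untombed$ (if $m$ is a singleton). Both substitutions preserve the multiset of reachable snodes: a null-inode contributes no keys, and a singleton tomb-inode at $(l{+}W, p\cdot r)$ is replaced by an snode at slot $r$ of the parent cnode at $(l,p)$, preserving INV4 by INV5 applied to $i$ before the move. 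I expect this last case to be the main obstacle: one has to argue carefully that between reading $m$ and performing the parent CAS, no write could have made the planned substitution inconsistent. The non-writability of $i.main$ after it goes nonlive (lemma \ref{l-nonlive}) and the fact that $parent.main$ is pinned by the CAS's expected value $pm$ together close the gap.
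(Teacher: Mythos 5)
Your proposal is correct and follows essentially the same route as the paper: both arguments reduce to lemma \ref{l-comptomb} plus the observation that the replacement node can only omit nonlive inodes (which contribute no keys) while resurrecting any tombed key at the parent level with the same hashcode-prefix slot, so $hasKey$ and INV4/INV5 are preserved. You are in fact more explicit than the paper on the $contractParent$ case (which the paper dismisses as ``similar''), correctly identifying that lemma \ref{l-nonlive} freezes $i.main$ once it is observed nonlive and that the CAS's expected value pins $parent.main$.
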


\begin{proof}
Operations $clean$, $tombCompress$ and $contractParent$ are atomic - their linearization point is the first successful CAS instruction occuring at $t_1$. We know from lemma \ref{l-comptomb} that methods $toCompressed$ and $toWeakTombed$ produce a compression and a weak tombing of a cnode, respectively.

We first prove the property $\exists k, hasKey(cn, k) \Rightarrow hasKey(f(cn), k)$, where $f$ is either a compression or a weak tombing. We know from their respective definitions that the resulting cnode $ncn = f(cn)$ or the result $null = f(cn)$ may only omit nonlive inodes from $cn$. Omitting a null-inode omits no key. Omitting a tomb-inode may omit exactly one key, but that is compensated by adding new snodes -- $sn\dagger$ in the case of a one-way node or, with compression, resurrected copies $sn$ of removed inodes $in$ such that $in.main = sn\dagger$. Therefore, the $hasKey$ relation is exactly the same for both $cn$ and $f(cn)$.

We only have to look at cases where CAS instructions succeed. If CAS in line \ref{cas_clean} at time $t_1$ succeeds, then $\forall t, t_0 < t < t_1$ $in.main = cn$ and at $t_1$ $in.main = toCompressed(cn)$. Assume there is some time $t_\delta = t_1 - \delta$ at which the last CAS instruction in the Ctrie occuring before the CAS in line \ref{cas_clean} occurs. Then $\forall t, \max(t_0, t_\delta) \le t < t_1$ the $hasKey$ relation does not change. Additionally, it does not change at $t_1$, as shown above. Therefore, the Ctrie remains consistent with the abstract set $\mathbb{A}$.
Validity is trivial.

Proof for $tombCompress$ and $contractParent$ is similar.
\end{proof}

\begin{corollary}\label{c-inv45}
From lemmas \ref{l-consmodif} and \ref{l-consclean} it follows that invariants INV4 and INV5 are always preserved.
\end{corollary}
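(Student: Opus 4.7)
The plan is to prove the corollary by induction on time (equivalently, on the sequence of successful state-changing CAS instructions), using the fact that lemmas \ref{l-consmodif} and \ref{l-consclean} enumerate every CAS that can alter the node structure of the Ctrie. The base case is the initial state with $root = null$: there are no snodes and no tomb-inodes, so INV4 and INV5 hold vacuously. In the inductive step I would invoke the two cited lemmas, each of which already asserts that the post-CAS state is valid; since validity is defined by INV1--5 all holding, the remaining work is to make precise the ``validity is trivial'' remark in those lemma proofs, namely to check the prefix conditions INV4 and INV5 for each CAS.

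For the modification CAS instructions (lines \ref{cas_insertcn}, \ref{cas_insertsn}, \ref{cas_insertin}, \ref{cas_remove}), I would observe that \textit{flagpos} computes the index $pos$ from the hashcode chunk $m(lev, k)$, so any new snode placed at position $pos$ of a cnode $cn_{lev,p}$ sits at a bit-position $r = m(lev,k)$ with $hashcode(k) = p \cdot r \cdot s$, which is exactly INV4. In line \ref{cas_insertin}, the freshly created two-way cnode lives at level $lev + W$ with prefix $p \cdot r$, and both snodes in it have hashcodes that extend $p \cdot r$ (they collided at position $r$ in the old cnode), so INV4 again holds. Line \ref{cas_insertsn} substitutes an snode carrying the same key, and line \ref{cas_remove} only drops one.

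For the compression and contraction CAS instructions (lines \ref{cas_clean}, \ref{cas_tomb}, \ref{cas_contractnull}, \ref{cas_contractsingle}), lemma \ref{l-comptomb} characterizes the output: snodes retained from the input cnode stay at the same logical position and inherit INV4 from the inductive hypothesis, while null-inode removal does not touch any snode. The two cases that genuinely move an snode between levels are (i) the tombing step inside $toWeakTombed$, where the single surviving singleton yields a tombed snode written directly into $in_{l,p}.main$ --- by INV4 for the original cnode its hashcode starts with $p \cdot r \cdot s$, which is exactly what INV5 demands at prefix $p$; and (ii) the resurrection inside $contractParent$ at line \ref{cas_contractsingle}, which untombs the child inode $in_{l+W,p\cdot r}$ and writes the resulting snode at position $pos$ in the parent cnode $cn_{l,p}$ --- by INV5 for the tomb-inode, $hashcode(sn.k)$ starts with $p \cdot r$, and the call to \textit{flagpos} in $contractParent$ selects exactly that chunk $r$, so the snode ends up at the right index and INV4 is preserved.

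The hard part will be the level-crossing cases above: one must verify that the $lev$ argument threaded through $iremove \to contractParent$ (where $lev - W$ is passed) really refers to the parent cnode's level, so that the chunk extracted by \textit{flagpos} is precisely the one by which the child inode's prefix extends the parent's. Once this bookkeeping is checked and combined with the already established fact (lemma \ref{l-immut}) that cnodes and snodes are immutable after creation, INV4 and INV5 propagate from the inductive hypothesis through every CAS, closing the induction.
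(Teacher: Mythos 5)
Your proposal is correct, and its opening move is exactly the paper's entire argument: the corollary is stated with no separate proof, resting solely on the fact that Lemmas \ref{l-consmodif} and \ref{l-consclean} already claim the post-CAS state is valid (and validity is defined as INV1--5). Where you go beyond the paper is in actually discharging the ``validity is trivial'' remarks inside those two lemma proofs --- the base case, the \emph{flagpos} index computation for the insertion and removal CASes, and the two level-crossing directions (entombing an snode into $in_{l,p}.main$ using INV4 of the old cnode, and resurrecting a tombed snode into the parent cnode using INV5 of the child inode). That bookkeeping is genuinely where INV4 and INV5 live, and the paper never writes it down, so your elaboration is a strict improvement rather than a detour. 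One small completeness note: resurrection also happens inside $toCompressed$ (the $map$/$resurrect$ calls invoked from $clean$ at line \ref{cas_clean}), and $toCompressed$ can likewise return a tombed snode directly into $in.main$; both are level-crossing in the same sense as your cases (i) and (ii) and are handled by the identical INV4/INV5 exchange, so you should fold them into your enumeration rather than attributing the only resurrection to $contractParent$.
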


\begin{proof}[Safety]
We proved at this point that the algorithm is safe - Ctrie is always in a valid (lemma \ref{l-inv123} and corollary \ref{c-inv45}) state consistent with some abstract set. All operations are consistent with the abstract set semantics (lemmas \ref{l-presence}, \ref{l-absence1}, \ref{l-absence2}, \ref{l-absence3} \ref{l-consmodif} and \ref{l-consclean}).
\end{proof}

\begin{theorem}[Linearizability]
Operations $insert$, $lookup$ and $remove$ are linearizable.
\end{theorem}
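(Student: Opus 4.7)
The plan is to exhibit, for each operation, an atomic step inside its execution interval that serves as its linearization point, and then invoke the safety theorem together with the fastening, presence, and absence lemmas to show that the operation appears to take effect instantaneously at that step. Since a single invocation of \emph{insert}, \emph{remove} or \emph{lookup} may internally restart several times (on a failed CAS, or after encountering a null or tombed node), I will first argue that restarts, \emph{clean}, \emph{tombCompress} and \emph{contractParent} never alter the abstract set (lemma \ref{l-maintext-consclean}), so it suffices to linearize the final, outcome-producing iteration.

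For destructive operations that succeed in changing the abstract set, the linearization point is the successful CAS itself. Concretely: for \emph{insert} it is one of the CASes in lines \ref{cas_topinsert}, \ref{cas_insertcn}, \ref{cas_insertsn} or \ref{cas_insertin}; for \emph{remove} it is the CAS in line \ref{cas_remove} (or \ref{cas_topremove}). Lemma \ref{l-maintext-fastening} guarantees that the appropriate \emph{hasKey} relation holds (resp.\ fails) throughout the interval $[t_0,t_1)$ between the read of \verb=in.main= and the CAS, and lemma \ref{l-maintext-consmodif} shows that at $t_1$ the Ctrie transitions to a valid state consistent with $\mathbb{A}\cup\{k\}$ or $\mathbb{A}\setminus\{k\}$, as required. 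Because each CAS is atomic and clearly lies between invocation and response, the operation appears to take effect at a single real-time instant in that interval.

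For \emph{lookup}, and for the NOTFOUND branches of \emph{insert} and \emph{remove}, the linearization point is a read rather than a CAS, chosen according to the outcome. If a positive lookup returns \verb=sn.v= after reading a cnode $cn$ in line \ref{read_lookup} and then $cn.sub(k)=sn$ with $sn.k=k$, the linearization point is the read of $cn$; lemma \ref{l-presence} certifies $hasKey(root,k)$ at that instant. If the outcome is NOTFOUND because the bitmap flag is clear at line \ref{flag_lookup}, \ref{flag_insert} or \ref{flag_remove}, the linearization point is again the read of the witnessing $cn$, justified by lemma \ref{l-absence1}; if NOTFOUND is returned because an snode with a different key was found at line \ref{notcheck_lookup} or \ref{notcheck_remove}, the linearization point is the read of $cn.array(pos)$ in line \ref{read_inlookup}, \ref{read_ininsert} or \ref{read_inremove}, justified by lemma \ref{l-absence2}. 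The degenerate case of a \verb=null= or null-inode root is handled by lemma \ref{l-absence3}, with the linearization point at line \ref{read_toplookup}, \ref{read_topremove} or \ref{read_topinsert}.

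The main obstacle is checking that these designated read and CAS points really do lie strictly between invocation and response despite arbitrary interleaving with concurrent compression work by other threads. The delicate case is \emph{lookup}: we must rule out a scenario in which the read chosen as the linearization point observes a transient state that is inconsistent with every abstract set $\mathbb{A}$ the Ctrie assumes during the lookup's lifetime. Lemma \ref{l-maintext-consclean} is precisely what is needed here, since it shows that \emph{clean}, \emph{tombCompress} and \emph{contractParent} preserve the abstract set, so any read occurring between two consistency-changing CASes witnesses the same $\mathbb{A}$ that held just before that read. Combined with the outcome-specific lemmas above, this places every operation's linearization point within its real-time execution interval, completing the proof.
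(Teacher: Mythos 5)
Your proposal is correct and follows essentially the same route as the paper: it designates the successful consistency-changing CAS as the linearization point for destructive operations (justified by the fastening and consistency-change lemmas), an atomic read as the linearization point for lookups and NOTFOUND outcomes (justified by the presence and absence lemmas), and disposes of restarts, \emph{clean}, \emph{tombCompress} and \emph{contractParent} by noting they never change the abstract set. The only cosmetic difference is that the paper pins the NOTFOUND-with-different-key case to the read of \verb=in.main= rather than the read of the array entry, which does not affect the argument.
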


\begin{proof}[Linearizability]
An operation is linearizable if we can identify its linearization point. The linearization point is a single point in time when the consistency of the Ctrie changes. The CAS instruction itself is linearizable, as well as atomic reads. It is known that a single invocation of a linearizable instruction has a linearization point.

1. We know from lemma \ref{l-consclean} that operation $clean$ does not change the state of the corresponding abstract set. Operation $clean$ is followed by a restart of the operation it was called from and is not preceeded by a consistency change -- all successful writes in the $insert$ and $iinsert$ that change the consistency of the Ctrie result in termination.

CAS in line \ref{cas_topinsert} that succeeds at $t_1$ immediately returns. By lemma \ref{l-consmodif}, $\exists \delta > 0 \forall t, t_1 - \delta < t < t_1$ the Ctrie is consistent with an empty abstract set $\varnothing$, and at $t_1$ it is consistent with $\mathbb{A} = \{k\}$. If this is the first invocation of $insert$, then the CAS is the first and the last write with consistent semantics. If $insert$ has been recursively called, then it has not been preceeded by a consistency change -- no successful CAS instruction in $iinsert$ is followed by a recursive call to the method $insert$. Therefore, it is the linearization point.

CAS in line \ref{cas_insertcn} that succeeds at $t_1$ immediately returns. By lemma \ref{l-consmodif}, $\exists \delta > 0 \forall t, t_1 - \delta < t < t_1$ the Ctrie is consistent with an empty abstract set $\mathbb{A}$ and at $t_1$ it is consistent with $\mathbb{A} \cup \{ k \}$. If this is the first invocation of $iinsert$, then the CAS is the first and the last write with consistent semantics. If $iinsert$ has been recursively called, then it was preceeded by an $insert$ or $iinsert$. We have shown that if its preceeded by a call to $insert$, then there have been no preceeding consistency changes. If it was preceeded by $iinsert$, then there has been no write in the previous $iinsert$ invocation. Therefore, it is the linearization point.

Similar arguments hold for CAS instructions in lines \ref{cas_insertsn} and \ref{cas_insertin}. It follows that if some CAS instruction in the $insert$ invocation is successful, then it is the only successful CAS instruction. Therefore, $insert$ is linearizable.

2. Operation $clean$ is not preceeded by a write that results in a consistency change and does not change the consistency of the Ctrie.

Assume that a check in line \ref{check_toplookup} succeeds. The state of the local variable $r$ does not change $\forall t > t_0$ where $t_0$ is the atomic read in the preceeding line \ref{read_toplookup}. The linearization point is then the read at $t_0$, by lemma \ref{l-absence3}.

Assume that a CAS in line \ref{cas_toplookup} succeeds at $t_1$. By lemma \ref{l-consmodif}, $\exists \delta > 0 \forall t, t_1 - \delta < t < t_1$ the Ctrie is consistent with an empty abstract set $\varnothing$, and at $t_1$ it is consistent with $\varnothing$. Therefore, this write does not result in consistency change and is not preceeded by consistency changes. This write is followed by the restart of the operation.

Assume that a node $m$ is read in line \ref{read_lookup} at $t_0$. By lemma \ref{l-immut}, if $cn.sub(k) = null$ at $t_1$ then $\forall t, cn.sub(k) = null$. By corollary \ref{c-reach}, $cn$ is reachable at $t_0$, so at $t_0$ the relation $hasKey(root, k)$ does not hold. The read at $t_0$ is not preceeded by a consistency changing write and followed by a termination of the $lookup$ so it is a linearization point if the method returns in line \ref{flag_lookup}. By similar reasoning, if the operation returns in lines \ref{check_lookup} or \ref{notcheck_lookup}, the read in line \ref{read_lookup} is the linearization point..

We have identified linearization points for the $lookup$, therefore $lookup$ is linearizable.

3. Operation $clean$ is not preceeded by a write that results in a consistency change and does not change the consistency of the Ctrie.

By lemma \ref{l-consclean} operations $tombCompress$ and $contractParent$ do not cause a consistency change. Furthermore, they are only followed by calls to $tombCompress$ and $contractParent$ and the termination of the operation.

Assume that the check in line \ref{check_topremove} succeeds after the read in line \ref{read_topremove} at time $t_0$. By applying the same reasoning as for $lookup$ above, the read at time $t_0$ is the linearization point.

Assume CAS in line \ref{cas_topremove} succeeds at $t_1$. We apply the same reasoning as for $lookup$ above -- this instruction does not change the consistency of the Ctrie and is followed by a restart of the operation.

Assume that a node $m$ is read in line \ref{read_remove} at $t_0$. By similar reasoning as with $lookup$ above, the read in line \ref{read_remove} is a linearization point if the method returns in either of the lines \ref{flag_remove} or \ref{notcheck_remove}.

Assume that the CAS in line \ref{cas_remove} succeeds at time $t_0$. By lemma \ref{l-consmodif}, $\exists \delta > 0 \forall t, t_1 - \delta < t < t_1$ the Ctrie is consistent with an empty abstract set $\mathbb{A}$ and at $t_1$ it is consistent with $\mathbb{A} \setminus \{ k \}$. This write is not preceeded by consistency changing writes and followed only by $tombCompress$ and $contractParent$ which also do not change consistency. Therefore, it is a linearization point.

We have identified linearization points for the $remove$, therefore $remove$ is linearizable.
\end{proof}

\begin{definition}
Assume that a multiple number of threads are invoking a concurrent operation $op$. The concurrent operation $op$ is \textbf{lock-free} if and only if after a finite number of thread execution steps some thread completes the operation.
\end{definition}

\begin{theorem}[Lock-freedom]
Ctrie operations $insert$, $lookup$ and $remove$ are lock-free.
\end{theorem}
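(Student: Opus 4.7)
The plan is to chain the preceding state-transition lemma with the clean-state progress lemma through a single well-founded measure on Ctrie states. I would first pick the nonnegative integer measure $\phi(\mathbb{S}_{n,t,l,r,d}) = n + t + r + d$ and verify case by case that every successful CAS listed in the state-transition lemma strictly decreases $\phi$. For $cas\_clean$ case one the change is $j - 2 \in \{-2,-1\}$; for case two the change is at most $-(k+j) < 0$. The top-level $cas\_topremove$ and $cas\_toplookup$ drop $\phi$ by $2$. For $cas\_tomb$ the two cases give $\Delta\phi = -1$ and $\Delta\phi \le -k < 0$ respectively. For $cas\_contractnull$ and $cas\_contractsingle$, $\Delta\phi = j - 2$, and since the ``$j$'' produced by a single contraction is at most $1$ (the contracted cnode becomes a single tip in at most one way), this is at most $-1$.

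Next I would show that between two consecutive consistency-changing events $\phi$ is bounded above by its value at the start of that stretch. No non-consistency-changing CAS creates new live inodes or new key nodes: $clean$, $tombCompress$, $contractParent$ only replace a cnode by its compression or weak tombing, which by Lemma~\ref{l-comptomb} and the immutability of snodes can only drop nonlive inodes or resurrect tomb-inodes into existing snodes. Hence only finitely many successful non-consistency-changing CASs can occur between two consistency-changing events, because $\phi$ would otherwise drop below $0$.

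The main argument is then by contradiction. Suppose that from some time $t^\ast$ onward no thread completes any of its operations. Any thread still running takes infinitely many execution steps after $t^\ast$, so by the corollary stating there are finitely many steps between two state changes, infinitely many successful CAS instructions occur after $t^\ast$. By the bound established above, infinitely many of these must be consistency-changing. But by the Linearizability theorem, every consistency-changing CAS is the linearization point of an $insert$ or $remove$ invocation, and inspection of the pseudocode shows the owning thread returns immediately after such a CAS succeeds; hence infinitely many operations complete, contradicting the assumption. For $lookup$, whose linearization point is a read rather than a CAS, the argument is auxiliary: a $lookup$ that fails to terminate must repeatedly hit a nonlive inode and invoke $clean$; each round either performs a successful $cas\_clean$ (which strictly decreases $\phi$) or fails, in which case by the failed-CAS lemma some other state change has occurred, and the global argument above still applies.

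The main obstacle will be the first step: verifying $\phi$-monotonicity across every branch of the state-transition lemma and, in particular, justifying that the ``$j$'' increments in $cas\_contractnull$ and $cas\_contractsingle$ are bounded by $1$ rather than arbitrary. A secondary subtlety is that the measure $\phi$ is well-defined on \emph{reachable} nodes only, so one must confirm (using Lemma~\ref{l-parentreach}) that detached subtries do not sneak back into the count; this is immediate from the definition of $\mathbb{S}_{n,t,l,r,d}$ but worth spelling out to make the bound between consistency changes rigorous.
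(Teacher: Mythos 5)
Your overall strategy is the same as the paper's: bound the number of successful non-consistency-changing CAS instructions via a well-founded measure on the states $\mathbb{S}_{n,t,l,r,d}$, combine this with the finiteness of execution steps between consecutive CAS attempts, and conclude that a consistency-changing CAS (after which the owning operation returns) must eventually succeed. The gap is exactly where you suspected it: the scalar measure $\phi = n+t+r+d$ does not strictly decrease under the transitions of lines \ref{cas_contractnull} and \ref{cas_contractsingle}, because the increment $j$ in the number of single tips is not bounded by $1$. A single tip of length $k+1$ is by definition a $1$-way cnode sitting above a single tip of length $k$, and $r$ counts single tips of \emph{every} length. So when $contractParent$ removes the sole (null-inode) entry of a $1$-way cnode lying at the bottom of a chain of $m$ one-way cnodes, the resulting $0$-way cnode becomes a single tip of length $1$ and every one-way cnode above it in the chain simultaneously becomes a single tip of length $2, 3, \ldots$, so $j = m+1$. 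Nothing in Lemma~\ref{l-statechanges} (which asserts only $j \ge 0$) caps $j$ at $1$, and already for $j \ge 2$ your $\Delta\phi = j-2$ fails to be strictly negative. A depth-dependent bound on $j$ would not rescue any fixed linear combination either.

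The paper avoids this with a lexicographic rather than scalar argument: no transition increases $d$, so $d$ changes only finitely often and then stabilizes; the only transitions that can increase $r$ (lines \ref{cas_contractnull} and \ref{cas_contractsingle}) strictly decrease $d$, so once $d$ has stabilized $r$ is non-increasing and stabilizes in turn; the remaining transitions strictly decrease $n+t$. If you want to keep your structure, replace $\phi$ by the triple $(d,r,n+t)$ ordered lexicographically; the unbounded jump in $r$ is then harmless because it is always paid for by a strict drop in the higher-priority component $d$. The remainder of your argument -- finitely many non-consistency-changing successes between consistency changes, so infinitely many execution steps force a consistency change and hence a completed operation -- is sound and matches the paper's corollary on finite steps between state changes and its clean-state lemma, though invoking the Linearizability theorem is unnecessary: it suffices to observe that every consistency-changing CAS is immediately followed by the return of its operation.
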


The rough idea of the proof is the following. To prove lock-freedom we will first show that there is a finite number of steps between state changes. Then we define a space of possible states and show that there can only be finitely many successful CAS instructions which do not result in a consistency change. We have shown in lemmas \ref{l-consmodif} and \ref{l-consclean} that only CAS instructions in lines \ref{cas_topremove}, \ref{cas_toplookup}, \ref{cas_tomb}, \ref{cas_contractnull} and \ref{cas_contractsingle} do not cause a consistency change. We proceed by introducing additional definitions and prooving the necessary lemmas. In all cases, we assume there has been no state change which is a consistency change, otherwise that would mean that some operation was completed.

\begin{lemma}\label{l-roottomb}
The $root$ is never a tomb-inode.
\end{lemma}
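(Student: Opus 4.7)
The plan is to carry out a CAS-by-CAS case analysis of every instruction in the pseudocode that can write an inode's $main$ field, and to argue that no such write can ever leave $root.main = sn\dagger$. Inspecting the code, the only writes are the CASes at lines \ref{cas_insertcn}, \ref{cas_insertsn}, \ref{cas_insertin}, \ref{cas_remove}, \ref{cas_clean}, \ref{cas_tomb}, \ref{cas_contractnull}, and \ref{cas_contractsingle}, together with the CAS at line \ref{cas_topinsert} which writes to the $root$ reference itself. The CASes at \ref{cas_insertcn}, \ref{cas_insertsn}, \ref{cas_insertin}, \ref{cas_contractnull}, and \ref{cas_contractsingle} write freshly allocated cnodes; the CAS at \ref{cas_remove} writes either a freshly allocated cnode or $null$; and the CAS at \ref{cas_topinsert} wraps a fresh cnode inside a new INode. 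None of these can produce $sn\dagger$, so only the CASes at \ref{cas_tomb} and \ref{cas_clean} remain as candidates to inspect.

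To handle line \ref{cas_tomb}, I would observe that the call to $tombCompress$ in $iremove$ is syntactically guarded by the conjunct $parent \ne null$. Since the top-level operations $insert$, $lookup$, and $remove$ dispatch to their recursive workers on the root with $parent = null$, the guard fails whenever the current inode is the root, so the CAS at \ref{cas_tomb} never fires with $i = root$. The remaining case is the CAS at \ref{cas_clean} inside $clean$, whose new value is $toCompressed(m)$. Examining $toCompressed$, a tombed snode is produced only in its first branch, which requires the input cnode to be one-way and to have a tomb-inode as its sole entry. It therefore suffices to establish, as an auxiliary invariant, that whenever $clean(root)$ reads $root.main$ at line \ref{read_clean} and observes a cnode $m$, $m$ is never a one-way cnode whose single entry is a tomb-inode.

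I would prove this auxiliary invariant by induction over the state-changing CASes, combining Lemma \ref{l-nonlive} (nonlive inodes stay nonlive) and Lemma \ref{l-immut} (cnodes are immutable). The only mechanism by which a child inode $in'$ of root can become a tomb-inode is through a successful $tombCompress(in')$ invoked from an $iremove$ with $parent = root$, and the pseudocode unconditionally pairs such a tombing with a $contractParent(root, in', \ldots)$ call that rewrites $root.main$ by replacing $in'$ with its untombed snode. The main obstacle is the adversarial interleaving argument for this last step: I must rule out that a racing $clean(root)$ observes the intermediate $root.main$ in which $in'$ has been tombed but not yet contracted, and successfully CASes $root.main$ to $sn\dagger$. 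Handling this requires an ordering argument that exploits the CAS semantics and Lemma \ref{l-parentreach}, showing that any racing $clean$ whose CAS at line \ref{cas_clean} succeeds must have observed $root.main$ at a moment when the single entry $in'$ was still live, so the tomb-inode branch of $toCompressed$ is not taken and the write of $sn\dagger$ cannot occur.
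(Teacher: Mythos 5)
Your reduction to the two CAS sites at lines \ref{cas_tomb} and \ref{cas_clean}, and your dispatch of line \ref{cas_tomb} via the $parent \neq null$ guard in $iremove$, coincide exactly with the paper's (very short) proof. The gap is in your third step. You correctly notice what the paper's proof elides: the guard only prevents $clean(null)$, not $clean(root)$ --- $clean(parent)$ is invoked with $parent$ equal to the root inode whenever a level-$W$ child inode is found to be nonlive. But the auxiliary invariant you then propose (that $clean(root)$ never observes a one-way cnode whose single entry is a tomb-inode) describes precisely the configuration produced by a successful $tombCompress$ at level $W$ in the window before the matching $contractParent$ has run, and nothing in the pseudocode excludes a concurrent thread from acting during that window. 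Worse, the ordering argument you sketch to close it --- that a racing $clean$ whose CAS succeeds ``must have observed $root.main$ at a moment when the single entry $in'$ was still live'' --- runs backwards: $clean(root)$ is invoked only because its caller has just observed $in'$ to be nonlive, and by Lemma \ref{l-nonlive} $in'$ stays nonlive forever after, so when $toCompressed$ inspects $cn.array(0)$ it will see a tomb-inode and return $sn\dagger$; the CAS at line \ref{cas_clean} then succeeds unless $contractParent$ happened to win the race, which cannot be guaranteed. So the step you flag as the main obstacle is in fact where the proposal fails.

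For comparison, the paper's own proof consists of the bare assertion that neither $clean$ nor $tombCompress$ is ever called for the root inode because both are preceded by a $parent \neq null$ check. That assertion is sound for $tombCompress$ (the check guards the inode being tombed) but, as your own analysis reveals, not literally true for $clean$ (the check guards only against a $null$ argument, and the root inode is a perfectly legal non-null argument). To actually close the $clean$ case one would have to argue that $toCompressed$ never takes its $sn\dagger$ branch at level $0$, which requires making $clean$ or $toCompressed$ level-aware --- a change to the algorithm rather than a lemma --- or exhibit some other mechanism that excludes the tombed-but-not-yet-contracted window at the root; your proposal supplies neither, so the case remains open.
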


\begin{proof}
A tomb-inode can only be assigned to $in.main$ of some $in$ in $clean$ and $tombCompress$. Neither $clean$ nor $tombCompress$ are ever called for the $in$ in the root of the Ctrie, as they are preceeded by the check if $parent$ is different than $null$.
\end{proof}

\begin{lemma}\label{l-failedcas}
If a CAS that does not cause a consistency change in one of the lines \ref{cas_insertcn}, \ref{cas_insertsn}, \ref{cas_insertin}, \ref{cas_clean}, \ref{cas_tomb}, \ref{cas_contractnull} or \ref{cas_contractsingle} fails at some time $t_1$, then there has been a state change since the time $t_0$ when a respective read in one of the lines \ref{read_insert}, \ref{read_insert}, \ref{read_insert}, \ref{read_clean}, \ref{read_tomb}, \ref{read_contract} or \ref{read_contract} occured. Trivially, the state change preceeded the CAS by a finite number of execution steps.
\end{lemma}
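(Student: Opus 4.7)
The plan is to argue directly from the semantics of CAS together with the immutability results already established. Fix one of the listed CAS/read pairs, say a read of $in.main$ in line \ref{read_insert} at time $t_0$ producing the local value $cn$, followed by a failing CAS on $in.main$ in line \ref{cas_insertcn} at time $t_1 > t_0$. By the semantics of CAS, failure means that the value stored at $in.main$ at time $t_1$ is not equal to $cn$. Hence the field $in.main$ must have been written to at some intermediate time $t_0 < t' \le t_1$.

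Next, I would observe (using the same garbage-collector argument as in the proof of lemma \ref{l-fastening}) that the reference $cn$ itself, having been read at $t_0$, remains a valid distinct reference throughout $[t_0, t_1]$; no freshly allocated node written by any concurrent CAS can coincidentally equal $cn$. Consequently the value actually stored into $in.main$ at $t'$ is some node $cn'$ genuinely different from $cn$. By lemma \ref{l-laterreach}, $in$ was reachable from the root at $t_0$; and since every write to $in.main$ is itself a CAS, the writing thread at $t'$ must have observed $in.main = cn$ and hence $in$ was still reachable at $t'$ (by the same lemma applied to that thread). Therefore the configuration of nodes reachable from the root at $t'$ differs from that at $t_0$ at least in the value of $in.main$, which is the definition of a state change.

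The argument is uniform across the seven listed pairs: in every case the failing CAS is on a field (either $in.main$ or, in the top-level cases feeding into $clean$/$tombCompress$/$contractParent$, the $root$ reference or a parent inode's $main$) that was read at $t_0$, that lemma \ref{l-laterreach} shows reachable at $t_0$, and whose modification therefore changes the reachable configuration. The finiteness clause is immediate, since the intermediate successful write at $t'$ occurs no later than $t_1$, which is itself a finite number of execution steps after $t_0$ by inspection of the pseudocode between the read and the CAS.

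I do not anticipate a hard step; the only mild subtlety is ruling out the degenerate possibility that some concurrent thread writes back the very same reference $cn$ to $in.main$, which would cause the CAS to succeed rather than fail and so does not need to be excluded. The essential content is just (i) CAS failure $\Rightarrow$ intermediate value change, and (ii) a value change in a field of a reachable node constitutes a state change, both of which follow directly from prior lemmas.
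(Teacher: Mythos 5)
Your proposal is correct and follows essentially the same route as the paper's (much terser) proof: a failed CAS means the corresponding field changed since the read, and a change to a field of a reachable node changes the reachable configuration, which is a state change by definition. The extra machinery you invoke (the garbage-collector argument and lemma \ref{l-laterreach}) only elaborates details the paper leaves implicit.
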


\begin{proof}
The configuration of nodes reachable from the root has changed, since the corresponding $in.main$ has changed. Therefore, the state has changed by definition.
\end{proof}

\begin{lemma}\label{l-betweencas}
%Assuming there is no state change $\forall t, t_0 < t < t_1$.
In each operation there is a finite number of execution steps between consecutive CAS instructions.
\end{lemma}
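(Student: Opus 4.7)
The plan is to catalogue the procedures that may appear between two successive CAS instructions of a single operation and bound the work contributed by each. The procedures in play are \textit{insert}, \textit{remove}, \textit{lookup}, \textit{iinsert}, \textit{iremove}, \textit{ilookup}, \textit{clean}, \textit{tombCompress} and \textit{contractParent}, together with the pure helpers \textit{flagpos}, \textit{toCompressed} and \textit{toWeakTombed}. My aim is to show that every sequence of non-CAS steps executed by one thread between one CAS and the next is of bounded length, by decomposing such a sequence into (i) a finite sequence of constant-work straight-line blocks and (ii) at most one descent through a finite prefix of the Ctrie.

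First I would treat the straight-line work. Each body between a \verb=READ= and the first reachable CAS inside any one procedure performs only a constant number of field accesses, bitmap operations, pattern matches and allocations. The only potentially non-constant pieces are the loops inside \textit{toCompressed} and \textit{toWeakTombed} and the \verb=flagpos= computation; each of these iterates over the \verb=array= or \verb=bmp= of a single cnode, which by invariant INV2 and the definition of the branching width has length at most $2^W$. So every call to a helper contributes at most $O(2^W)$ steps, which is a constant in the branching parameter.

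Next I would handle recursion. The recursive calls fall into two categories. The \emph{restart} recursions --- \textit{insert} and \textit{remove} calling themselves after a failed CAS, and \textit{tombCompress} and \textit{contractParent} retrying on CAS failure --- each contain a CAS attempt on every iteration, so they cannot lengthen the gap between two consecutive CAS instructions of the operation. The remaining recursion is the downward recursion of \textit{iinsert}/\textit{iremove}/\textit{ilookup} on encountering an \verb=INode=; here the subcall's first action is a \verb=READ(sin.main)=, then constant work, then possibly one more level of descent. I would bound the length of such a descent by the depth of the sub-Ctrie reachable from the inode at which the descent began: by lemma~\ref{l-immut} the cnodes and snodes read along the descent are immutable, by corollary~\ref{c-reach} the inodes touched are either live when read or were read before being made unreachable, and in either case they were allocated at some earlier finite time. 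Because only finitely many nodes have ever been allocated at the moment the descent starts, the path followed by the thread terminates at an snode, a null entry, or a CAS attempt after at most finitely many levels.

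Combining the two parts: between any two consecutive CAS instructions issued by a single operation, the thread executes at most a bounded number of constant-work blocks along a path of depth at most equal to the current finite depth of the Ctrie below the point where the first CAS took place, hence a finite number of execution steps in total. The main obstacle I anticipate is the middle step: formally ruling out an infinite descent in the presence of concurrent writers. The cleanest way I see to discharge it is the ``allocated in finitely many prior steps'' argument above, relying on lemmas~\ref{l-immut} and \ref{l-nonlive} so that whatever structure the thread observes on the way down is a snapshot of nodes that existed at fixed earlier times and therefore has finite depth, independent of what concurrent operations do afterwards.
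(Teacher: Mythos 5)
Your decomposition into constant-work straight-line blocks, retry recursions that each contain a CAS, and a single downward descent matches the structure of the paper's proof, and the first two parts are essentially right (with one small imprecision: when \emph{ilookup}/\emph{iinsert} return RESTART with $parent = null$, that iteration contains no CAS at all --- the paper needs lemma \ref{l-roottomb} to argue that the \emph{next} top-level iteration necessarily reaches the CAS on the root, so the gap is still finite). The genuine gap is in the middle step, the very one you flagged as the obstacle. Your justification --- that ``whatever structure the thread observes on the way down is a snapshot of nodes that existed at fixed earlier times'' and that ``only finitely many nodes have ever been allocated at the moment the descent starts'' --- is false. Immutability (lemma \ref{l-immut}) applies to cnodes and snodes, not to the $main$ field of a \emph{live} inode: each level of the descent performs a fresh \texttt{READ(in.main)} at a strictly later time, and a concurrent insert executing the CAS in line \ref{cas_insertin} can splice a newly allocated inode (and a cnode beneath it) into the very branch the reader is about to follow. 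The reader can therefore visit nodes allocated after its descent began, so counting the nodes that existed when the descent started bounds nothing.

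The conclusion is still true, but it needs a different justification. The paper's route: the only CAS instructions that increase the depth of the trie (lines \ref{cas_topinsert} and \ref{cas_insertin}) are consistency changes, and the whole lock-freedom argument is carried out under the standing assumption that no consistency change occurs (otherwise some operation has completed and there is nothing left to prove); hence the depth is fixed over the interval in question and the descent is bounded by it. Alternatively, one can note that the level parameter increases by $W$ at every recursive call and is bounded above by the finite length of a hashcode, which bounds the descent unconditionally. Either of these closes the gap; the ``snapshot'' argument does not.
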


\begin{proof}
The $ilookup$ and $iinsert$ operations have a finite number of executions steps. There are no loops in the pseudocode for $ilookup$ in $iinsert$, the recursive calls to them occur on the lower level of the trie and the trie depth is bound -- no non-consistency changing CAS increases the depth of the trie.

The $lookup$ operation is restarted if and only if there has been a CAS in line \ref{cas_toplookup} or if $clean$ (which contains a CAS) is called in $ilookup$. If $clean$ was not called in $ilookup$ after the check that the parent is not $null$ at $t_0$, then $root$ was $in$ such that $in.main = null$ at $t_0$ (it is not tombed by lemma \ref{l-roottomb}). Assuming there has been no state change, the CAS will occur in the next recursive call to $lookup$.

The $insert$ operation is restarted if and only if there has been a CAS in line \ref{cas_topinsert} or if $clean$ (which contains a CAS) is called in $iinsert$. If $clean$ was not called in $iinsert$ after the check that the parent is not $null$ at $t_0$, then $root$ was $in$ such that $in.main = null$ at $t_0$. Assuming no state change, a CAS will occur in the next recursive call to $insert$.

The $insert$ operation can also be restarted due to a preceeding failed CAS in lines \ref{cas_insertcn}, \ref{cas_insertsn} or \ref{cas_insertin}. By lemma \ref{l-failedcas}, there must have been a state change in this case.

In $iremove$, calls to $tombCompress$ and $contractParent$ contain no loops, but are recursive. In case they restart themselves, a CAS is invoked at least once. Between these CAS instructions there is a finite number of execution steps.

A similar analysis as for $lookup$ above can be applied to the first phase of $remove$ which consists of all the execution steps preceeding a successful CAS in line \ref{cas_remove}. The number of times $tombCompress$ and $contractParent$ from the $iremove$ in the cleanup phase is bound by the depth of the trie and there is a finite number of execution steps between them. Once the root is reached, $remove$ completes.

Therefore, all operations have a finite number of executions steps between consecutive CAS instructions, assuming that the state has not changed since the last CAS instruction.
\end{proof}

\begin{corollary}\label{c-finexec}
The consequence of lemmas \ref{l-betweencas} and \ref{l-failedcas} is that there is a finite number of execution steps between two state changes. At any point during the execution of the operation we know that the next CAS instruction is due in a finite number of execution steps (lemma \ref{l-betweencas}). From lemmas \ref{l-consmodif} and \ref{l-consclean} we know that if a CAS succeeds, it changes the state. From lemma \ref{l-failedcas} we know that if the CAS fails, the state was changed by someone else.
\end{corollary}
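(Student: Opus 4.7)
The plan is to use Lemma \ref{l-betweencas} to bound how far any given thread is from its next CAS attempt, and then combine Lemma \ref{l-failedcas} with Lemmas \ref{l-consmodif} and \ref{l-consclean} to argue that whatever the outcome of that CAS is, it must coincide with a state change. First I would fix an arbitrary time $t$ during the execution of some thread $T$. Lemma \ref{l-betweencas} tells us that $T$'s next CAS instruction occurs after finitely many further execution steps of $T$, independent of what other threads are doing: the local control flow between consecutive CAS instructions (recursion through the trie of bounded depth, pattern matches, the fixed-size traversals inside \verb=toCompressed= and \verb=toWeakTombed=, the top-level restart logic in \verb=insert=, \verb=lookup= and \verb=remove=) is bounded in length.

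Next I would case-split on the CAS attempted by $T$ at the resulting time $t_1$. If the CAS succeeds, then by Lemmas \ref{l-consmodif} and \ref{l-consclean} it alters the $main$ field of some reachable inode (or the root reference itself), so the configuration of nodes reachable from the root changes, which is a state change by definition; hence a state change has occurred within finitely many steps of $t$. If instead the CAS fails, then the expected value read at some earlier time $t_0$ differs from the current value of the target field at $t_1$. Since every $main$ field is only ever mutated by a successful CAS, some other thread must have performed a successful CAS in $(t_0, t_1)$, and by Lemma \ref{l-failedcas} this is precisely a state change — again within finitely many steps of $t$.

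The main obstacle I anticipate is being precise about what \emph{"finite number of execution steps between two state changes"} means under arbitrary thread interleaving: a state change is a global event, while the bound of Lemma \ref{l-betweencas} is a per-thread bound on local control flow. I would resolve this by observing that at any moment of the whole-system execution some thread is taking a step, and that thread is either finitely many steps from its next CAS attempt (in which case the case analysis above applies) or has just executed a CAS (in which case the immediately preceding event is either a state change already or, by the failure case above, implies one has just happened). A secondary subtlety to verify is that the helper routines \verb=tombCompress= and \verb=contractParent= cannot spin forever without a CAS: each recurses at most once per level of the Ctrie and the depth is bounded by the hashcode length (no non-consistency-changing CAS ever increases depth), so Lemma \ref{l-betweencas} does cover them uniformly and the overall per-thread bound is genuine.
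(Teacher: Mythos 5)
Your proposal is correct and follows essentially the same route as the paper: Lemma \ref{l-betweencas} bounds the steps to the next CAS, Lemmas \ref{l-consmodif} and \ref{l-consclean} give that a successful CAS is a state change, and Lemma \ref{l-failedcas} gives that a failed CAS witnesses a state change by another thread. Your extra care about lifting the per-thread bound to the interleaved execution and about \verb=tombCompress=/\verb=contractParent= only makes explicit what the paper leaves implicit.
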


We remark at this point that corollary \ref{c-finexec} does not imply that there is a finite number of execution steps between two operations. A state change is not necessarily a consistency change.

\begin{definition}
Let there at some time $t_0$ be a $1$-way cnode $cn$ such that $cn.array(0) = in$ and $in.main = sn\dagger$ where $sn\dagger$ is tombed or, alternatively, $cn$ is a $0$-way node. We call such $cn$ a \textbf{single tip of length $1$}. Let there at some time $t_0$ be a 1-way cnode $cn$ such that $cn.array(0) = cn'$ and $cn'$ is a single tip of length $k$. We call such $cn$ a \textbf{single tip of length $k + 1$}.
\end{definition}

\begin{definition}
The \textbf{total path length} $d$ is the sum of the lengths of all the paths from the root to some leaf.
\end{definition}

\begin{definition}
Assume the Ctrie is in a valid state. Let $n$ be the number of reachable null-inodes in this state, $t$ the number of reachable tomb-inodes, $l$ the number of live inodes, $r$ the number of single tips of any length and $d$ the total path length. We denote the state of the Ctrie as $\mathbb{S}_{n,t,l,r,d}$. We call the state $\mathbb{S}_{0,0,l,r,d}$ the \textbf{clean} state.
\end{definition}

\begin{lemma}\label{l-statechanges}
Observe all CAS instructions which never cause a consistency change and assume they are successful. Assuming there was no state change since reading $in$ prior to calling $clean$, the CAS in line \ref{cas_clean} changes the state of the Ctrie from the state $\mathbb{S}_{n,t,l,r,d}$ to either $\mathbb{S}_{n+j,t,l,r-1,d-1}$ where $r > 0$, $j \in \{ 0, 1 \}$ and $d \ge 1$, or to $\mathbb{S}_{n-k,t-j,l,r,d' \le d}$ where $k \ge 0$, $j \ge 0$, $k + j > 0$, $n \ge k$ and $t \ge j$.

Furthermore, the CAS in line \ref{cas_topremove} changes the state of the Ctrie from $\mathbb{S}_{1,0,0,0,1}$ to $\mathbb{S}_{0,0,0,0,0}$. The CAS in line \ref{cas_toplookup} changes the state from $\mathbb{S}_{1,0,0,0,1}$ to $\mathbb{S}_{0,0,0,0,0}$. The CAS in line \ref{cas_tomb} changes the state from $\mathbb{S}_{n,t,l,r,d}$ to either $\mathbb{S}_{n+j,t,l,r-1,d-j}$ where $r > 0$, $j \in \{ 0, 1 \}$ and $d \ge j$, or to $\mathbb{S}_{n-k,t,l,r,d' \le d}$ where $k > 0$ and $n \ge k$. The CAS in line \ref{cas_contractnull} changes the state from $\mathbb{S}_{n,t,l,r,d}$ to $\mathbb{S}_{n-1,t,l,r+j,d-1}$ where $n > 0$ and $j \ge 0$. The CAS in line \ref{cas_contractsingle} changes the state from $\mathbb{S}_{n,t,l,r}$ to $\mathbb{S}_{n,t-1,l,r+j,d-1}$ where $j \ge 0$.
\end{lemma}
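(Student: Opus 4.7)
The plan is a case-by-case analysis, taking each CAS in turn and comparing the subtree rooted at the affected inode immediately before and immediately after the successful CAS. The key tools are lemma \ref{l-comptomb}, which tells us exactly what $toCompressed$ and $toWeakTombed$ return, lemma \ref{l-immut}, which guarantees that cnodes and snodes are not rewritten under us so the ``before'' and ``after'' snapshots are well defined, and lemmas \ref{l-nonlive} and \ref{l-parentreach}, which together confine the effect of the CAS to the subtree below the modified inode, so every count change is a local computation.

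I would dispatch the four simple cases first. The CASes in lines \ref{cas_topremove} and \ref{cas_toplookup} can only execute when the precondition in line \ref{check_topremove} or \ref{check_toplookup} has observed the root to be a null-inode, which forces the state to be exactly $\mathbb{S}_{1,0,0,0,1}$; the CAS then replaces $root$ with $null$, leaving $\mathbb{S}_{0,0,0,0,0}$. The CAS in line \ref{cas_contractnull} rewrites the parent cnode into a copy with one null-inode slot deleted, so one reachable null-inode and one incoming edge disappear, and the shrunken cnode may itself become a new single tip above. The CAS in line \ref{cas_contractsingle} rewrites the parent's slot from a tomb-inode pointer to the untombed snode copy, so one tomb-inode and its incoming edge disappear, and again a new single tip may form above. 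In both contraction cases the changes to $n$, $t$, $r$, $d$ can be read off directly and the stated bounds follow.

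The substantive work is the analysis of the CAS in lines \ref{cas_clean} and \ref{cas_tomb}. For each I would split on which branch of $toCompressed$ (respectively $toWeakTombed$) produced the replacement. When the branch returns a tombed snode or $null$, the $1$-way chain that forced this outcome is exactly a single tip in the sense of the definition, and collapsing it consumes one single tip while removing one edge, which yields the first disjunct of the stated bound, with $j \in \{0,1\}$ tracking whether the freshly nonlive replacement introduces a new null-inode visible to the parent. When the branch returns a filtered/resurrected cnode, no new single tip can form because that branch only fires when more than one live child remains; some number $k$ of null-inodes are pruned by $filtered$ and some number $j$ of tomb-inodes are resurrected by $map$ applied to $resurrect$ (with $j = 0$ forced in the $toWeakTombed$ case), giving the second disjunct. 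In every subcase the total path length $d$ can only decrease, because each structural change either removes an edge or replaces a subtree by a strictly shallower one.

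The main obstacle is the joint bookkeeping of the global aggregates $r$ and $d$ with the local counts $n$ and $t$: collapsing a $1$-way chain can simultaneously destroy a single tip and shorten several paths at once. My intended technique is to localise each CAS by showing that the collapsed tip is entirely contained in the subtree below the modified inode (using lemma \ref{l-parentreach}), so that the change in $d$ is the sum over the leaves reachable through that inode of old depth minus new depth, and this quantity is at most the number of edges actually removed by the replacement. Combined with the explicit count of null-inodes and tomb-inodes eliminated at the CAS site, this yields each of the stated transitions.
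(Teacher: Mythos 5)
Your proposal is correct and follows essentially the same route as the paper: a per-CAS case analysis driven by Lemma \ref{l-comptomb}'s characterization of $toCompressed$ and $toWeakTombed$, with direct local accounting of the changes to $n$, $t$, $r$ and $d$ (the paper works out only the line-\ref{cas_clean} case in detail and dismisses the rest as trivial or analogous, whereas you treat all cases uniformly). The additional localization step via Lemma \ref{l-parentreach} that you use to justify the $d$ bookkeeping is implicit in the paper's argument, not a departure from it.
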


\begin{proof}
We have shown in lemma \ref{l-consclean} that the CAS in line \ref{cas_clean} does not change the number of live nodes. In lemma \ref{l-comptomb} we have shown that $toCompressed$ returns a compression of the cnode $cn$ which replaces $cn$ at $in.main$ at time $t$.

Provided there is at least one single tip immediately before time $t$, the compression of the cnode $cn$ can omit at most one single tip, decreasing $r$ by one. Omitting a single tip will also decrease $d$ by one. If it is removing a single tip which is 1-way cnode, it will create a new null-inode in the trie, hence the $n + j$.

Provided there are at least $k$ null-inodes and $j$ tomb-inodes in the trie, compression may omit up to $k$ null-inodes and up to $j$ tomb-inodes. Value $d$ may decrease in the new state. If both $k$ and $j$ are $0$, then the state must have changed since a nonlive inode was detected prior to calling $clean$.

This proves the statement for CAS in line \ref{cas_clean}, the rest are either trivial or can be proved by applying a similar reasoning.
\end{proof}

\begin{lemma}\label{l-cleanstate}
If the Ctrie is in a clean state and $n$ threads are executing operations on it, then some thread will execute a successful CAS resulting in a consistency change after a finite number of execution steps.
\end{lemma}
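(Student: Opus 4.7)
The plan is to combine three ingredients: a structural fact about which CASes can be enabled in a clean state, a well-founded measure that bounds the total number of non-consistency-changing successful CASes, and the finite-step-between-CAS guarantee from lemma \ref{l-betweencas} together with the failure-implies-progress rule from lemma \ref{l-failedcas}.

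First, I would establish that no operation begun at or after $t_0$ can invoke $clean$, $tombCompress$, or $contractParent$ before having already performed a consistency-changing CAS of its own. The cleanup calls inside $ilookup$, $iinsert$, and $iremove$ are guarded by encountering a nonlive child of the current inode, but in a clean state $\mathbb{S}_{0,0,l,r,d}$ no reachable inode is nonlive. Similarly, the $tombCompress$/$contractParent$ call at the tail of $iremove$ runs only after a successful CAS in line \ref{cas_remove}, which is a consistency change. The root-reset CASes in lines \ref{cas_topremove} and \ref{cas_toplookup} require the root to be a null-inode, which is excluded in a clean state with $l>0$; and when $l=0$ the root is plain $null$, so these CASes do not fire either.

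Second, I would bound the contribution of threads that were already mid-operation when the clean state was reached -- threads holding a stale reference to an inode that was nonlive before cleanup elsewhere, or threads executing the cleanup tail of an earlier $remove$. Using lemma \ref{l-statechanges}, I would exhibit a well-founded measure $\Phi$ on configurations that strictly decreases with every successful non-consistency-changing CAS (those in lines \ref{cas_clean}, \ref{cas_tomb}, \ref{cas_contractnull}, \ref{cas_contractsingle}, \ref{cas_topremove}, \ref{cas_toplookup}). A natural candidate is a weighted or lexicographic combination of $n + t$, $r$ and $d$ that offsets the $+j$ terms appearing in the transitions $\mathbb{S}_{n,t,l,r,d} \to \mathbb{S}_{n+j,t,l,r-1,d-1}$ and $\mathbb{S}_{n,t,l,r,d} \to \mathbb{S}_{n-1,t,l,r+j,d-1}$ against the mandatory $-1$ in the $d$ or $r$ component. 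Since $\Phi$ is nonnegative and its initial value is finite, only finitely many successful non-consistency-changing CASes can occur after $t_0$.

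Third, I would close as follows. By lemma \ref{l-betweencas} every still-active thread is a bounded number of its own steps from its next CAS attempt; by lemma \ref{l-failedcas} every failed CAS is preceded by another thread's successful state-changing CAS. Combined with the bound from the second step, within a finite number of total execution steps some thread must perform a successful CAS that is \emph{not} one of the non-consistency-changing ones. By the first step and inspection of the pseudocode, the only remaining successful CASes in the clean-state regime are those in lines \ref{cas_topinsert}, \ref{cas_insertcn}, \ref{cas_insertsn}, \ref{cas_insertin}, and \ref{cas_remove}, each of which is a consistency change by lemma \ref{l-consmodif}.

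The main obstacle is calibrating the potential $\Phi$ so that it really decreases on the $\ref{cas_contractnull}$ and $\ref{cas_contractsingle}$ cases, where lemma \ref{l-statechanges} permits an arbitrary nonnegative jump $j$ in $r$: removing one null-inode from a cnode can, via the recursive definition of ``single tip'', simultaneously turn a whole chain of one-way ancestor cnodes into single tips. Handling this requires noting that the jump in $r$ is bounded by the simultaneous drop in $d$ (each new single tip along the chain corresponds to exactly one level of depth on that path), which lets a sufficiently heavy $d$-coefficient, or a lexicographic treatment with $n+t$ (then $d$, then $r$), dominate the increase.
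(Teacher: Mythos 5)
Your first and third steps coincide with the paper's proof of this lemma: in a clean state no freshly (re)started operation reaches a cleanup call, so once the residual threads settle, the first CAS to be attempted is one of the consistency-changing ones, and being the first CAS after a period with no state change it succeeds. Where you diverge is the middle step. The paper disposes of the residual threads directly and much more cheaply: since the state is clean, the nonlive inode that each thread sitting in $clean$ or in the cleanup phase of $remove$ previously observed must have been removed by a write to the enclosing $in.main$, so that thread's next CAS fails and it invokes no further CAS before restarting or terminating. You instead import the potential-function argument that the paper reserves for the main Lock-freedom theorem, bounding the total number of successful non-consistency-changing CASes after $t_0$. That is a heavier route, and it partly duplicates the theorem's proof, but it is arguably more robust because it does not lean on the claim that every residual cleanup CAS fails outright.

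The concrete problem is that the measure as you calibrate it does not decrease. A lexicographic order with $n+t$ leading is broken by the first branch of the line~\ref{cas_clean} (and line~\ref{cas_tomb}) transition, $\mathbb{S}_{n,t,l,r,d}\to\mathbb{S}_{n+j,t,l,r-1,d-1}$, which can \emph{increase} $n+t$. The weighted-sum variant fails on lines~\ref{cas_contractnull} and~\ref{cas_contractsingle}: your claim that the jump $j$ in $r$ is bounded by the simultaneous drop in $d$ is false, since $d$ drops by exactly $1$ there while $j$ can equal the length of an entire chain of one-way ancestor cnodes, all of which become single tips at once the moment their common descendant does. The repair is to put $d$ first: no transition in lemma~\ref{l-statechanges} increases $d$, every transition that fixes $d$ strictly decreases $r$ or $n+t$ without increasing the other, and every transition that increases $r$ strictly decreases $d$ --- so the lexicographic pair $(d,\ n+t+r)$ works, which is exactly the exhaustion order ($d$, then $r$, then $n$) used in the paper's Lock-freedom proof.
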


\begin{proof}
Assume that there are $m \le n$ threads in the $clean$ operation or in the cleanup phase of the $remove$. Since the state is clean, there are no nonlive inodes, so it is trivial to show that none of these $m$ threads will invoke a CAS after their next CAS (which will be unsuccessful). This means that these $m$ threads will either complete in a finite number of steps or restart the original operation after a finite number of steps. From this point on, as shown in lemma \ref{l-betweencas}, the first CAS will be executed after a finite number of steps. Since the state is clean, there are no more nonlive inodes, so $clean$ will not be invoked. Therefore, the first CAS will result in a consistency change. Since it is the first CAS, it will also be successful.
\end{proof}

\begin{proof}[Lock-freedom]
Assume we start in some state $\mathbb{S}_{n,t,l,r,d}$. We prove there are a finite number of state changes before reaching a clean state by contradiction. Assume there is an infinite sequence of state changes. We now use results from lemma \ref{l-statechanges}. In this infinite sequence, a state change which decreases $d$ may occur only finitely many times, since no state change increases $d$. After this finitely many state changes $d = 0$ so the sequence can contain no more state changes which decrease $d$. We apply the same reasoning to $r$ -- no available state change can increase the value of $r$, so after finitely many steps $r = 0$. At this point, we can only apply state changes which decrease $n$. After finitely many state changes $n = 0$. Therefore, the assumption is wrong -- such an infinite sequence of state changes does not exist.

From corollary \ref{c-finexec} there are a finite number of execution steps between state changes, so there are a finite number of execution steps before reaching a clean state. By lemma \ref{l-cleanstate}, if the Ctrie is in a clean state, then there are an additional finite number of steps until a consistency change occurs.

This proves that some operation completes after a finite number of steps, so all Ctrie operations are lock-free.
\end{proof}

\begin{definition}
A \textbf{tip} is a cnode $cn$ such that it contains at most one reference to a tomb-inode or an snode. It may contain zero or more null-inodes, but no cnodes. If the first ancestor cnode is $k$-way where $k > 1$, then the tip has length 1.
\end{definition}

The compression operations are designed so that they collect as many null-inodes as possible, and to prevent that there are tips. We now prove that they ensure that there are no tips in the trie.

\begin{theorem}[Compactness]\label{t-space}
Assume all $remove$ operations have completed execution. Then there is at most $1$ tip of length 1 in the trie.
\end{theorem}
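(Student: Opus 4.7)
My plan is to prove this by contradiction. Suppose that after all remove operations complete, the trie contains either (i) two or more tips, or (ii) a tip of length greater than $1$. I will show that every tip not directly below the root inode must have been compressed away during the cleanup performed by the last remove operation affecting its subtree, leaving at most the root-level tip.

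The central structural claim to establish is: after all remove operations complete, for every non-root inode $in$, the cnode $in.main$ is not a tip. I would argue this by tracing back to the last remove operation whose execution left the subtree rooted at $in$ in its final configuration. After the successful CAS in line \ref{cas_remove} at the deepest level, each \textit{iremove} frame on the return path invokes \textit{tombCompress} on its own inode (the guard $parent \neq null$ is satisfied by assumption, since $in$ is non-root). The key observation is that \textit{toWeakTombed} filters out null-inodes and converts lone singletons to tombed snodes; so if $in.main$ were a tip cnode, its weak tombing would be either $null$ or a tombed snode, strictly different from $in.main$. The retry loop inside \textit{tombCompress} ensures either a successful CAS or the observation that some other thread has already changed $in.main$ to a non-cnode (performing the compression for us). When \textit{tombCompress} returns $\top$, \textit{contractParent} is invoked on the parent of $in$; its recursive retry logic, together with its early return when the parent has already been modified independently, guarantees that $in$'s parent cnode is updated so that $in$ is no longer referenced (either by removal, in the null case, or by resurrection, in the singleton case). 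In either outcome the tip ceases to exist.

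From this structural claim it follows that the only cnode in the trie that can be a tip is the one directly below the root inode, since the root inode has no parent and the top-level guards in \textit{iremove}, \textit{iinsert}, and \textit{ilookup} prevent \textit{contractParent} from ever being invoked on it. Hence there is at most one tip. For the length bound: a tip of length $k > 1$ would require a $1$-way cnode whose sole child is an inode $in'$ whose main is itself a tip. But $in'$ is a non-root inode with a tip main, contradicting the structural claim. Thus every surviving tip has length exactly $1$.

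The main obstacle I expect is justifying that the compression chain propagates correctly even in the optimization in \textit{iremove} where a $1$-way cnode is directly replaced by $null$ (the line \texttt{if cn.array.length = 1 ncn = null}). At this level, \textit{tombCompress} trivially returns $\bot$ because $i.main$ is no longer a cnode, so \textit{contractParent} is not invoked from this frame. The resolution is that the resulting null-inode is cleaned up by the \textit{tombCompress} call executed in the enclosing \textit{iremove} frame on the return path: \textit{toWeakTombed} at that level filters out precisely such null-inodes from its parent cnode, thereby subsuming the cleanup. Making this argument airtight will require carefully pairing each null-inode creation with the higher-level \textit{tombCompress}/\textit{contractParent} that eliminates it, and appealing to lemma \ref{l-nonlive} to ensure the null-inode does not spontaneously become live between creation and elimination.
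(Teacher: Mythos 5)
Your proposal is correct and takes essentially the same route as the paper's proof: every tip arising at a non-root inode is eliminated by the $tombCompress$/$contractParent$ chain invoked on the return path of $iremove$ (with the cleanup possibly cascading one level up at a time), leaving the inode directly below the root as the sole position where a tip can survive. You additionally make explicit two points the paper leaves implicit --- the reduction of the length-$>1$ case to the nested-tip contradiction, and the handling of the \texttt{ncn = null} shortcut via the enclosing frame's $toWeakTombed$ --- which strengthens rather than changes the argument.
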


\begin{proof}[Compactness]
Assume that at some inode $in$ in the trie some $remove$ operation created a tip $cn$ at time $t_0$ by invoking a CAS instruction in line \ref{cas_remove}. The $remove$ operation then repeatedly tries to replace the $cn$ with a new node $mwt$ such that $mwt$ is a weak tombing of $cn$. It stops in 2 cases.

If at some time $t_1 > t_0$ the operation detects that $in$ is not a tip, it will stop. If $in$ is not a tip, then it can safely abort the compression operation, since only some other $remove$ operation performing a successful CAS in line \ref{cas_remove} at some time $t_2 > t_1$ can create a tip, and that $remove$ operation will invoke the compression again.

If at some time $t_1 > t_0$ the CAS in line \ref{cas_tomb} succeeds, then $in$ will become nonlive -- no longer a tip. Therefore, by lemma \ref{l-nonlive} $in$ does not change the value of $in.main$ $\forall t > t_0$, and all modifications to the values in that branch must occur at the first inode ancestor of $in$ -- its $parent$. Method $contractParent$ is called next in this case. If it finds that $bmp \odot flag = 0$ or $sub \neq in$, then $in$ is no longer reachable, so there are no more tips created by the current $remove$ operation at $in$ -- some other $remove$ operation may create a tip after $t_1$ at the same level and prefix as $in$, but in this case subsequent operations will be responsible for removing that tip. If $in$ is reachable, a null-inode is removed from the cnode below the $parent$ (line \ref{cas_contractnull}) or a tomb-inode is resurrected into an snode (line \ref{cas_contractsingle}). Notice that this can create a tip one level higher, but the whole procedure is repeated one level above for this reason.

The only case where we do not invoke $tombCompress$ is the root, where $parent = null$. The root can, therefore, contain at most $1$ tip of length 1.
\end{proof}

\end{document}